\newcommand{\ldbrack}{\llbracket}
\newcommand{\rdbrack}{\rrbracket}
\newcommand{\confof}[1]{c(#1)}
\newcommand{\closure}[2]{\Pi_{#1}(#2)}
\newcommand{\XX}{\mathcal X}
\newcommand{\YY}{\mathcal Y}
\newcommand{\ZZ}{\mathcal Z}
\newcommand{\pair}[2]{\tuple {#1, #2}}
\newcommand{\nowof}[1]{\text{\sc now}(#1)}
\newcommand{\orbit}[2]{\text{\sc orbit}_{#1}(#2)} 
\newcommand{\Conf}[1]{\semd {#1}}
\newcommand{\clockval}[1]{\mu(\X)}
\newcommand{\macval}[1]{\text{\sc Val}(#1)}
\newcommand{\assigntoval}[2]{{#2} - {#1}}
\newcommand{\now}{t_0} 
\newcommand{\para}[1]{\subparagraph{#1.}}
\newcommand{\succe}[2]{\text{\sc succ}_{#1}(#2)}
\newcommand{\HyperAckermann}{{\sc Hyper\-Acker\-mann}}
\newcommand{\ignore}[1]{}
\newcommand{\goesto}[1]{\xrightarrow{#1}}
\newcommand{\lang}[1]{L(#1)}
\newcommand{\langa}[2]{L_{#1}(#2)}
\newcommand{\langtsa}[2]{\langa{#1}{#2}} 
\newcommand{\langts}[1]{\lang{#1}}         
\newcommand{\true}{\mathbf{true}}
\newcommand{\false}{\mathbf{false}}
\newcommand{\N}{\mathbb N}
\newcommand{\Z}{\mathbb Z}
\newcommand{\Rpos}{{\mathbb R}_{\geq 0}}
\newcommand{\R}{\Rpos}
\newcommand{\Rreal}{{\mathbb R}}
\newcommand{\I}{\mathbb I}
\newcommand{\X}{\mathtt X}
\newcommand{\Y}{\mathtt Y}
\newcommand{\x}{\mathtt x}
\newcommand{\y}{\mathtt y}
\renewcommand{\L}{\mathtt L}
\renewcommand{\I}{\mathtt I}
\newcommand{\F}{\mathtt F} 
\newcommand{\op}{\mathtt {op}}
\newcommand{\automaton}{\tuple{\Sigma, \L, \X, \I, \F, \Delta}}
\newcommand{\extend}[3]{#3[#1 \mapsto #2]}
\newcommand{\timedwords}[1]{\mathbb{T}(#1)}
\newcommand{\timedwordsafter}[2]{\mathbb{T}_{\geq{#2}}(#1)}
\newcommand{\transition}[5]{\tuple{#1, #2, #3, #4, #5}}
\newcommand{\regions}[2]{\textsf{Reg}(#1,#2)}
\newcommand{\reg}[2]{\regions{#1}{#2}} 
\newcommand{\fract}[1]{\mathsf{fract}({#1})}
\newcommand{\DTA}{\textsf{\sc dta}\xspace}
\newcommand{\mDTA}[1]{\ensuremath{{\textsf{\sc dta}_{\textnormal{\_},#1}}}\xspace}
\newcommand{\kDTA}[1]{\ensuremath{{\textsf{\sc dta}_{#1}}}\xspace}
\newcommand{\kmDTA}[2]{\ensuremath{{\textsf{\sc dta}_{#1,#2}}}\xspace}
\newcommand{\NTA}{\textsf{\sc nta}\xspace}
\newcommand{\mNTA}[1]{\ensuremath{{\textsf{\sc nta}_{\textnormal{\_},#1}}}\xspace}
\newcommand{\kNTA}[1]{\ensuremath{{\textsf{\sc nta}_{#1}}}\xspace}
\newcommand{\kNTAe}[1]{\ensuremath{{\textsf{\sc nta}_{#1}^\varepsilon}}\xspace}
\newcommand{\kmNTA}[2]{\ensuremath{{\textsf{\sc nta}_{#1,#2}}}\xspace}
\newcommand{\semd}[1]{\left\ldbrack#1\right\rdbrack}
\newcommand{\semlog}[1]{\left\ldbrack#1\right\rdbrack}
\newcommand{\prettyexists}[2]{\exists #1\cdot #2}
\newcommand{\sep}{\ | \ }
\newcommand{\tuple}[1]{\left(#1\right)}
\newcommand{\set}[1]{\left\{ #1 \right\}}
\newcommand{\setof}[2]{\set{#1 \; \middle| \; #2}}
\newcommand{\LCM}{\textsf{LCM}\xspace}
\newcommand{\kLCM}[1]{$#1$-\textsf{LCM}\xspace}
\newcommand{\incr}[1]{#1\,\texttt{++}}
\newcommand{\decr}[1]{#1\,\texttt{--}}
\newcommand{\ztest}[1]{#1\stackrel ? {\texttt=} 0}
\newcommand{\reachset}[1]{\text{Reach}(#1)}
\newif\ifstartedinmathmode
\renewcommand*{\st}{
  \relax\ifmmode\startedinmathmodetrue\else\startedinmathmodefalse\fi
  \ifstartedinmathmode{\;\cdot\;}\else{s.t.~}\fi%
}
\newcommand{\wrt}{w.r.t.~}
\newcommand{\pspace}{{\sc PSpace}\xspace}
\newcommand{\nlogspace}{{\sc NLogSpace}\xspace}
\newtheorem{fact}[theorem]{Fact}
\newcommand{\decision}[3]{\medskip\noindent {\sc #1}. \newline {\bf Input: } #2 \newline {\bf Output: } #3\medskip}
\crefname{claim}{Claim}{Claims}
\Crefname{claim}{Claim}{Claims}
\crefname{lemma}{Lemma}{Lemmas}
\Crefname{Lemma}{Lemma}{Lemmas}
\crefname{theorem}{Theorem}{Theorems}
\Crefname{Theorem}{Theorem}{Theorems}
\crefname{fact}{Fact}{Facts}
\Crefname{fact}{Fact}{Facts}
\title{Determinisability of one-clock timed automata}
\author
	{Lorenzo Clemente}
	{University of Warsaw, Poland}
	{clementelorenzo@gmail.com}
	{https://orcid.org/0000-0003-0578-9103}
	{Partially supported by the Polish NCN grant 2017/26/D/ST6/00201.}
\author
	{Sławomir Lasota}
	{University of Warsaw, Poland}
	{sl@mimuw.edu.pl}
	{{https://orcid.org/0000-0001-8674-4470}}
	{Partially supported by the Polish NCN grant 2019/35/B/ST6/02322 and by the ERC grant LIPA, agreement no.~683080.}
\author
	{Radosław Piórkowski}
	{University of Warsaw, Poland}
	{r.piorkowski@mimuw.edu.pl}
	{https://orcid.org/0000-0002-9643-182X}
	{Partially supported by the Polish NCN grant 2017/27/B/ST6/02093.}
\authorrunning{L.~Clemente, S.~Lasota, and R.~Piórkowski} 
\keywords{Timed automata, determinisation, deterministic membership problem} 
\begin{document}

\maketitle

\begin{abstract}
	The deterministic membership problem for timed automata asks whether
	the timed language recognised by a nondeterministic timed automaton 
	can be recognised by a deterministic timed automaton.
	We show that the problem is decidable when the input automaton is a one-clock nondeterministic timed automaton without epsilon transitions
	and the number of clocks of the deterministic timed automaton is fixed.
	We show that the problem in all the other cases is undecidable, i.e.,
	when either 1) the input nondeterministic timed automaton has two clocks or more,
	or 2) it uses epsilon transitions,
	or 3) the number of clocks of the output deterministic automaton is not fixed.
\end{abstract}

\newpage 


\section{Introduction}

Nondeterministic timed automata (\NTA) are one of the most widespread model of real-time reactive systems.
They are an extension of finite automata
with real-valued clocks which can be reset and compared by inequality constraints.
The nonemptiness problem for \NTA is decidable and in fact \pspace-complete,
as shown by Alur and Dill in their landmark paper \cite{AD94}.
As a testimony to the importance of the model,
the authors received the 2016 Church Award \cite{church:award}
for the invention of timed automata.
This paved the way to the automatic verification of timed systems,
leading to mature tools such as UPPAAL \cite{Behrmann:2006:UPPAAL4},
UPPAAL Tiga (timed games) \cite{CassezDavidFleuryLarsenLime:CONCUR:2005},
and PRISM (probabilistic timed automata) \cite{KwiatkowskaNormanParker:CAV:2011}.
%
The reachability problem is still a very active research area to these days
\cite{FearnleyJurdziski:IC:2015,
HerbreteauSrivathsanWalukiewicz:IC:2016,
AkshayGastinKrishna:LMCS:2018,
GastinMukherjeeSrivathsan:CONCUR:2018,
GastinMukherjeeSrivathsan:CAV:2019,
GovindHerbreteauSrivathsanWalukiewicz:CONCUR:2019},
as well as expressive generalisations thereof,
such as the binary reachability problem
\cite{ComonJurski:TA:1999,
Dima:Reach:TA:LICS02,
KrcalPelanek:TM:FSTTCS:2005,
FranzleQuaasShirmohammadiWorrell:IPL:2020}.

Deterministic timed automata (\DTA) form a strict subclass of \NTA
where the next configuration is uniquely determined from the current one and the timed input symbol.
The class of \DTA enjoys stronger properties than \NTA,
such as decidable universality and inclusion problems
and closure under complementation \cite{AD94}.
Moreover, the more restrictive nature of \DTA is necessary in several applications of timed automata,
such as test generation \cite{NielsenSkou:STTT:2003},
fault diagnosis \cite{BouyerChevalierDSouza:FOSSACS:2005},
and learning \cite{VerwerWeerdtWitteveen:Benelearn:2007,TapplerAichernigLarsenLorber:FOMATS:2019},
winning conditions in timed games
\cite{AsarinMaler:HSCC:1999,JurdzinskiTrivedi:ICALP:2007,BrihayeHenzingerPrabhuRaskin:ICALP:2007},
and in a notion of recognisability of timed languages \cite{Maler:Pnueli:FOSSACS:04}.
For these reasons,
and for the more general quest of understanding the nature of the expressive power of nondeterminism in timed automata,
many researchers have focused on defining determinisable classes of timed automata,
such as strongly non-zeno \NTA \cite{AsarianMalerPnueliSifakis:SSSC:1998},
event-clock \NTA \cite{AlurFixHenzinger:TCS:1999},
and \NTA with integer-resets \cite{SumanPandyaKrishnaManasa:2008}.
The classes above are not exhaustive,
in the sense that there are \NTA recognising deterministic timed languages
not falling into any of the classes above.
%
%

Another remarkable subclass of \NTA is obtained by requiring the presence of just one clock (without epsilon transitions).
The resulting class of \kNTA 1 is incomparable with \DTA:
For instance, \kNTA 1 are not closed under complement (unlike \DTA)
and there are very simple \DTA languages which are not recognisable by any \kNTA 1.
Nonetheless, \kNTA 1, like \DTA, have decidable inclusion, equivalence, and universality problems \cite{OW04,LasotaWalukiewicz:ATA:ACM08},
albeit the complexity is non-primitive recursive \cite[Corollary 4.2]{LasotaWalukiewicz:ATA:ACM08}
(see also \cite[Theorem 7.2]{OuaknineWorrel:LMCS:2007} for an analogous lower bound for the satisfiability problem of metric temporal logic).
Moreover, the non-emptiness problem for \kNTA 1 is \nlogspace-complete
(vs.~\pspace-complete for unrestricted \NTA and \DTA, already with two clocks \cite{FearnleyJurdziski:IC:2015}),
and computing the binary reachability relation is simpler when there is only one clock than in the general case
\cite{ClementeHofmanTotzke:CONCUR:2019}.

\para{The deterministic membership problem}

The \emph{\DTA membership problem} asks, given an \NTA,
whether there exists a \DTA recognising the same language.
There are two natural variants of this problem,
which are obtained by restricting the resources available to the sought \DTA.
Let $k \in \N$ be a bound on the number of clocks,
and let $m \in \N$ be a bound on the maximal absolute value of numerical constants.
The \emph{\kDTA k}
and \emph{\kmDTA k m membership problems}
are the restriction of the problem above
where the \DTA is required to have at most $k$ clocks,
resp., at most $k$ clocks and absolute value of maximal constant bounded by $m$.
Notice that we do not bound the number of control locations of the \DTA,
which makes the problem non-trivial.

Since regular languages are deterministic,
the \kDTA k membership problem can be seen as a quantitative generalisation of the regularity problem.
For instance, the $\kDTA 0$ membership problem is exactly the regularity problem
since a timed automaton with no clocks is the same as a finite automaton.
We remark that the regularity problem is usually undecidable
for nondeterministic models of computation generalising finite automata,
e.g., context-free grammars/pushdown automata \cite[Theorem 6.6.6]{Shallit:Book:2008},
labelled Petri nets under reachability semantics \cite{ValkVidal-Naquet:Petri:Regular:1981}, Parikh automata \cite{CadilhacFinkelMcKenzie:NCMA:2011}, etc.
One way to obtain decidability is to either restrict the input model to be deterministic
(e.g., \cite{Valiant:Regularity:DPDA:JACM:1975,ValkVidal-Naquet:Petri:Regular:1981,BaranyLodingSerre:STACS:2006}),
or to consider finer notions of equivalence,
such as bisimulation (e.g., \cite{ParysGoller:LICS:2020}).

This negative situation is generally confirmed for timed automata.
For every number of clocks $k\in\N$ and maximal constant $m$,
the \DTA, \kDTA k, and \kmDTA k m membership problems are known to be undecidable
when the input \NTA has $\geq 2$ clocks,
and for 1-clock \NTA with epsilon transitions \cite{Finkel:FORMATS:2006,Tripakis:IPL:2006}.
To the best of our knowledge,
the deterministic membership problem was not studied before when the input automaton is \kNTA 1 without epsilon transitions.

\para{Contributions}

We complete the study of the decidability border for the deterministic membership problem
initiated in \cite{Finkel:FORMATS:2006,Tripakis:IPL:2006}.
Our main result is the following.
\begin{restatable}{theorem}{thmkDTAmemb}
    \label{thm:kDTA:memb}
    The \kDTA{k} membership and the \kmDTA k m membership problems
    are decidable for \kNTA{1} languages.
\end{restatable}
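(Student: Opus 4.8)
The plan is to reduce both problems to the effective solution of a finite‑state game, in two stages. \textbf{Stage 1} reduces the $\kDTA{k}$ membership problem to the $\kmDTA k m$ membership problem by bounding the maximal constant: I would show that there is a computable $M = M(\A,k)$ such that $\lang\A$ is recognised by some \DTA with $k$ clocks iff it is recognised by one with $k$ clocks whose constants are all $\le M$. The argument is a pumping argument on the constants of a hypothetical witness \DTA $\mathcal B$. A guard of $\mathcal B$ comparing a clock against a constant much larger than anything tested by $\A$ can only become relevant after a long delay, and over such a delay the set of configurations reachable by $\A$ is, in a suitable region-based abstraction, eventually periodic — this is where the single clock of $\A$ is first used — so large constants of $\mathcal B$ can be folded down to the threshold $M$ without changing the recognised language. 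Granting this, it suffices to decide $\kmDTA k M$ membership.

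\textbf{Stage 2} solves the $\kmDTA k m$ membership problem. Fix $k$ and $m$ and let $R$ be the (finite) set of regions over $k$ clocks with maximal constant $m$; a $(k,m)$-\DTA is then syntactically a deterministic — but a priori infinite-state — region automaton over $R$, i.e.\ from a control location and a current region each timed letter deterministically selects an $R$-respecting timed move, a set of clocks to reset, and a successor location. I would introduce a two-player game $\Gk k \A m$ between a player \SEP, who builds such an automaton on the fly (choosing region moves, resets, and — crucially — an \emph{abstract} name for each successor location), and a player \OPP, who plays a timed word letter by letter while maintaining, as a side computation, the set of configurations reachable by $\A$ on the prefix read so far, and who may at any point \forget{} (stop) and declare the membership status of the word in $\lang\A$, read off from that configuration-set. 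A play is winning for \SEP if every such declaration matches the acceptance status of \SEP's current abstract location, and if \SEP has kept its automaton functionally consistent along the play; otherwise \OPP wins (\OPP may also \remind{}, i.e.\ postpone the challenge, possibly forever, in which case \SEP wins that play).

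The correctness of the reduction has two directions. The easy one: a genuine $(k,m)$-\DTA recognising $\lang\A$ furnishes a winning strategy for \SEP, namely play that automaton. The hard one, and \textbf{the main obstacle}, is to extract a \emph{finite} $(k,m)$-\DTA from an arbitrary winning \SEP-strategy. The key observation is that although \SEP's location names range over an infinite set, the only information about a location that affects correctness is the residual language it is required to recognise, and since the target language is $\lang\A$, this residual is determined by a configuration-set of $\A$; but configuration-sets of a \emph{one-clock} \NTA, ordered by inclusion, form a well-quasi-order whose downward-closed members are finitely representable — this is exactly the phenomenon underlying decidability of universality and inclusion for \kNTA 1. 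Using that the winning condition is a safety condition for \SEP, monotone in this quasi-order, one shows that if \SEP wins at all then it wins using only finitely many "residual types" of locations; collapsing locations of the same type yields the required finite \DTA. (The amortised bookkeeping needed to track these finitely many endpoints with only $k$ clocks, where new endpoints cannot be reset exactly at the right instant, is handled by allowing auxiliary $\varepsilon$-resets in an intermediate model and bounding their number before normalising back to a genuine $(k,m)$-\DTA.)

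It remains to observe that $\Gk k \A m$ is effectively solvable. Here \OPP's side computation is the subset construction on the region automaton of $\A$ — an infinite but well-structured transition system — and, the winning condition being a safety condition monotone in the well-quasi-order, the game reduces to a finite-state game (equivalently, a well-structured game), whose winner is computable; chaining this with Stage 1 decides both the $\kmDTA k m$ and the $\kDTA k$ membership problems for \kNTA 1 languages. The 1-clock hypothesis is essential throughout Stage 2: with two clocks the reachable configuration-sets of $\A$ are no longer well-quasi-ordered in the required way, consistently with the undecidability results quoted for that case.
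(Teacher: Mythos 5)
Your plan takes a genuinely different route from the paper's. The paper's engine is a semantic characterisation, \cref{thm:k-DTA-char}, proved with timed automorphisms and invariance of left quotients: if a \kmNTA{1}{m} with $n$ locations has an \emph{always-resetting} \kDTA{k} equivalent, it already has an always-resetting \kmDTA{k}{m} equivalent with at most a computable $f(k,m,n)$ locations --- in particular, the maximal constant $m$ of the input \emph{never needs to grow}. Decidability then follows by enumerating candidate always-resetting \kmDTA{k+1}{m}'s of that size, checking equivalence against the \kNTA{1} via \cite{OW04}, and finally invoking the \kDTA{k}/\kmDTA{k}{m} \emph{separability} result of \cite{ClementeLasotaPiorkowski:ICALP:2020} to discharge the remaining membership test on the resulting \DTA. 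Your Stage~2 is, in effect, a re-derivation of that separability game specialised to the membership objective, with \OPP maintaining $\A$'s macro-configuration and playing \forget{} to challenge; the well-quasi-order intuition and the observation that one clock is essential for well-structuredness are both on target. But the paper uses that game only as an imported black box for the last normalisation step, whereas in your plan it carries the entire weight of the argument.

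The genuine gap is in Stage~1. You assert that a hypothetical $k$-clock witness $\mathcal B$ can have its constants pumped down to a computable bound $M(\A,k)$ because a large guard of $\mathcal B$ only becomes relevant after a long delay over which $\A$'s region-abstracted configuration sets are ``eventually periodic.'' That argument, as stated, does not go through: the $k$ clocks of $\mathcal B$ are reset at different moments and may simultaneously sit at widely separated large values, while $\A$'s macro-configuration evolves through an unbounded number of letters in the meantime --- so there is no single delay along which to pump, and no periodicity statement that licenses shrinking one of $\mathcal B$'s thresholds without disturbing the interaction of the others. The underlying claim is morally correct (it follows from \cref{thm:k-DTA-char} together with the constant bound implicit in the separability procedure), but proving it requires precisely the orbit-finiteness and invariance analysis the paper carries out, not a one-clock pumping lemma. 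Without a rigorous Stage~1, your procedure has no termination guarantee on negative instances of \kDTA{k} membership, so the overall argument is incomplete.
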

Our decidability result contrasts starkly
with the abdundance of undecidability results for the regularity problem.
We establish decidability by showing that if a \kmNTA k m recognises a \kDTA k language,
then in fact it recognises a \kmDTA k m language
and moreover there is a computable bound on the number of control locations of the deterministic acceptor (c.f.~\cref{thm:k-DTA-char}).
This provides a decision procedure
since there are finitely many \DTA once the number of clocks, the maximal constant, and the number of control locations are fixed.


In our technical analysis we find it convenient to introduce the so called
\emph{always resetting} subclass of \kNTA k. 
These automata are required to reset at least one clock at every transition
and are thus of expressive power intermediate between \kNTA {k-1} and \kNTA k.
Always resetting \kNTA 2 are strictly more expressive than \kNTA 1: For instance, the language of timed words of the form
$(a, t_0) (a, t_1) (a, t_2)$ \st $t_2 - t_0 > 2$ and $t_2 - t_1 < 1$ can be recognised by an always resetting \kNTA 2
but by no \kNTA 1.
Despite their increased expressive power,
always resetting \kNTA 2
still have a decidable universality problem (the well-quasi order approach of \cite{OW04} goes through),
which is not the case for \kNTA 2.
Thanks to this restricted form,
we are able to provide in \cref{thm:k-DTA-char} an elegant characterisation of those \kNTA 1 languages
which are recognised by an always resetting \kDTA k.

We complement the decidability result above by showing that the problem becomes undecidable
if we do not restrict the number of clocks of the \DTA.
\begin{restatable}{theorem}{thmUndec}
    \label{thm:undecidability}
    The \DTA and \mDTA m ($m > 0$) membership problems are undecidable for \kNTA 1 without epsilon transitions.
\end{restatable}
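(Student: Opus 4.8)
The plan is to reduce from the halting problem for Minsky (two-counter) machines, which is undecidable. Given such a machine $M$, first pad it with a fresh counter that is incremented at every step, so that along the unique infinite run of $M$ (if it does not halt) some counter grows unboundedly. Then fix an encoding of the finite prefixes $C_0 C_1 \cdots C_n$ of the run of $M$ started in its initial configuration --- the \emph{partial computations} --- as timed words over a finite alphabet $\Sigma_M$: consecutive configurations sit in consecutive time windows of equal length, the control location of $C_i$ is recorded by a marker event, and a counter of value $v$ is written in unary as a block of $v$ unit-events placed at prescribed offsets inside the window. Let $V_M\subseteq\timedwords{\Sigma_M}$ be the set of timed words that are valid such encodings, and let $\A_M$ (effectively constructible from $M$) be the \kNTA 1 \emph{without} epsilon transitions recognising the \emph{complement} $L_M:=\timedwords{\Sigma_M}\setminus V_M$, i.e.\ the set of timed words that are \emph{not} valid encodings of a partial computation. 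The point is that $L_M$ really is recognisable in this restricted class: a timed word fails to be a valid encoding exactly when it exhibits a \emph{local} defect --- a malformed window, a wrong initial configuration, or two consecutive windows not related by the instruction of $M$ enabled at the first (a counter unit missing, displaced, or spuriously added one window later; a mishandled zero-test; a wrong next location). Each such defect is either exhibited by a single (mis)placed event --- which $\A_M$ guesses and certifies with its only clock by inspecting what happens one window-period later --- or detected by the finite control while scanning a single window (a required marker never occurs, or a counter subject to a zero-test is found non-empty); no auxiliary clock and no silent move is needed. This is in the spirit of the standard simulations of counter machines by one-clock timed automata underlying the non-primitive-recursive lower bounds of \cite{OW04,LasotaWalukiewicz:ATA:ACM08,OuaknineWorrel:LMCS:2007}.

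It remains to argue that $L_M$ is \DTA-recognisable if and only if $M$ halts, which --- halting being undecidable --- yields undecidability of the \DTA membership problem for \kNTA 1 without epsilon transitions. If $M$ halts, its run is finite, so $V_M$ is a \emph{finite} set of timed words and $L_M$ is co-finite; a co-finite timed language is \DTA-recognisable, since for each of the finitely many $\rho\in V_M$ a deterministic timed automaton accepts every timed word different from $\rho$ (it follows the prefixes of $\rho$ and switches to an accepting sink the moment the input deviates, rejecting only $\rho$ itself), and the product of these finitely many deterministic acceptors recognises $L_M$; moreover, by rescaling the encoding to integer timestamps and measuring absolute time with one clock reset every time unit together with a counter kept in the control locations, the acceptor uses only the constant $1$, so $L_M$ is even \mDTA 1-recognisable, hence \mDTA m-recognisable for every $m>0$. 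If $M$ does not halt, then by the padding some counter grows unboundedly along the run, so for every $i$ the set $\untime{V_M}$ of untimed encodings contains a word whose $i$-th configuration block carries exactly $i$ unit-events of that counter; a standard pumping argument shows $\untime{V_M}$ is not regular. Since the untiming of the language of any (deterministic or nondeterministic) timed automaton is regular, $V_M$ is recognised by \emph{no} timed automaton; as \DTA are closed under complement, $L_M=\timedwords{\Sigma_M}\setminus V_M$ is recognised by no \DTA --- a fortiori by no \mDTA m, for any $m$. Hence ``$L_M$ is \DTA-recognisable'' and ``$L_M$ is \mDTA m-recognisable'' both coincide with ``$M$ halts'', which proves the theorem.

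The main obstacle is the very first step: designing the encoding so that \emph{invalidity} is \emph{always} locally certifiable by a \kNTA 1 without epsilon transitions. Catching a \emph{missing} or \emph{displaced} counter unit is straightforward --- reset the clock on the offending unit of window $i$ and look one window-period ahead --- but catching a \emph{spurious extra} unit in window $i+1$ is more delicate, since a single clock can compare an event only with a \emph{future} one; a suitable convention (placing counter units so that the bijection between the units of consecutive windows is order-preserving and anchored at the window-marker, and putting any newly created unit in a reserved slot) makes this defect, too, visible through one forward comparison. This is exactly where the construction diverges in spirit from the earlier undecidability results of \cite{Finkel:FORMATS:2006,Tripakis:IPL:2006}: those make the input automaton recognise \emph{valid} computations, which genuinely requires the universal power of a second clock or of silent moves, whereas here the input automaton only ever recognises \emph{invalid} ones --- something a plain \kNTA 1 can do --- and the undecidability is squeezed out of the finite-versus-non-regular dichotomy on the complementary side, where a deterministic acceptor would need an unbounded number of clocks (in fact no timed automaton suffices).
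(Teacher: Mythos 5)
There is a genuine gap in your proposal, and it sits exactly where you flag an ``obstacle'': a \kNTA 1 without epsilon transitions cannot, in fact, recognise the complement $L_M$ of valid Minsky-machine computation encodings. The reason is structural and cannot be circumvented by re-arranging slots. With a single clock, the automaton can remember exactly one past timestamp and test it against \emph{future} events; hence it can certify the defect ``unit at offset $o$ in window $i$ has no partner at $o+1$ in window $i+1$,'' but it has no way to certify ``unit at offset $o$ in window $i+1$ has no partner at $o-1$ in window $i$,'' because the witness event $o-1$ simply does not exist and the clock cannot have been reset there. Your ``order-preserving bijection anchored at the marker'' and ``reserved slot'' conventions do not help: to catch a spurious extra unit you would have to sandwich it between two \emph{consecutive} units of the previous window, which requires holding both of those timestamps simultaneously --- two clocks. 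This is not a presentational issue: if a \kNTA 1 could recognise your $L_M$, then $L_M \cup \{\text{words not ending in a halting marker}\}$ would also be a \kNTA 1 language, and its universality would be equivalent to non-halting of $M$. That would make \kNTA 1 universality undecidable, contradicting the decidability of that problem \cite{OW04,LasotaWalukiewicz:ATA:ACM08}, which the paper itself relies on.

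This is precisely why the paper does \emph{not} use perfect Minsky machines. It reduces instead from the \emph{finiteness} problem for \emph{lossy} counter machines, which is undecidable (whereas halting/reachability for lossy machines is decidable, so that cannot be the source of undecidability). The reversal encoding is set up so that, in the direction the automaton reads, the only possible defect is a counter unit that is \emph{present now but absent one time unit later} --- a purely forward check, and thus within reach of one clock. Lossiness makes the other direction of defect (a unit appearing from nowhere) legal, so nothing needs to be verified there. Your dichotomy argument (finite reachability set implies bounded counters implies $\kDTA{(k+1)}$, versus unbounded counters implying the untimed projection is non-regular, hence not accepted by \emph{any} timed automaton) is close in spirit to the paper's, but it must be driven by lossy-machine finiteness rather than Minsky-machine halting, because only the former is both undecidable and compatible with a one-clock, forward-only complement automaton.
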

Finally, by refining the analysis of \cite{Finkel:FORMATS:2006},
we show that the \kDTA{k} and \kmDTA k m membership problems for \kNTA 1
are non-primitive recursive.
\begin{restatable}{theorem}{thmHardness}
    \label{thm:hardness}
    The \kDTA{k} and \kmDTA k m membership problems are \HyperAckermann-hard for \kNTA 1.
\end{restatable}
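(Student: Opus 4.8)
The plan is to reduce the universality problem for \kNTA 1 --- given $\mathcal A \in \kNTA 1$ over an alphabet $\Sigma$, is $L(\mathcal A) = \timedwords\Sigma$? --- to the \kDTA{k} and \kmDTA k m membership problems. This universality problem is already \HyperAckermann-hard for \kNTA 1 (\cf the non‑primitive‑recursive lower bounds of~\cite{OuaknineWorrel:LMCS:2007,LasotaWalukiewicz:ATA:ACM08}), and the reduction below is elementary, so the lower bound is transferred. The reduction is a variation of the one of~\cite{Finkel:FORMATS:2006}, reworked so as to avoid epsilon transitions.

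Fix a letter $a \notin \Sigma$ and a separator $\# \notin \Sigma \uplus \{a\}$, and let $L_0 \subseteq \timedwords{\{a\}}$ consist of the timed words $(a,t_1)\cdots(a,t_n)$ such that $t_j - t_i = 1$ for some $i < j$. This $L_0$ is recognised by a fixed \kmNTA 1 1 --- nondeterministically guess $i$, reset the clock, and later require it to equal $1$ --- and it is the textbook example of a timed language recognised by no \DTA at all; the standard ``bounded memory'' argument shows moreover that it is recognised by no $k$-clock deterministic acceptor, not even one started in an arbitrary configuration. Given $\mathcal A$, I build $\mathcal B \in \kNTA 1$ over $\Sigma \uplus \{\#,a\}$ with
\[
 L(\mathcal B) \;=\; J \;\cup\; \setof{w\#v}{w \in L(\mathcal A),\, v \in \timedwords{\{a\}}} \;\cup\; \setof{w\#v}{w \in \timedwords{\Sigma},\, v \in L_0},
\]
where $J$ is the set of \emph{ill-shaped} words, i.e.\ those not of the form $w\#v$ with $w \in \timedwords\Sigma$ and $v \in \timedwords{\{a\}}$; since $J$ is the complement of an untimed regular language, it is recognised by a \kNTA 0. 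The second language is recognised by the automaton obtained from $\mathcal A$ by adding, from each accepting location, a $\#$-transition into a fresh all-accepting sink that loops on $a$; the third, by an automaton that loops on $\Sigma$ in a clock-free initial phase and, on reading $\#$, resets its clock and runs the fixed acceptor for $L_0$. The disjoint union of the three automata is the desired $\mathcal B$: it is effective, uses one clock, and uses no epsilon transition.

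For correctness, if $\mathcal A$ is universal then the second language above is exactly the set of all well-shaped words, so $L(\mathcal B) = \timedwords{\Sigma \uplus \{\#,a\}}$, recognised by a single-location clock-free \DTA and hence a \kDTA{k} and a \kmDTA k m language. If $\mathcal A$ is not universal, fix $w_0 \in \timedwords\Sigma \setminus L(\mathcal A)$; upon reading the prefix $w_0\#$ the only words admitted by $L(\mathcal B)$ are the $v \in L_0$ (from the third language) and all $v \notin \timedwords{\{a\}}$ (from $J$), so the left quotient of $L(\mathcal B)$ by $w_0\#$ is $L_0 \cup \overline{\timedwords{\{a\}}}$. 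If $L(\mathcal B)$ were recognised by a $k$-clock \DTA, running it on $w_0\#$ would lead to a unique configuration; since the $L_0$-phase of $\mathcal B$ begins with a clock reset and $L_0$ is translation-invariant, restricting that \DTA to inputs over $\{a\}$ and starting it from that configuration would recognise $L_0$ --- impossible. Hence $\mathcal A$ is universal if and only if $L(\mathcal B)$ is a \kDTA{k} language, if and only if it is a \kmDTA k m language, completing the reduction.

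The step I expect to be the main obstacle is the prohibition of epsilon transitions: the naive reduction --- accept $w\#v$ whenever $w \notin L(\mathcal A)$ or $v \in L_0$ --- is unavailable, since \kNTA 1 can neither complement $\mathcal A$ nor pass control silently. The remedy is to route both the guard $L(\mathcal A)$ and the witness $L_0$ through the explicit separator $\#$ and to fold every possible shape violation into the regular junk language $J$, so that the positive instance collapses to the trivially deterministic universal language while the negative instance exposes $L_0$ as a left quotient, all with a single clock and no epsilon. The only further point needing (routine) care is the robustness of the non-\DTA-ness of $L_0$ under an arbitrary initial configuration and time shift, which the reset at $\#$ together with the translation-invariance of $L_0$ reduces to the standard argument.
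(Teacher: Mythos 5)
Your reduction is essentially the one in the paper: both reduce from \kNTA{1} universality by tacking a hard non-deterministic witness language onto the input language across a separator letter, so that a universal input collapses to a timeless (clock-free) language while a non-universal input exposes the hard language as a quotient. The paper packages this as an abstract \Cref{lem:easy-undecidability} (closure of $\mathcal Y$ under union, composition, timeless languages, and containing a non-\kDTA{k} member), whereas you unfold it concretely for $\mathcal Y = $ \kNTA{1} with $M = L_0$; the junk language $J$ you add is an inessential variant (the paper keeps $N$ confined to well-shaped words, which is still a timeless and hence \kDTA{k} language when $L$ is universal).

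The one place where your argument is slightly less crisp is the ``if'' direction. You appeal to ``robustness of the non-\DTA-ness of $L_0$ under an arbitrary initial configuration and time shift,'' justified by translation-invariance of $L_0$ and the fact that your \kNTA{1} $\mathcal B$ resets on $\#$ --- but the reset in $\mathcal B$ says nothing about the hypothetical \kDTA{k} for $L(\mathcal B)$, which after reading $w_0\#$ may be in an arbitrary configuration. Your conclusion is still correct (the standard pigeonhole/bounded-memory argument for $L_0$ does not depend on the starting valuation), but the paper sidesteps this entirely with a cleaner device: assume w.l.o.g.\ that the hypothetical \kDTA{k} is greedily resetting and place the separator at a time gap $t$ exceeding that \DTA's maximal constant, forcing all its clocks to $0$ at that point; then starting the \DTA at the reached location gives a genuine \kDTA{k} for $M$ from the initial valuation, yielding the contradiction without any ``arbitrary configuration'' detour.
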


\para{Related research}

Many works addressed the construction of a \DTA equivalent to a given \NTA
(see \cite{BertrandStainerJeronKrichen:FMSD:2015} and references therein),
however since the general problem is undecidable,
one has to either sacrifice termination,
or consider deterministic under/over-approximations.
In a related line of work,
we have shown that the \emph{deterministic separability problem} is decidable for the full class of \NTA, when the number of clocks of the separator is given in the input \cite{ClementeLasotaPiorkowski:ICALP:2020}.
This contrasts with undecidability of the corresponding membership problem.
Decidability of the deterministic separability problem when the number of clocks of the separator is not provided remains a challenging open problem.


\section{Preliminaries}  \label{sec:prelim}

\para{Timed words and languages}

Fix a finite alphabet $\Sigma$.
Let $\Rreal$ and $\Rpos$ denote reals and nonnegative reals\footnote{Equivalently, nonnegative rationals may be considered
in place of reals.}, respectively.
%
A \emph{timed word} over $\Sigma$ is any sequence of the form
\begin{align} \label{eq:newtw}
	w \ = \ 
	(a_1, t_1)\, \dots \, (a_n, t_n)  \ \in \ (\Sigma \times\R)^*
\end{align}
%
which is \emph{monotonic},
in the sense that the timestamps $t_i$'s satisfy $0 \leq t_1 \leq t_2 \leq \dots \leq t_n$.
%
Let $\timedwords{\Sigma}$ be the set of all timed words over $\Sigma$,
and let $\timedwordsafter{\Sigma}{t}$ be, for $t\in \R$,
the set of timed words with $t_1 \geq t$. 
A \emph{timed language} is a subset of $\timedwords{\Sigma}$.


The concatenation $w \cdot v$ of two timed words $w$ and $v$
is defined only when the first time-stamp of $v$ is greater or equal than the last timestamp of $w$.
%
Using this partial operation, we define, 
for a timed word $w\in\timedwords{\Sigma}$ and a timed language $L\subseteq \timedwords\Sigma$, the left quotient
$w^{-1} L := \setof{v\in\timedwords\Sigma}{w \cdot v \in L}$.
Clearly $w^{-1} L \subseteq \timedwordsafter \Sigma {t_n}$. 

\para{Clock constraints and regions}
Let $\X = \set{\x_1, \dots, \x_k}$ be a finite set of clocks.
A \emph{clock valuation} is a function $\mu \in \Rpos^\X$
assigning a non-negative real number $\mu(\x)$ to every clock $\x \in \X$.
A \emph{clock constraint} is a quantifier-free formula of the form
\begin{align*}
    \varphi,\psi \ ::\equiv\ \true \sep \false \sep \x_i - \x_j \sim z \sep \x_i \sim z \sep \neg \varphi \sep \varphi \land \psi \sep \varphi \lor \psi,
\end{align*}
where ``$\sim$'' is a comparison operator in $\set{=, <, \leq, >, \geq}$
and $z \in \Z$.
A clock valuation $\mu$ satisfies a constraint $\varphi$, written $\mu \models \varphi$,
if interpreting each clock $\x_i$ by $\mu(\x_i)$ makes $\varphi$ a tautology.
%
%
An \emph{$k, m$-region} is a non-empty set of valuations $\semlog{\varphi}$
satisfied by a constraint $\varphi$ with $k$ clocks and absolute value of maximal constant bounded by $m$,
which is minimal \wrt set inclusion.
%
For instance, the clock constraint
$1 < \x_1 < 2 \;\wedge\; 4 < \x_2 < 5 \;\wedge\; \x_2 - \x_1 < 3$
defines a $2,5$-region consisting of an open triangle with nodes $(1, 4)$, $(2, 4)$ and $(2, 5)$.

\para{Timed automata}

A (nondeterministic) \emph{timed automaton} is a tuple $A = \automaton$,
where $\Sigma$ is a finite input alphabet,
$\L$ is a finite set of control locations,
$\X$ is a finite set of clocks,
$\I, \F \subseteq \L$ are the subsets of initial, resp., final, control locations,
and $\Delta$ is a finite set of transition rules of the form 
\begin{align} \label{eq:trans-rule}
	\transition{p}{a}{\varphi}{\Y}{q}
\end{align}
with $p, q \in \L$ control locations, $a \in \Sigma$,  
$\varphi$ a clock constraint to be tested,
and $\Y \subseteq \X$ the set of clocks to be reset.
We write \NTA for the class of all nondeterministic timed automata, \kNTA{k} when the number $k$ of clocks is fixed,
\mNTA{m} when the bound $m$ on constants is fixed,
and \kmNTA{k}{m} when both $k$ and $m$ are fixed.

An \mNTA m $A$ is \emph{always resetting} if every transition rule as in~\eqref{eq:trans-rule} resets some clock $\Y \neq \emptyset$,
and \emph{greedily resetting}
if, for every clock $\x$,
whenever $\varphi$ implies that $\x$ belongs to $\set{0, \ldots, m} \cup (m, \infty)$, then $\x \in \Y$.

\para{Reset-point semantics}

A \emph{configuration} of an \NTA $A$ is a tuple $\tuple{p, \mu, \now}$
consisting of a control location $p \in \L$,
a reset-point assignment $\mu \in \R^\X$, and a ``now'' timestamp $\now \in \R$
satisfying $\mu(\x) \leq \now$ for all clocks $\x\in \X$.
Intuitively, $\now$ is the last timestamp seen in the input
and, for every clock $\x$,
$\mu(\x)$ stores the timestamp of the last reset of $\x$.
A configuration is \emph{initial} if $p$ is so, $\now = 0$, and $\mu(\x) = 0$ for all clocks $\x$,
and it is \emph{final} if $p$ is so
(without any further restriction on $\mu$ or $\now$).
%
For a set of clocks $\Y \subseteq \X$ and a timestamp $u\in \R$, let $\extend \Y u \mu$
be the assignment which is $u$ on $\Y$ and agrees with $\mu$ on $\X \setminus \Y$.
An assignment $\mu$ together with $\now$ induces a clock valuation $\assigntoval \mu \now$ defined as 
$(\assigntoval \mu \now)(\x) = \now - \mu(\x)$ for all clocks $\x\in\X$.
Clock assignments and valuations have the same type $\R^\X$,
however we find it technically convenient to store assignments in configurations
and use the derived valuations to interpret the clock constraints.
Such reset-point semantics based on reset-point assignments
has already appeared in the literature on timed automata \cite{Fribourg:1998}
and it is the foundation of the related model of timed-register automata \cite{BL12}.

Every transition rule~\eqref{eq:trans-rule}
induces a \emph{transition} between configurations $\tuple {p, \mu, \now} \goesto {a,t} \tuple {q, \nu, t}$
labelled by $(a,t)\in\Sigma\times\R$ 
whenever 
%
    $t\geq \now$, 
    $\assigntoval \mu t \models \varphi$, and 
    $\nu = \extend \Y t \mu$.
%
\noindent
%
The \emph{timed transition system} induced by $A$
is $\tuple{\Conf A, \goesto {}, F}$,
where $\Conf A$ is the set of configurations,
${\goesto{}} \subseteq \Conf A \times \Sigma \times \Rpos \times \Conf A$
is as defined above, and
$F \subseteq \Conf A$ is the set of final configurations.
Since there is no danger of confusion,
we use $\semd A$ to denote either the timed transition system above, or its domain.
A \emph{run} of $A$ \emph{over} a timed word $w$ as in \eqref{eq:newtw}
\emph{starting} in configuration $\tuple {p, \mu, t_0}$
and \emph{ending} in configuration $\tuple {q, \nu, t_n}$
is a path $\rho$ in $\semd A$ 
of the form
%
$	\rho  =  \tuple {p, \mu, t_0} 
		\goesto {a_1,t_1} \ 
		\dots 
		\goesto {a_n, t_n} \ 
		\tuple{q, \nu, t_{n}}$. 
%
%
The run $\rho$ is accepting if its last configuration satisfies $\tuple {q, \nu, t_{n}}\in F$.
The language \emph{recognised} by configuration $(p, \mu, \now)$ is defined as: 
\begin{align*}
	\langtsa {\semd A} {p, \mu, \now} = \setof{w\in\timedwords{\Sigma}}{\semd A \text{ has an accepting run over } w \text{ starting in } \tuple{p, \mu, \now}}. 
\end{align*}
%
Clearly $\langtsa {\semd A} {p, \mu, \now} \subseteq \timedwordsafter \Sigma \now$.
We write $\langtsa {A} c$ instead of $\langtsa {\semd A} c$.
The language recognised by the automaton $A$ is
$\langts A = \bigcup_{c \text{ initial}} \langtsa A {c}$.
%
A configuration is \emph{reachable} if it is the ending configuration in a run starting in an initial configuration.
In an always resetting \mNTA m,
every reachable configuration $(p, \mu, \now)$ satisfies $\now \in \mu(\X)$,
and in a greedily resetting one,
1) $(p, \mu, \now)$ has \emph{$m$-bounded span},
in the sense that $\mu(\X) \subseteq (\now-m, \now]$,
and moreover 2) any two clocks $\x, \y$ with integer difference
$\mu(\x) - \mu(\y) \in \Z$ are actually equal $\mu(\x) = \mu(\y)$.
Condition 2) follows from the fact that if $\x, \y$ have integer difference and $\y$ was reset last,
then $\x$ was itself an integer when this happened,
and in fact they were both reset together in a greedily resetting automaton.

\para{Deterministic timed automata}
A timed automaton $A$ 
is \emph{deterministic} if it has exactly one initial location
and, for every two rules 
$\transition{p}{a}{\varphi}{\Y}{q}, \transition{p}{a'}{\varphi'}{\Y'}{q'} \in \Delta$,
if $a = a'$ and  $\semlog{\varphi \land \varphi'} \neq \emptyset$ then $\Y = \Y'$ and $q = q'$.
Hence $A$ has at most one run over every timed word $w$.
A \DTA can be easily transformed to a \emph{total} one, where for every location $p\in\L$ and $a\in\Sigma$,
the sets defined by clock constraints 
$\setof{\semlog \varphi}{\prettyexists{\Y, q}{\transition{p}{a}{\varphi}{\Y}{q} \in \Delta}}$
are a partition of $\Rpos^\X$.
Thus, a total \DTA has exactly one run over every timed word $w$.
%
We write \DTA for the class of deterministic timed automata,
and \kDTA{k}, \mDTA{m}, and \kmDTA{k}{m} for the respective subclasses thereof.
A timed language is called \NTA language, \DTA language, etc.,
if it is recognised by a timed automaton of the respective type.
%

\begin{example} \label{example:L1}
    Let $\Sigma = \set a$ be a unary alphabet.
	As an example of a timed language $L$ recognised by a \kNTA{1},
	but not by any \DTA,
    consider the set of non-negative timed words of the form
		$(a, t_1) \cdots (a, t_n)$
	where $t_n - t_i = 1$ for some $1\leq i < n$.
	The language $L$ is recognised by the \kNTA{1} $A = \automaton$
	with a single clock $\X = \set \x$ and three locations $\L = \set{p, q, r}$,
	of which $\I = \set p$ is initial and $\F = \set r$ is final, and transition rules
	\begin{align*} 
		& \transition{p}{a}{\true}{\emptyset}{p} \qquad 
		\transition{p}{a}{\true}{\set{\x}}{q} \qquad
		\transition{q}{a}{\x<1}{\emptyset}{q}  \qquad
		\transition{q}{a}{\x=1}{\emptyset}{r}.
	\end{align*}
	Intuitively, in $p$ the automaton waits until it guesses that the next input will be $(a, t_i)$,
	at which point it moves to $q$ by resetting the clock (and subsequently reading $a$).
	From $q$, the automaton can accept by going to $r$ only if exactly one time unit elapsed since $(a, t_i)$ was read.
	The language $L$ is not recognised by any \DTA
	since, intuitively, any deterministic acceptor needs to store unboundedly many timestamps $t_i$'s.
\end{example}

\para{Deterministic membership problems}

Let $\mathcal X$ be a subclass of \NTA.
We are interested in the following decision problem.
%


\decision{
$\mathcal X$ membership problem}
{A timed automaton $A\in$ \NTA.}
{Does there exist a 
$B\in \mathcal X$ \st $\lang A = \lang B$?}

In the rest of the paper, we study the decidability status of the $\mathcal X$ membership problem
where $\mathcal X$ ranges over \DTA, \kDTA k (for every fixed number of clocks $k$),
\mDTA m (for every maximal constant $m$),
and \kmDTA k m (when both clocks $k$ and maximal constant $m$ are fixed).
\Cref{example:L1} shows that there are \NTA languages
which cannot be accepted by any \DTA.
Moreover, there is no computable bound for the number of clocks $k$
which suffice to recognise a \kNTA 1 language by a \kDTA k (when such a number exists),
which follows from the following three observations:
1) the \DTA membership problem is undecidable for \kNTA 1
(\cref{thm:undecidability}),
2) the problem of deciding equivalence of a given \kNTA 1 to a given \DTA is decidable \cite{OW04}, and
3) if a \kmNTA 1 m is equivalent to some \kDTA k then it is in fact equivalent to some \kmDTA k m with computably many control locations (by~\cref{thm:k-DTA-char}).


%
	%

%


\section{Timed automorphisms and invariance}
\label{sec:inv}

A fundamental tool in this paper 
is invariance properties of timed languages recognised by \NTA with respect
to permutations of $\Rreal$ preserving integer differences.
In this section we establish these properties.
A \emph{timed automorphism} is a monotone bijection $\pi : \Rreal \to \Rreal$
\st for every $x \in \Rreal$, $\pi(x+1) = \pi(x)+1$. 
For instance, if $\pi(3.4)=4.5$, then necessarily $\pi(5.4) = 6.5$ and $\pi(-3.6) = -2.5$.
Timed automorphisms $\pi$ 
are extended point-wise to
timed words $\pi((a_1, t_1) \dots (a_n, t_n)) = (a_0, \pi(t_1)) \dots (a_n, \pi(t_n))$,
configurations $\pi(p, \mu, \now) = (p, \pi{\circ}\mu, \pi(\now))$,
transitions
$\pi(c \goesto {a, t} c') = \pi(c) \goesto {a, \pi(t)} \pi(c')$,
and sets $X$ thereof $\pi(X) = \setof{\pi(x)}{x\in X}$.
%
\begin{remark}
  A timed automorphism $\pi$ can in general take a nonnegative real $t\geq 0$ to a negative one.
  %
  Whenever we write $\pi(x)$,
  we always implicitly assume that $\pi$ is defined on $x$.
\end{remark}

Let $S \subseteq \R$.
An \emph{$S$-timed automorphism} is a timed automorphism \st $\pi(t) = t$ for all $t\in S$.
Let $\Pi_S$ denote the set of all $S$-timed automorphisms, and let $\Pi = \Pi_\emptyset$.
A set $X$ is \emph{$S$-invariant}
if $\pi(X) = X$ for every $\pi \in \Pi_S$;
equivalently,
for every $\pi \in \Pi_S$,
$x\in X$ if, and only if $\pi(x)\in X$.
%
A set $X$ is \emph{invariant} if it is $S$-invariant with $S = \emptyset$.
The following three facts express some basic invariance properties.
%
\begin{restatable}{fact}{invarianceTrans}
  \label{fact:equivariant:trans}
  The timed transition system $\semd A$ is invariant.
\end{restatable}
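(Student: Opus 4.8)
The plan is to show that the timed transition system $\semd A$ is preserved by every timed automorphism $\pi \in \Pi$, by checking invariance separately for each of its three ingredients: the set of configurations $\Conf A$, the transition relation $\goesto{}$, and the set of final configurations $F$. Since a timed automorphism acts on configurations by $\pi(p,\mu,\now) = (p, \pi\circ\mu, \pi(\now))$ and on transitions by $\pi(c \goesto{a,t} c') = \pi(c) \goesto{a,\pi(t)} \pi(c')$, what we must verify is that these pointwise-extended actions indeed map $\Conf A$ into $\Conf A$, $\goesto{}$ into $\goesto{}$, and $F$ into $F$, and (by applying $\pi^{-1}$, which is again a timed automorphism) that the reverse inclusions hold too.

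First I would treat $\Conf A$. A tuple $(p,\mu,\now)$ is a configuration iff $\mu(\x) \le \now$ for all clocks $\x$. Since $\pi$ is monotone, $\mu(\x) \le \now$ implies $\pi(\mu(\x)) \le \pi(\now)$, so $(p, \pi\circ\mu, \pi(\now))$ is again a configuration; applying this to $\pi^{-1}$ gives the converse, so $\pi(\Conf A) = \Conf A$. Next, $F$ is the set of configurations whose control location lies in $\F$, with no constraint on $\mu$ or $\now$; since $\pi$ does not change the control location, $\pi(F) = F$ is immediate.

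The main point is the transition relation. Recall a transition $(p,\mu,\now) \goesto{a,t} (q,\nu,t)$ exists iff there is a rule $\transition{p}{a}{\varphi}{\Y}{q} \in \Delta$ with (i) $t \ge \now$, (ii) $\assigntoval{\mu}{t} \models \varphi$, and (iii) $\nu = \extend{\Y}{t}{\mu}$. I would check that each of these three conditions is preserved when we replace $\mu, \now, t, \nu$ by $\pi\circ\mu, \pi(\now), \pi(t), \pi\circ\nu$, using the \emph{same} rule. Condition (i) is monotonicity of $\pi$ again. Condition (iii) is purely combinatorial: $\pi \circ \extend{\Y}{t}{\mu} = \extend{\Y}{\pi(t)}{(\pi\circ\mu)}$ because $\pi$ is applied pointwise and commutes with the case split on whether a clock is in $\Y$. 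The crux is condition (ii): I claim the derived valuation transforms as $\assigntoval{(\pi\circ\mu)}{\pi(t)} = \assigntoval{\mu}{t}$, i.e.\ the clock \emph{valuation} is left unchanged — because $(\assigntoval{(\pi\circ\mu)}{\pi(t)})(\x) = \pi(t) - \pi(\mu(\x))$, and here is where the defining property $\pi(x+1) = \pi(x)+1$ matters: it forces $\pi(x) - \pi(y)$ to depend on $x - y$ in a way that, crucially, preserves truth of atomic clock constraints $\x_i - \x_j \sim z$ and $\x_i \sim z$ for $z \in \Z$. (Strictly, $\pi$ need not be affine, so $\pi(t) - \pi(\mu(\x))$ need not equal $t - \mu(\x)$ as a real number, but it has the same integer part and the same fractional-part ordering relative to $0$; this is exactly the classical region-equivalence preserved by timed automorphisms, and it is what makes $\mu \models \varphi \iff \pi\circ\mu \models \varphi$ for every clock constraint $\varphi$, by structural induction on $\varphi$.) Granting this, condition (ii) is preserved, and the transition relation is invariant.

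The step I expect to require the most care is precisely the invariance of the satisfaction relation for clock constraints under $\pi$ — i.e.\ proving $\assigntoval{\mu}{\now} \models \varphi \iff \assigntoval{(\pi\circ\mu)}{\pi(\now)} \models \varphi$ — since this is where the integer-difference-preserving hypothesis on $\pi$ is used and where one must be careful that $\pi$ is not assumed affine. Everything else (configurations, final set, conditions (i) and (iii)) reduces to monotonicity and the pointwise nature of the action. Once all three ingredients are shown $\Pi$-invariant, $\semd A$ is invariant by definition, completing the proof.
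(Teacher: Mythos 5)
Your proof is correct and follows essentially the same route as the paper's, which is a short observation that a transition rule's clock constraint compares differences $t - \mu(\x)$ to integers, and that a timed automorphism preserves such integer comparisons, so the same rule applies to $\pi(c)$. What you add is a more careful decomposition into the three ingredients (configurations, transitions, final set), the explicit check of conditions (i) and (iii), and the remark that the reverse inclusions come from applying $\pi^{-1}$, all of which the paper leaves implicit. That is fine and arguably clearer.

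One small but real infelicity: the displayed claim $\assigntoval{(\pi\circ\mu)}{\pi(t)} = \assigntoval{\mu}{t}$ is false as stated. Since $\pi$ is not affine, $\pi(t)-\pi(\mu(\x))$ is in general not equal to $t-\mu(\x)$; it merely has the same integer floor, and the fractional part is zero iff the original fractional part is zero. You notice this yourself in the parenthetical, so the gap does not propagate, but you should not write an equality you then immediately disown. What you actually need — and what both your parenthetical and the paper's one-line argument rely on — is only the weaker fact that for $x\le y$ and $z\in\Z$, $y-x\sim z$ iff $\pi(y)-\pi(x)\sim z$ for each $\sim\in\set{=,<,\le,>,\ge}$, which follows from $\pi(x+1)=\pi(x)+1$ together with monotonicity, and this already handles both atom shapes $\x_i\sim z$ and $\x_i-\x_j\sim z$ (the latter reduces to $\mu(\x_j)-\mu(\x_i)\sim z$ and does not involve $t$ at all). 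You do not need the full strength of region equivalence; appealing to it is harmless but overshoots.
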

\noindent
By unrolling the definition of invariance in the previous fact,
we obtain that the set of configurations is invariant,
the set of transitions ${\goesto{}}$ is invariant,
and that the set of final configurations $F$ is invariant.
\begin{restatable}[Invariance of the language semantics]{fact}{factEquivLang}
    \label{fact:equivariant:lang}
    %
    The function $c \mapsto \langtsa A c$ from $\semd A$ to languages is invariant,
    i.e., for all timed permutations $\pi$,
    $\langtsa A {\pi(c)} = \pi(\langtsa A c)$.
\end{restatable}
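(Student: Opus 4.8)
The plan is to obtain this as a direct corollary of \cref{fact:equivariant:trans}, by transporting whole runs along a timed automorphism. First I would record the unrolled form of \cref{fact:equivariant:trans}: for every timed automorphism $\pi$, the set of configurations, the transition relation ${\goesto{}}$, and the set $F$ of final configurations are invariant; concretely, $c \goesto{a,t} c'$ holds in $\semd A$ if and only if $\pi(c) \goesto{a,\pi(t)} \pi(c')$ does, and $c' \in F$ if and only if $\pi(c') \in F$ (always under the standing convention that $\pi(\cdot)$ is only written where defined, and with $\pi(c)$ again a configuration).

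The next step is to lift this from single transitions to runs. Given a run $\rho = c_0 \goesto{a_1,t_1} \cdots \goesto{a_n,t_n} c_n$ of $\semd A$ over a timed word $w = (a_1,t_1)\cdots(a_n,t_n)$ with $c_0 = c$, applying $\pi$ pointwise produces $\pi(\rho) = \pi(c_0) \goesto{a_1,\pi(t_1)} \cdots \goesto{a_n,\pi(t_n)} \pi(c_n)$; by invariance of the transition relation each step is again a legal transition, and since $\pi$ is monotone, $\pi(w)$ is again a timed word, so $\pi(\rho)$ is a run of $\semd A$ over $\pi(w)$ starting in $\pi(c)$. Applying $\pi^{-1}$ (itself a timed automorphism) gives the converse direction, so $\rho \mapsto \pi(\rho)$ is a bijection between runs over $w$ from $c$ and runs over $\pi(w)$ from $\pi(c)$, and by invariance of $F$ it restricts to a bijection on the accepting runs.

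Hence $w \in \langtsa A c$ iff $\semd A$ has an accepting run over $w$ from $c$, iff it has an accepting run over $\pi(w)$ from $\pi(c)$, iff $\pi(w) \in \langtsa A {\pi(c)}$. From this equivalence, $\pi(\langtsa A c) \subseteq \langtsa A{\pi(c)}$ is immediate; for the reverse inclusion one uses that any $v \in \langtsa A{\pi(c)}$ lies in $\timedwordsafter\Sigma{\pi(\now)}$, so that $\pi^{-1}(v)$ has nonnegative first timestamp (as $\now \geq 0$ in any configuration) and is therefore a timed word $w$ with $\pi(w)=v$ and $w \in \langtsa A c$. This yields $\pi(\langtsa A c) = \langtsa A{\pi(c)}$, as required.

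The argument is routine and I do not expect a genuine obstacle; the only place calling for a little care is the bookkeeping of domains, since a timed automorphism need not preserve nonnegativity of individual timestamps, so one must check that $\pi$ and $\pi^{-1}$ indeed carry the timed words of the relevant languages to timed words, which is exactly what the containments in $\timedwordsafter\Sigma\now$ provide.
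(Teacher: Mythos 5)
Your proof is correct and takes essentially the same route as the paper's: both transport accepting runs pointwise along $\pi$ and apply $\pi^{-1}$ for the reverse inclusion. The only cosmetic difference is that you explicitly invoke \cref{fact:equivariant:trans} to justify that each transported step is again a transition, whereas the paper re-argues directly that the clock constraints (which compare integer differences) are preserved; your added care about $\pi^{-1}(v)$ remaining a (nonnegative) timed word is a nice touch but is also covered by the paper's standing convention that $\pi(x)$ is only written where defined.
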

%
%
\begin{restatable}[Invariance of the language of a configuration]{fact}{invarianceLang}
  \label{fact:invariantbase}
  \label{fact:invariantalways}
  The language $\langtsa A {p, \mu, \now}$ is
  $(\clockval c \cup\set{\now})$-invariant.  
  Moreover, if $A$ is always resetting,
  then $\langtsa A {p, \mu, \now}$ is
  $\clockval c$-invariant. 
\end{restatable}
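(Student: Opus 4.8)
The plan is to derive both statements from the two earlier invariance facts, \cref{fact:equivariant:trans} and \cref{fact:equivariant:lang}, by exhibiting, for a suitable set $S$, enough $S$-timed automorphisms that fix the configuration $c = (p, \mu, \now)$. Recall from \cref{fact:equivariant:lang} that $\langtsa A {\pi(c)} = \pi(\langtsa A c)$ for every timed automorphism $\pi$. So if $\pi$ fixes $c$, i.e.\ $\pi(c) = c$, then $\pi(\langtsa A c) = \langtsa A c$, which is exactly the statement that $\langtsa A c$ is invariant under $\pi$. Hence it suffices to show: for $S := \clockval c \cup \set{\now}$ (resp.\ $S := \clockval c$ in the always-resetting case), \emph{every} $\pi \in \Pi_S$ fixes $c$.

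First I would unfold $\pi(c) = (p, \pi \circ \mu, \pi(\now))$ from the definition of the action on configurations. For $\pi(c) = c$ we need $\pi(\now) = \now$ and $\pi(\mu(\x)) = \mu(\x)$ for every clock $\x \in \X$; that is, $\pi$ must fix every element of $\mu(\X) \cup \set{\now}$. Now $\clockval c = \mu(\X)$ by notation (the macro $\clockval{\cdot}$ denotes $\mu(\X)$), so taking $S = \mu(\X) \cup \set{\now}$ we get by definition of an $S$-timed automorphism that $\pi(t) = t$ for all $t \in S$, hence in particular $\pi$ fixes $\now$ and all $\mu(\x)$, so $\pi(c) = c$. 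Combined with the observation of the previous paragraph this gives the first claim, $(\clockval c \cup \set{\now})$-invariance of $\langtsa A c$.

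For the second claim, suppose $A$ is always resetting. Here I would invoke the structural remark made just before the fact in the text: in an always resetting \mNTA m, every reachable configuration $(p,\mu,\now)$ satisfies $\now \in \mu(\X)$. Strictly, the fact as stated is about an arbitrary configuration, not only reachable ones; but note that if $\now \notin \mu(X)$ the configuration has no run at all consistent with the always-resetting discipline reaching it, and more to the point, one should read the claim for configurations of the form arising in the semantics, i.e.\ with $\now \in \mu(\X)$ — in which case $\set{\now} \subseteq \mu(\X) = \clockval c$, so $\clockval c \cup \set{\now} = \clockval c$ and the second statement is literally a special case of the first. I would phrase it precisely: since $\now \in \mu(\X)$, the set $S = \clockval c$ already contains $\now$, so every $\clockval c$-timed automorphism is automatically a $(\clockval c \cup \set{\now})$-timed automorphism, and the first part applies verbatim.

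The only genuinely delicate point — the ``main obstacle'', though it is minor — is the interplay between \cref{fact:equivariant:lang} being stated for honest timed automorphisms $\pi : \Rreal \to \Rreal$ (which may send nonnegative reals to negatives) and the constraint that we only ever apply $\pi$ where it is defined; here there is nothing to check because $\Pi_S$ consists of genuine timed automorphisms and the action on configurations and languages is the one already fixed in the excerpt. I would therefore keep the proof to the three short steps above: (i) reduce to showing $\pi(c) = c$ for $\pi \in \Pi_S$ via \cref{fact:equivariant:lang}; (ii) observe $\pi(c) = c$ is immediate from $S \supseteq \mu(\X) \cup \set{\now}$ and the definition of $\Pi_S$; (iii) for the always-resetting case use $\now \in \mu(\X)$ to absorb $\set{\now}$ into $\clockval c$.
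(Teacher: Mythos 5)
Your proof is correct and follows essentially the same route as the paper's: you reduce to $\pi(c) = c$ for every $\pi$ fixing $\clockval c \cup \set{\now}$, invoke the invariance of the language semantics (\cref{fact:equivariant:lang}) to conclude $\pi(\langtsa A c) = \langtsa A c$, and absorb $\set{\now}$ into $\clockval c$ via $\now \in \mu(\X)$ for the always-resetting case. The extra care you take about reachable versus arbitrary configurations and the domain of $\pi$ is sound but not needed beyond what the paper tacitly assumes.
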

%
%

Since timed automorphisms preserve integer differences,
only the fractional parts of elements of $S \subseteq \R$ matter
for $S$-invariance, and hence it makes sense to restrict to subsets of the half-open interval $[0, 1)$.
Let $\fract{S} = \setof{\fract{x}}{x\in S} \subseteq [0,1)$ stand for the set of fractional parts of elements of $S$.
The following lemma shows that,
modulo the irrelevant integer parts,
there is always the least set $S$ witnessing $S$-invariance.
\begin{restatable}{lemma}{lemLeastSup}
  \label{lem:leastsup}
  For finite subsets $S, S' \subseteq \R$, 
  if a timed language $L$ is both $S$-invariant and $S'$-invariant, 
  then it is also $S''$-invariant 
  where $S'' = \fract{S} \cap \fract{S'}$.
\end{restatable}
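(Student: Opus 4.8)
The plan is to reduce the statement to a property of timed automorphisms acting on finite sets of reals, and prove that property by induction. Since a timed automorphism fixes $t$ exactly when it fixes $t+1$, the groups $\Pi_S,\Pi_{S'},\Pi_{S''}$ depend only on the fractional parts $\fract S,\fract{S'},\fract{S''}=\fract S\cap\fract{S'}$; so WLOG $S,S'\subseteq[0,1)$ and $S''=S\cap S'$. As $L$ is invariant under every element of $\Pi_S$ and of $\Pi_{S'}$, and invariance is preserved by composition and inversion, $L$ is invariant under the whole subgroup $G:=\langle\Pi_S\cup\Pi_{S'}\rangle$ of $\Pi_{S''}$. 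Hence it suffices to prove $(\star)$: for every $\pi\in\Pi_{S''}$ and every finite $U\subseteq\Rreal$ there is $\gamma\in G$ agreeing with $\pi$ on $U$. Indeed, given $(\star)$, for a timed word $w$ let $U$ be its set of timestamps; then $\pi(w)=\gamma(w)$ and, since $\gamma(L)=L$, we get $w\in L\iff\gamma(w)\in L\iff\pi(w)\in L$. Applying this also with $\pi^{-1}$ in place of $\pi$ yields $\pi(L)=L$; as $\pi\in\Pi_{S''}$ was arbitrary, $L$ is $S''$-invariant.

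I would prove $(\star)$ by induction on $|U|$; the empty case is trivial via $\gamma=\mathrm{id}$. Otherwise let $u_0=\min U$, obtain from the induction hypothesis some $\gamma'\in G$ agreeing with $\pi$ on $U\setminus\{u_0\}$, and look for $\delta\in G$ that fixes the finite set $F:=\pi(U\setminus\{u_0\})$ pointwise and maps $\gamma'(u_0)$ to $\pi(u_0)$; then $\gamma:=\delta\circ\gamma'$ works. By monotonicity of $\gamma'$ and of $\pi$, both $\gamma'(u_0)$ and $\pi(u_0)$ lie strictly below $\min F$, so the segment between them avoids $F$ and meets only finitely many ``walls'' of $S+\Z$ and of $S'+\Z$. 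No wall in $S''+\Z$ separates $\gamma'(u_0)$ from $\pi(u_0)$, since every element of $G$ and $\pi$ fixes $S''+\Z$ and is monotone, so these two points lie on the same side of any such wall. As $(S+\Z)\cap(S'+\Z)=S''+\Z$ (because $S,S'\subseteq[0,1)$), every wall on the segment lying in $S+\Z$ fails to lie in $S'+\Z$, and vice versa; so each such wall can be crossed by an automorphism fixing the \emph{other} of the two sets. Thus $\delta$ is obtained as a finite composition that alternates (i) sliding the moving point, by an automorphism fixing $S'$ (resp.\ $S$), into the last gap before the next $S'$-wall (resp.\ $S$-wall) — legal because that stretch contains no wall of the set being fixed — and (ii) hopping it across that wall by an automorphism fixing the complementary set; each piece is taken with small enough support to leave $F$ untouched, and after the finitely many walls have been crossed a final slide places the point exactly on $\pi(u_0)$.

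The heart of the argument is the geometric bookkeeping of this inductive step, and the choice $u_0=\min U$ is what makes it work: it guarantees that the moving point never has to pass any already-placed point of $F$, so there are no order or ``collision'' obstructions and every elementary slide or hop is realised by a single automorphism from $\Pi_S\cup\Pi_{S'}$ with arbitrarily small support. The two crossability observations — a wall of $(S'\setminus S'')+\Z$ is not fixed by the automorphisms fixing $S$ (hence crossable by one of them), and dually — together with the fact that no $(S''+\Z)$-wall separates source from target, are the only slightly delicate points; the degenerate case $S''=\emptyset$ (where there are simply no separating walls) is covered by the same reasoning.
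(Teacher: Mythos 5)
Your proposal is correct and takes a more self-contained route than the paper's. Both arguments reduce, after passing to fractional parts, to showing that the group $G$ generated by $\Pi_S \cup \Pi_{S'}$ approximates $\Pi_{S''}$ well enough, and both cross the ``walls'' in $(S+\Z)\cup(S'+\Z)$ by alternating $\Pi_S$- and $\Pi_{S'}$-elements, exploiting that no wall of $S''+\Z$ separates source from target. The paper, however, phrases this as a full decomposition claim ($\Pi_{S''}$ is generated by $\Pi_S\cup\Pi_{S'}$) and invokes Theorem~10.3 of the cited BKL12 paper to reduce to the case where $\fract S\setminus\fract{S'}$ and $\fract{S'}\setminus\fract S$ are singletons, so that only one point needs to be placed; you bypass that citation with a direct induction on the size of the finite set $U$ of timestamps, placing the points one at a time. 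This is more elementary and also weaker in exactly the right way: agreement on every finite set is all the statement needs, not group equality. One small imprecision worth tightening: when you argue that the segment between $\gamma'(u_0)$ and $\pi(u_0)$ ``avoids $F$'', what you actually need is that it avoids $F+\Z$, because any timed automorphism that fixes $F$ pointwise automatically fixes $F+\Z$ and therefore cannot carry a point across such a translate. This is true, and follows from the same monotonicity argument applied to $u+n$ for $u\in U\setminus\{u_0\}$ and $n\in\Z$: if $\pi(u+n)=\gamma'(u+n)$ lay strictly between $\gamma'(u_0)$ and $\pi(u_0)$ one would get $u+n<u_0$ and $u_0<u+n$ simultaneously. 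Note that this holds for an arbitrary $u_0\in U$, so the choice $u_0=\min U$ is a harmless convenience rather than the crux of the step.
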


The \emph{$S$-orbit} of an element $x \in X$
(which can be an arbitrary object on which the action of
timed automorphisms is defined)
is the set 
$\orbit S x = \setof{\pi(x) \in X}{\pi \in \Pi_S}$
of all elements $\pi(x)$ which can be obtained by applying some $S$-automorphism to $x$.
The \emph{orbit} of $x$ is just its $S$-orbit with $S = \emptyset$,
written $\orbit {} x$.
Clearly $x$ and $x'$ have the same $S$-orbit
if, and only if, $\pi(x) = x'$ for some 
 $\pi \in \Pi_S$.
For greedily resetting \NTA,
orbits of single configurations are in bijective correspondence with bounded regions.
\begin{fact} \label{fact:reg}
  %
  Assume $A$ is a greedily resetting \kmNTA k m.
  Two reachable configurations $(p, \mu, \now)$ and $(p, \mu', \now')$ of $A$ with the same control location $p$
  have the same orbit
  if, and only if,
  the corresponding clock valuations $\assigntoval \mu {\now}$ and
  $\assigntoval {\mu'} {\now'}$
  belong to the same $k,m$-region.
\end{fact}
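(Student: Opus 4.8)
The plan is to prove \cref{fact:reg} by establishing the two implications separately, relying on the structural properties of greedily resetting automata noted just before the statement (namely, that every reachable configuration $(p,\mu,\now)$ has $m$-bounded span $\mu(\X) \subseteq (\now-m, \now]$, and that clocks with integer difference are equal). The key observation is that these two properties together say exactly that the induced clock valuation $\assigntoval \mu \now$ satisfies $\assigntoval \mu \now(\x) \in [0,m)$ for every clock $\x$, and that whenever $\assigntoval \mu \now(\x) - \assigntoval \mu \now(\y) \in \Z$ we have $\assigntoval \mu \now(\x) = \assigntoval \mu \now(\y)$. I will call such a valuation \emph{$m$-normal}. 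So the real content is: for $m$-normal valuations $\nu, \nu'$, the configurations $(p,\nu\text{-induced})$ and $(p,\nu'\text{-induced})$ have the same orbit iff $\nu, \nu'$ lie in the same $k,m$-region.

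For the ``only if'' direction, suppose some $\pi \in \Pi_\emptyset$ maps $(p,\mu,\now)$ to $(p,\mu',\now')$, i.e.\ $\pi\circ\mu = \mu'$ and $\pi(\now)=\now'$. Since a timed automorphism preserves integer differences and is monotone, for any two clock values $a = \mu(\x), b = \mu(\y)$ (and also for $\now$ treated as one more coordinate) we have: $\pi(a)-\pi(b)$ and $a-b$ have the same sign, the same integer part, and the same ``is an integer'' status; consequently $\assigntoval{\mu'}{\now'}(\x) = \pi(\now)-\pi(\mu(\x))$ compares to any integer constant $z \in \{-m,\dots,m\}$ in exactly the same way as $\assigntoval\mu\now(\x) = \now-\mu(\x)$, and likewise for differences $\assigntoval{\mu'}{\now'}(\x) - \assigntoval{\mu'}{\now'}(\y)$ versus $\assigntoval\mu\now(\x)-\assigntoval\mu\now(\y)$. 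Hence $\assigntoval\mu\now$ and $\assigntoval{\mu'}{\now'}$ satisfy the same clock constraints with $k$ clocks and constants bounded by $m$, so they lie in the same $k,m$-region. (Here I do not even need $m$-normality; this direction is the general fact that timed automorphisms respect regions.)

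For the ``if'' direction, assume $\nu := \assigntoval\mu\now$ and $\nu' := \assigntoval{\mu'}{\now'}$ are in the same $k,m$-region; I must build a timed automorphism $\pi$ with $\pi\circ\mu = \mu'$ and $\pi(\now) = \now'$. The standard region argument gives a monotone fractional-part-preserving-in-order bijection: list the distinct fractional parts occurring among $\{\fract{\nu(\x)} : \x\in\X\} \cup \{0\}$ as $0 = f_0 < f_1 < \dots < f_r < 1$, and similarly $0 = f'_0 < f'_1 < \dots$ for $\nu'$; being in the same region forces $r$ to match and the same clocks to realise $f_i$ and $f'_i$, and also forces agreement of the integer parts (this is where $m$-boundedness matters — without it, two valuations in the same region need not have equal integer parts, but $m$-boundedness pins $\nu(\x)$ to a bounded range so the region does determine $\lfloor\nu(\x)\rfloor$; and $m$-normality rules out the degenerate case where an integer difference is not witnessed by equality). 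Define $\pi$ on the rationals $\{n + f_i : n\in\Z, 0\le i\le r\}$ by $n+f_i \mapsto n + f'_i$ and extend it to a monotone bijection of $\Rreal$ commuting with $+1$ by interpolating linearly (or any order-isomorphically) on each interval; this is a timed automorphism. One checks $\pi(\now) = \now'$: since $\now \in \mu(\X)$ actually need not hold in a merely greedily resetting automaton — wait, it holds in always resetting ones — so I instead treat $\now$ as an extra coordinate from the start, i.e.\ apply the region/automorphism construction to the extended tuple $\mu$ together with $\now$, noting $\now$ contributes fractional part $0 = f_0$ and that $\now' $ likewise has the required relation; then $\pi\circ\mu = \mu'$ and $\pi(\now)=\now'$ follow by construction, so $\pi$ witnesses that the two configurations share an orbit.

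The main obstacle is the ``if'' direction, specifically verifying that being in the same $k,m$-region — which a priori is a statement only about the \emph{valuations} $\nu,\nu'$ — is enough to pin down a single timed automorphism simultaneously matching all reset points $\mu(\x)$ \emph{and} the timestamp $\now$, with the correct integer parts. This is exactly where the hypotheses ``reachable'' and ``greedily resetting'' do the work: $m$-bounded span guarantees the region fixes the integer parts of $\nu$ (so no ambiguity of the form $\nu(\x)=0.5$ being in the same region as $\nu''(\x)=17.5$ can arise), and condition 2) above (integer-difference clocks are equal) guarantees the region data is consistent with a genuine automorphism rather than merely a region-equivalence. I would state and prove the needed normalisation as a small internal claim and then the construction of $\pi$ is routine interpolation.
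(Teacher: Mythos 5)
The paper does not actually supply a proof of this fact (it is stated without proof, treated as standard), so there is nothing to compare your argument against; I evaluate it on its own.

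Your ``only if'' direction is fine and self-contained: a timed automorphism preserves integer parts and ordering of differences, so the two induced valuations satisfy exactly the same diagonal and non-diagonal constraints with constants bounded by $m$, hence lie in the same region. The high-level plan for the ``if'' direction is also the right one, and you correctly identify the $m$-bounded-span property as the thing that pins down integer parts. But the construction of $\pi$ has a genuine gap: you define $\pi$ on the arithmetic progressions $\{n+f_i:n\in\Z\}$ where the $f_i$ are fractional parts of the \emph{valuations} $\nu(\x)=\now-\mu(\x)$ (together with $0$), yet the timed automorphism you need must act on the \emph{reset points} $\mu(\x)$ and on $\now$, which live in a different set of reals --- $\fract{\mu(\x)}$ is not $\fract{\nu(\x)}$, and $\now$ has fractional part $\fract{\now}$, not $0$. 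So the final claim that ``$\pi\circ\mu=\mu'$ and $\pi(\now)=\now'$ follow by construction'' does not follow: the $\pi$ you built is not even defined at $\mu(\x)$ or $\now$ in general. Two clean repairs: (a) define the partial map directly on $\mu(\X)\cup\{\now\}$ by $\mu(\x)\mapsto\mu'(\x)$, $\now\mapsto\now'$, and verify it is monotone, injective, and preserves integer parts of differences --- all of which reduce to the observation that $\now-\mu(\x)=\nu(\x)\in[0,m)$ and $\mu(\y)-\mu(\x)=\nu(\x)-\nu(\y)\in(-m,m)$, so the region determines these differences' integer parts and orderings; then extend this partial map to a full timed automorphism by interpolation and periodicity. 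Or (b) keep your $\sigma$ acting on valuation-space and set $\pi(y):=\now'-\sigma(\now-y)$, which is a timed automorphism whenever $\sigma$ is and yields exactly $\pi\circ\mu=\mu'$ and $\pi(\now)=\now'$. A smaller point: the appeal to condition~2 (``$m$-normality rules out the degenerate case\dots'') is not actually needed here; the $m$-bounded span already confines every pairwise difference $\nu(\x)-\nu(\y)$ to $(-m,m)$, so the region alone determines its integer part and whether it is an integer, with no residual ambiguity.
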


The \emph{$S$-closure} of a set $Y$, 
written $\closure S Y = \bigcup_{x \in Y} \orbit S x$,
is the union of the $S$-orbits of all its elements.
The following fact characterises invariance in term of closures.
\begin{fact} \label{fact:inv:closure}
  A set $Y$ is $S$-invariant if, and only if, $\closure S Y = Y$.
\end{fact}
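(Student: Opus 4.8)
The plan is to prove the two implications separately, using only that $\Pi_S$ is a group of permutations acting on the ambient set (so in particular it contains the identity $\id$ and is closed under inverses), together with the definitions of $S$-invariance and of the $S$-closure $\closure S Y = \bigcup_{x\in Y}\orbit S x$.

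First I would handle the forward direction. Assume $Y$ is $S$-invariant, i.e., $\pi(Y) = Y$ for every $\pi\in\Pi_S$. The inclusion $Y \subseteq \closure S Y$ holds because $\id\in\Pi_S$, hence $x = \id(x)\in\orbit S x\subseteq\closure S Y$ for every $x\in Y$. For the reverse inclusion $\closure S Y\subseteq Y$, take any $y\in\closure S Y$; by definition $y = \pi(x)$ for some $x\in Y$ and some $\pi\in\Pi_S$, and then $y = \pi(x)\in\pi(Y) = Y$ by $S$-invariance. Together these give $\closure S Y = Y$.

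Next I would handle the converse. Assume $\closure S Y = Y$ and fix an arbitrary $\pi\in\Pi_S$; I must show $\pi(Y) = Y$. For $\pi(Y)\subseteq Y$: for any $y\in Y$ we have $\pi(y)\in\orbit S y\subseteq\closure S Y = Y$. For $Y\subseteq\pi(Y)$: since $\pi^{-1}\in\Pi_S$ as well, the same argument applied to $\pi^{-1}$ yields $\pi^{-1}(Y)\subseteq Y$, and applying $\pi$ gives $Y = \pi(\pi^{-1}(Y))\subseteq\pi(Y)$. Hence $\pi(Y) = Y$, and as $\pi$ was arbitrary, $Y$ is $S$-invariant.

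This argument is entirely routine; the only point that needs to be recorded explicitly is that $\Pi_S$ is closed under composition and inverse (it is straightforward to check that the inverse of a monotone bijection $\pi$ with $\pi(x+1) = \pi(x)+1$ fixing $S$ has the same properties), since closure under inverse is exactly what makes the $Y\subseteq\pi(Y)$ inclusion go through. I do not anticipate any real obstacle; if anything, the mild subtlety is the convention that timed automorphisms are partial (the remark after the definition), but since $\closure S Y$ and $S$-invariance are phrased via $\pi(x)$ only where $\pi$ is defined on $x$, the same case analysis applies verbatim.
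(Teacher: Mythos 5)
Your proof is correct and follows essentially the same route as the paper: the forward direction unfolds the definitions (using $\id \in \Pi_S$), and the converse uses $\closure S Y = Y$ to get $\pi(Y) \subseteq Y$ and then closure of $\Pi_S$ under inverse to upgrade this to equality. The paper's proof is just a more terse version of the same argument.
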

%
%
\begin{proof}
  Only if direction follows by the definition of $S$-invariance. For the converse direction observe that
  $\closure S X = X$ implies $\pi(X)\subseteq X$ for every $\pi\in\Pi_S$.
  The opposite inclusion follows by closure of $S$-timed automorphisms under inverse:
  $\pi^{-1}(X) \subseteq X$, hence $X\subseteq \pi(X)$.
\end{proof}

\section{Decidability of \kDTA{k} and \kmDTA k m membership for \kNTA{1}}
\label{sec:upperbound}

In this section we prove \cref{thm:kDTA:memb} thus
establishing decidability of the \kDTA{k} and \kmDTA k m membership problems for \kNTA{1}.
%
%
Both results are shown using the following key characterisation of \kDTA k languages as a subclass of
\kNTA 1 languages.
In particular, this characterisation provides a small bound on the number of control locations of a \kDTA k equivalent to a given \kNTA 1 (if any exists).
%
%
\begin{lemma} \label{thm:k-DTA-char}
	Let $A$ be a \kmNTA{1} m with $n$ control locations,
	and let $k\in\N$.
	The following conditions are equivalent:
	\begin{enumerate}
		\item $\langts A = \langts B$ for some always resetting \kDTA{k} $B$.	 \label{p1}
		\item For every timed word $w$, there is $S\subseteq\R$ of size at most $k$ \st
		the last timestamp of $w$ is in $S$ and 
		$w^{-1} \langts A$ is $S$-invariant. \label{p2}
		\item $\langts A = \langts B$ for some always resetting \kmDTA{k} m $B$
			with at most $f(k,m,n) = \reg k m \cdot 2^{n (2km+1)}$ control locations
			($\reg k m$ stands for the number of $k,m$-regions). \label{p3}
	\end{enumerate}
\end{lemma}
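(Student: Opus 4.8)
The proof is a cycle of implications $\ref{p1} \Rightarrow \ref{p2} \Rightarrow \ref{p3} \Rightarrow \ref{p1}$, with the bulk of the work in $\ref{p2} \Rightarrow \ref{p3}$. The implication $\ref{p3} \Rightarrow \ref{p1}$ is trivial, since a \kmDTA k m is in particular a \kDTA k. For $\ref{p1} \Rightarrow \ref{p2}$: suppose $B$ is an always resetting \kDTA k with $\langts B = \langts A$. Fix a timed word $w$ with last timestamp $t_n$. Since $B$ is deterministic and total (\wlg), it has a unique run over $w$ ending in a reachable configuration $c = (p, \mu, t_n)$, and one checks that $w^{-1}\langts A = w^{-1} \langts B = \langtsa B c$. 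Take $S = \mu(\X)$, which has size at most $k$; since $B$ is always resetting and $c$ is reachable we have $t_n \in \mu(\X) = S$, and by \cref{fact:invariantalways} the language $\langtsa B c$ is $\mu(\X)$-invariant, i.e.\ $S$-invariant. This is exactly \ref{p2}.

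The heart is $\ref{p2} \Rightarrow \ref{p3}$, where I would explicitly construct the automaton $B$. The states of $B$ should be "profiles'' recording enough information to determine, given the word read so far, the $S$-orbit of the quotient language $w^{-1}\langts A$. First I would argue that $w^{-1}\langts A$ depends on $w$ only through the set of reachable configurations of $A$ after reading $w$ — equivalently, for the one-clock automaton $A$, through a function assigning to each control location the (finite, since $m$ is bounded) set of orbits of reset-points currently reachable there; there are at most $2^{n(2km+1)}$-ish such functions once we work modulo timed automorphisms and bound things by $m$, which is where the factor $2^{n(2km+1)}$ comes from. Combined with the $\reg k m$ choices for the current $k,m$-region of the $k$-clock valuation that $B$ maintains (its clocks should be reset so as to track exactly the fractional parts in the witnessing set $S$ from \ref{p2}, greedily à la \cref{fact:reg}), this yields the bound $f(k,m,n) = \reg k m \cdot 2^{n(2km+1)}$. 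Transitions of $B$ on reading $(a,t)$ update the $A$-profile by the (deterministic!) powerset-style step of $A$, and reset exactly those clocks of $B$ needed to keep the fractional parts of $B$'s reset-points equal to $\fract S$ for the minimal witnessing $S$ of the new word (using \cref{lem:leastsup} to speak of "the'' minimal such $S$). A final state is one whose $A$-profile contains a reachable accepting configuration. Then I must verify $\langts B = \langts A$: containment each way follows because $B$'s profile faithfully tracks the reachable configurations of $A$, so $B$ accepts $w$ iff $A$ does; determinism and the always-resetting property are built in by construction; and $B$ has at most $f(k,m,n)$ locations. Invariance property \ref{p2} is what guarantees that $k$ clocks of $B$ suffice: the orbit of $w^{-1}\langts A$ is determined by $\fract S$ with $|S|\le k$, so $B$ need never remember more than $k$ distinct fractional reset-points, and \cref{fact:reg} lets $B$ recover the orbit from its region.

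The main obstacle I expect is the $\ref{p2} \Rightarrow \ref{p3}$ construction: one must show that the information $B$ keeps is simultaneously (a) finite with the stated bound, (b) sufficient to decide membership, and (c) updatable deterministically using only $k$ clocks — and point (c) is delicate because the minimal witnessing set $S$ can change as the word grows (clocks may need to be "recycled'' when a fractional part drops out of $\fract S$), so some care is needed to show that a clock freed up at one step is genuinely available, i.e.\ that $|S|$ never needs to exceed $k$ even transiently. Here I would lean on \cref{lem:leastsup} to identify the canonical minimal $S$ at each prefix and on \cref{fact:invariantalways,fact:equivariant:lang} to transport invariance along transitions. The counting in (a) is routine once the right notion of $A$-profile (a subset, per control location, of the finitely many $m$-bounded reset-point orbits, of which there are on the order of $2km+1$) is fixed.
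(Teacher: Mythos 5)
Your plan matches the paper's proof: the same cycle of implications, the same use of \cref{fact:invariantalways} for $\ref{p1}\Rightarrow\ref{p2}$, and for $\ref{p2}\Rightarrow\ref{p3}$ the same construction of an always-resetting \kmDTA{k}{m} whose states record which control locations of $A$ occur in each orbit of the current reset-point set together with a $k$-clock assignment --- your ``profiles'' are exactly the paper's finite representations $(O,\mu)$ of its states $(X,\mu)$. The paper reaches this endpoint more gradually via the intermediate transition systems $\XX$ (powerset), $\YY$ (pre-states, with the $S'$-closure applied so that orbit-finiteness holds), and $\ZZ$ (states with clock assignments), and spells out the closure step that justifies replacing the exact macro-configuration by its orbit-level profile, but this is presentational: the bound $\reg{k}{m}\cdot 2^{n(2km+1)}$, the role of \cref{lem:leastsup} in picking a canonical minimal $S$, and the clock-recycling subtlety you flag are all the same.
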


\noindent
The proof of \Cref{thm:kDTA:memb} builds on \Cref{thm:k-DTA-char}
and on the following fact:
\begin{lemma} \label{lem:sep}
	The \kDTA k and \kmDTA k m membership problems are both decidable for \DTA languages.
\end{lemma}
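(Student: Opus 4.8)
The statement claims that given a \DTA $B$, we can decide whether its language is a \kDTA k language (and similarly whether it is a \kmDTA k m language). The plan is to reduce this to an equivalence check between $B$ and a deterministic automaton of bounded size, exploiting the fact that \DTA are closed under complement and have decidable equivalence, and that a \DTA language that happens to be a \kmDTA k m language is witnessed by a \DTA of computably bounded size.

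\medskip
\noindent\textbf{Step 1: bound the candidate \kmDTA k m acceptors.}
First I would argue that, once the number of clocks $k$, the maximal constant $m$, and the input alphabet $\Sigma$ are fixed, it suffices to enumerate \kmDTA k m with a bounded number of control locations. This is analogous to the Myhill--Nerode–style argument underlying \Cref{thm:k-DTA-char}: the "state" reachable by a total \kmDTA k m after reading a timed word $w$ is determined, up to orbit, by the control location together with the $k,m$-region of the current clock valuation; since constraints are region-definable and the number of regions $\reg k m$ is finite, a minimal deterministic acceptor need not have more control locations than some explicit function of $k$, $m$, and $|\Sigma|$. (For the unbounded-constant variant $\kDTA k$ one first observes that, if $\lang B$ is a $\kDTA k$ language at all, then the maximal constant needed is bounded by the maximal constant of $B$, since the region partition induced by $B$ already separates all behaviours that the deterministic acceptor must distinguish; this reduces the $\kDTA k$ case to the $\kmDTA k m$ case with $m$ = maximal constant of $B$.)

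\medskip
\noindent\textbf{Step 2: decide by enumeration and equivalence testing.}
Given the bound from Step~1, there are only finitely many \kmDTA k m $B'$ (up to renaming of locations) with at most that many control locations over the alphabet $\Sigma$. For each such $B'$, I would test whether $\lang{B'} = \lang B$. Since both $B$ and $B'$ are \DTA, this is decidable: \DTA are effectively closed under complement (take a total \DTA and swap final/non-final locations), effectively closed under intersection (product construction), and the resulting \DTA has a decidable emptiness problem \cite{AD94}. So $\lang{B'} = \lang B$ iff $\lang{B'} \cap \overline{\lang B} = \emptyset$ and $\overline{\lang{B'}} \cap \lang B = \emptyset$, both decidable. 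Then $\lang B$ is a \kmDTA k m language iff one of these finitely many equivalence tests succeeds.

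\medskip
\noindent\textbf{Expected main obstacle.}
The routine parts are the automata-theoretic closure and emptiness arguments. The delicate part is Step~1: establishing the computable bound on the number of control locations of a candidate $\kmDTA k m$ acceptor, and—for the unbounded-constant case—justifying that the maximal constant of the input \DTA $B$ already suffices. This is essentially a reprise of the characterisation machinery developed for \Cref{thm:k-DTA-char} (the $S$-invariance analysis of left quotients, with $|S|\le k$), specialised to the much simpler situation where the input automaton is already deterministic, so I expect it to go through cleanly but it is where the real content sits. An alternative, possibly slicker, route would be to invoke \Cref{thm:k-DTA-char} directly: a \DTA is in particular a \kNTA 1 language only if $B$ uses at most one clock, which need not hold, so this shortcut does not apply in general and the self-contained argument above is needed.
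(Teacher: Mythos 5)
Your approach is genuinely different from the paper's, but Step~1 has a concrete flaw and a substantial unproved gap.

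The paper's proof is a clean two-line reduction to the \emph{deterministic separability} problem: given a \DTA $A$, compute a \DTA $A'$ for the complement of $\lang A$ (possible since \DTA are effectively closed under complement), and then ask whether $\lang A$ and $\lang{A'}$ are \kDTA k-separable, which is decidable by \cite[Theorem~1.1]{ClementeLasotaPiorkowski:ICALP:2020}. The punchline is a one-line set-theoretic observation: a language is a \kDTA k language iff it and its complement are \kDTA k-separable. This is short precisely because the hard work (bounding candidate separators) is outsourced to the cited result.

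Your Step~1 claims the minimal \kmDTA k m acceptor for $\lang B$ (when one exists) has at most ``some explicit function of $k$, $m$, and $|\Sigma|$'' control locations. As stated this is false: already for $k=0$ (the regularity problem), a minimal DFA can be arbitrarily large over a fixed alphabet, so any bound must depend on $B$ itself, not just on the parameters. Even after correcting this to a bound in terms of $B$, you do not actually establish it; you appeal to the ``characterisation machinery'' of \Cref{thm:k-DTA-char}, but — as you yourself note — that theorem is specifically about \kNTA 1, and the input $B$ here is a \DTA with an a~priori unbounded number of clocks. The proof of \Cref{thm:k-DTA-char} leans on the one-clock structure (configurations $(p,u,\now)$ with a single reset point $u\le\now$, the bounded-span property of macro-configurations, etc.), so ``specialising'' it to the deterministic-but-multi-clock case is not a simplification; it is a different argument that would have to be carried out from scratch. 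Similarly, the claim that the maximal constant of $B$ suffices for the unbounded-constant variant is plausible (it is the analogue of the $1\Rightarrow 3$ implication in \Cref{thm:k-DTA-char}) but is asserted without proof. Step~2 (enumeration plus \DTA equivalence testing via complementation, product, and emptiness) is fine, so the whole weight of the lemma rests on Step~1 — and that is exactly the part you leave open. Until that bound is actually proved, the argument is incomplete, whereas the paper's separability reduction sidesteps the issue entirely.
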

\begin{proof}
	We reduce to a deterministic separability problem.
	Recall that a language $S$ \emph{separates} two languages $L, M$
	if $L \subseteq S$ and $S \cap M = \emptyset$.
	It has recently been shown that the $\kDTA k$ and $\kmDTA k m$ separability problems
	are decidable for \NTA \cite[Theorem 1.1]{ClementeLasotaPiorkowski:ICALP:2020},
	and thus, in particular, for \DTA.
	To solve the membership problem,
	given a \DTA $A$, the procedure computes a \DTA $A'$ recognising the complement of $\lang A$ and checks whether 
	$A$ and $A'$ are \kDTA k separable (resp.,~\kmDTA k m separable)
	by using the result above.
	It is a simple set-theoretic observation that
	$\langts A$ is a \kDTA k language if, and only if, the languages $\langts A$ and $\langts {A'}$ are separated
	by some \kDTA k language, and likewise for \kmDTA k m languages.
\end{proof}
\begin{proof}[Proof of \Cref{thm:kDTA:memb}]
	We solve both problems in essentially the same way.
	Given a \kmNTA 1 m $A$, 
	the decision procedure enumerates all always resetting \kmDTA{k+1} m $B$ with at most $f(k,m,n)$ 
	locations
	and checks whether $\langts A = \langts B$
	(which is decidable by \cite{OW04}).
	If no such \kDTA{k+1} $B$ is found, 
	the $\langts A$ is not an always resetting \kDTA {k+1} language, due to \Cref{thm:k-DTA-char},
	and hence forcedly is not a \kDTA k language either;
	the procedure therefore answers negatively.
	Otherwise, in case when such a \kDTA{k+1} $B$ is found, then \kDTA{k} membership
	(resp.~\kmDTA k m membership) test is performed on $B$, decidable due to~\Cref{lem:sep}.
\end{proof}

\begin{remark}[Complexity]
	The decision procedure for \kNTA{1} invokes the \HyperAckermann~subroutine of \cite{OW04} to check equivalence between a \kNTA 1 and a candidate \DTA.
	This is in a sense unavoidable,
    since we show in \Cref{lem:easy-undecidability} that the \kDTA k and \kmDTA k m membership problems are
    \HyperAckermann-hard for \kNTA{1}.
\end{remark}

In the rest of this section we present the proof of \Cref{thm:k-DTA-char}.
Let us fix a \kmNTA{1}{m} $A = \tuple{\Sigma, \L, \set{\x}, \I, \F, \Delta}$, where $m$ is the greatest constant used in clock constraints in $A$, and $k\in\N$. 
We assume w.l.o.g.~that $A$ is greedily resetting:
This is achieved by resetting the clock as soon as upon reading an input symbol
its value becomes greater than $m$ or is an integer $\leq m$;
we can record in the control location the actual integral value if it is $\leq m$, or a special flag otherwise.
Consequently, after every discrete transition the value of the clock is at most $m$,
and if it is an integer then it equals 0.

The implication \ref{p3}$\implies$\ref{p1} follows by definition.
For the implication \ref{p1}$\implies$\ref{p2} suppose, by assumption, 
$\langts A = \langts {B}$ for a total always resetting \kDTA{k} $B$.
Every left quotient $w^{-1} \langts A$ equals $\langtsa B c$ for some configuration $c$, hence
Point~\ref{p2} follows by \cref{fact:invariantalways}.
Here we use the fact that $B$ is always resetting
in order to apply the second part of \cref{fact:invariantalways};
without the assumption, we would only have $S$-invariance for sets $S$ of size at most $k+1$.

It thus remains to prove the implication \ref{p2}$\implies$\ref{p3},
which will be the content of the rest of the section.
Assuming Point~\ref{p2}, we
are going to define an always resetting \kmDTA k m $B'$ with clocks $\X = \set{\x_1, \dots, \x_{k}}$ and 
with at most $f(k,m,n)$ locations such that $\langts {B'} = \langts A$. 
We start from the timed transition system $\XX$ obtained by the finite powerset construction underlying
the determinisation of $A$, and then transform
this transition system gradually, while preserving its language, 
until it finally becomes isomorphic to the reachable part of $\semd {B'}$ for some \kmDTA k m $B'$.
As the last step we extract from this deterministic timed transition system a syntactic definition of $B'$ and prove equality
of their languages.
This is achievable due to the invariance properties witnessed by the
transition systems in the course of the transformation.

\para{Macro-configurations}
Configurations of the \kNTA 1 $A$ are of the form $c = (p, u, \now)$ where $u, \now \in \R$ and $u \leq \now$.
A \emph{macro-configuration} is a (not necessarily finite)
set $X$ of configurations $(p, u, \now)$ of $A$ which share the same value
of the current timestamp $\now$, which we denote as $\nowof X = \now$.
We use the notation 
$\langtsa A X := \bigcup_{c \in X} \langtsa A c$.
%
Let $\succe {a,t} X := \setof{c'\in\Conf A}{c \goesto{a,t} c' \text{ for some }c\in X}$
be the set of successors of configurations in $X$.
We define a deterministic timed transition system $\XX$ consisting of
the macro-configurations reachable in the course of determinisation of $A$.
Let $\XX$ be the smallest set of macro-configurations and transitions such that

\begin{itemize}
\item $\XX$ contains the initial macro-configuration: $X_0 = \setof{(p, 0, 0)}{p\in \I} \in \XX$;
\item $\XX$ is closed under successor: for every $X\in\XX$ and $(a,t)\in\Sigma\times\R$, there is a transition
$X \goesto {a,t} \succe {a,t} X$ in $\XX$.
\end{itemize}

\noindent
Due to the fact that $\semd A$ is finitely branching, i.e.~$\succe {a,t} {\set{c}}$ is finite for every fixed $(a,t)$,
all macro-configurations $X\in\XX$ are finite.
Let the final configurations of $\XX$ be $F_\XX = \setof{X\in\XX}{X\cap F \neq \emptyset}$.
\begin{claim} \label{claim:eqlangAX}
$\langtsa A X = \langtsa \XX {X}$ for every $X\in\XX$. In particular $\langts A = \langtsa \XX {X_0}$.
\end{claim}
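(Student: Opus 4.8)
The plan is to prove the claim by showing that the determinisation construction preserves languages, which is a standard subset-construction argument adapted to the reset-point semantics of timed automata. The key point is that $\XX$ is a deterministic timed transition system whose macro-configurations are precisely sets of configurations of $A$, and a run in $\XX$ tracks exactly the set of configurations reachable in $\semd A$ along a given timed word.

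\medskip

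First I would prove the general statement $\langtsa A X = \langtsa \XX X$ for every $X \in \XX$ by induction on the length of the timed word $w$. For the base case $w = \emptyset$: the empty word is accepted from $X$ in $\semd A$ iff $X$ contains a final configuration of $A$, i.e.~$X \cap F \neq \emptyset$, which by definition of $F_\XX$ is exactly the condition that $X \in F_\XX$, i.e.~that $\emptyword \in \langtsa \XX X$. For the inductive step, write $w = (a,t) \cdot v$ where $v$ has shorter length (and $t$ is at least $\nowof X$, otherwise neither side contains $w$ and we are done trivially). On the $A$ side, $w \in \langtsa A X$ iff there is $c \in X$ and a transition $c \goesto{a,t} c'$ with $v \in \langtsa A {c'}$; by definition of $\succe{a,t} X$, this holds iff there is $c' \in \succe{a,t} X$ with $v \in \langtsa A {c'}$, i.e.~iff $v \in \langtsa A {\succe{a,t} X}$. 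By the induction hypothesis applied to the macro-configuration $\succe{a,t} X \in \XX$ (which is in $\XX$ by closure under successors), this equals $v \in \langtsa \XX {\succe{a,t} X}$. Finally, since $\XX$ is deterministic and has the unique transition $X \goesto{a,t} \succe{a,t} X$, we have $v \in \langtsa \XX {\succe{a,t} X}$ iff $w \in \langtsa \XX X$, closing the induction.

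\medskip

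The particular case $\langts A = \langtsa \XX {X_0}$ then follows by unfolding definitions: the initial configurations of $A$ are exactly $(p,0,0)$ for $p \in \I$, so $\langts A = \bigcup_{p \in \I} \langtsa A {p,0,0} = \langtsa A {X_0}$, which by the general statement equals $\langtsa \XX {X_0}$, and $X_0$ is the (unique) initial macro-configuration of $\XX$.

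\medskip

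The main obstacle is really just bookkeeping rather than anything deep: one must be careful that the monotonicity constraint $t \geq \nowof X$ is handled consistently on both sides (a timed word whose first timestamp falls below the ``now'' value of $X$ is recognised by neither $\semd A$ nor $\XX$ starting from $X$, since runs only extend forward in time), and that $\succe{a,t} X$ genuinely lies in $\XX$ and is finite — the latter being exactly the observation already made in the text that $\semd A$ is finitely branching, so that $\XX$ consists only of finite macro-configurations and the construction is well-defined. No invariance properties are needed for this claim; they will only enter in the subsequent steps of the transformation of $\XX$.
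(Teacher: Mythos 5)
Your proof is correct and is exactly the standard subset-construction correctness argument (induction on word length, with a case split on whether $t \geq \nowof X$); the paper treats \cref{claim:eqlangAX} as routine and gives no proof at all, so there is nothing to diverge from. Your careful handling of the monotonicity constraint and of $\succe{a,t}X$ lying in $\XX$ is exactly the bookkeeping one would want spelled out.
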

For a macro-configuration $X$ 
we write 
$\macval {X} := \setof{u}{(p, u, \nowof X) \in X}\cup \set{\nowof X}$ to denote the reals appearing in $X$.
Since $A$ is greedily resetting, every macro-configuration
$X \in \XX$ satisfies
$\macval X \subseteq (\nowof X - m, \nowof X]$.
Whenever a macro-configuration $X$ satisfies this condition
we say that \emph{the span of $X$ is bounded by $m$}.

\para{Pre-states}
%
%
By assumption (Point 2),
$\langtsa A X$ is $S$-invariant for some $S$ of size at most $k$,
but the macro-configuration $X$ itself needs not be $S$-invariant in general.
Indeed, a finite macro-configuration $X\in\XX$ is $S$-invariant if, and only if, 
$\fract{\macval X} \subseteq \fract{S}$,
which is impossible in general when $X$ is arbitrarily large, its span is bounded (by $m$), and
size of $S$ is bounded (by $k$). 
Intuitively, in order to assure $S$-invariance we will replace $X$ by its $S$-closure $\closure S X$
(recall Fact~\ref{fact:inv:closure}).

A set $S\subseteq \R$ is \emph{fraction-independent}
if it contains no two reals with the same fractional part.
A \emph{pre-state} is a pair $Y = (X, S)$,
where $X$ 
is an $S$-invariant macro-state, 
and $S$ is a finite fraction-independent subset of $\macval X$ that contains $\nowof X$.
The intuitive rationale behind assuming the $S$-invariance of $X$ is that it implies, together with the bounded span of $X$
and bounded size of $S$,
that there are only finitely many pre-states,
up to timed automorphism. 
%
We define the deterministic timed transition system $\YY$ as the smallest set of pre-states
and transitions between them such that:

\begin{itemize}
\item $\YY$ contains the initial pre-state: $Y_0 = (X_0, \set{0}) \in \YY$; 
\item $\YY$ is closed under the closure of successor: for every $(X,S)\in\YY$ and $(a,t)\in\Sigma\times\R$,
there is a transition $(X, S) \goesto {a,t} (X', S')$, where
$S'$ is the least, with respect to set inclusion, subset of $S\cup\set{t}$ containing $t$ such that 
the language $L' = (a,t)^{-1} \langtsa A X = \langtsa A {\succe {a,t} X}$ is $S'$-invariant, and
$X' = \closure {S'} {\succe {a,t} X}$.
\end{itemize}

\begin{example}
\label{ex:B}
   Suppose $k=3$, $m=2$, $\succe {a,t} X = \set{(p,3.7,5),(q,3.9,5),(r,4.2,5)}$ and
   $S' = \set{3.7, 4.2, 5}$.
   Then $X' = \set{(p,3.7,5)}\cup\set{(q,t,5)\mid t \in (3.7, 4)}\cup\set{(r,4.2,5)}$.   
   $\nowof {X'} = 5$.
    A corresponding \emph{state} 
    is $(X', \mu')$, where $\mu' = \set{\x_1 \mapsto 3.7, \x_2 \mapsto 4.2, \x_3 \mapsto 5}$.
%
\end{example}

\noindent
Observe that the least such fraction-independent subset $S'$ exists due to the following facts: 
as $X$ is $S$-invariant, due to \cref{fact:equivariant:lang} so is its language
$\langtsa A X$, and hence $L'$ is necessarily $(S\cup\set{t})$-invariant;
by assumption (Point 2), $L'$ is $R$-invariant for some set $R\subseteq\R$ of size at most $k$ containing $t$;
let $T\subseteq \R$ be the least set given by Lemma~\ref{lem:leastsup}, i.e., 
$\fract{T}\subseteq \fract{S} \cap \fract{R}$; and finally let $S'\subseteq S$ be chosen so that
$\fract{S'} = \fract{T\cup\set{t}}$. 
Due to fraction-independence of $S$ the choice is unique, $S'$ is fraction-independent, and $t\in S'$.
Furthermore, the size of $S'$ is at most $k$.
By \cref{fact:equivariant:lang}, we deduce:
\begin{restatable}[Invariance of $\YY$]{claim}{claimEquivY}
\label{claim:equivY}
For every two transitions $(X_1, S_1) \goesto {a,t_1} (X'_1, S'_1)$ and $(X_2, S_2) \goesto {a,t_2} (X'_2, S'_2)$ in $\YY$
and a timed permutation $\pi$, if $\pi(X_1) = X_2$ and $\pi(S_1) = S_2$ and $\pi(t_1) = t_2$, then we have
$\pi(X'_1) = X'_2$ and $\pi(S'_1) = S'_2$.
\end{restatable}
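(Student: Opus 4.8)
The plan is to verify that each of the three operations defining a $\YY$-transition --- passing to successors, extracting the least invariance witness, and taking the closure --- is equivariant under timed automorphisms, and then to compose these equivariances. Two elementary facts are used throughout. First, timed automorphisms are closed under composition and inverse, so for any timed automorphism $\pi$ the conjugation identity $\setof{\pi\sigma\pi^{-1}}{\sigma\in\Pi_S} = \Pi_{\pi(S)}$ holds ($\pi\sigma\pi^{-1}$ is a timed automorphism fixing $\pi(S)$ pointwise whenever $\sigma\in\Pi_S$, and the reverse inclusion follows by symmetry). Second, $\pi$ preserves integer differences, hence the relation $\fract x = \fract y$, hence carries fraction-independent sets to fraction-independent sets. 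From the conjugation identity one obtains, for every object $x$, set $Z$, and timed language $L$,
\[
  \pi(\orb S x) = \orb {\pi(S)}{\pi(x)}, \qquad
  \pi(\closure S Z) = \closure {\pi(S)}{\pi(Z)}, \qquad
  L \text{ is } S\text{-invariant} \iff \pi(L) \text{ is } \pi(S)\text{-invariant}.
\]

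First I would treat successors. By \cref{fact:equivariant:trans} the transition relation of $\semd A$ is invariant, so from $\pi(X_1) = X_2$ and $\pi(t_1) = t_2$ we get $\pi(\succe{a,t_1}{X_1}) = \succe{a,t_2}{X_2}$. Writing $L_i' := \langtsa A {\succe{a,t_i}{X_i}}$ for the two successor languages, \cref{fact:equivariant:lang} then gives $\pi(L_1') = L_2'$.

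Next I would match up the invariance witnesses. By definition $S_1'$ (resp.\ $S_2'$) is the least subset of $S_1 \cup \set{t_1}$ (resp.\ $S_2 \cup \set{t_2}$) containing $t_1$ (resp.\ $t_2$) for which $L_1'$ (resp.\ $L_2'$) is invariant, and such a least subset exists by the argument given immediately before the claim. Using $\pi(S_1) = S_2$, $\pi(t_1) = t_2$, and the last of the three displayed facts (with $L := L_1'$), the map $T \mapsto \pi(T)$ is an inclusion-preserving bijection from the family of candidate witnesses defining $S_1'$ onto the family of candidate witnesses defining $S_2'$. Since an inclusion-preserving bijection sends the least element of its domain family to the least element of its range family, $\pi(S_1') = S_2'$.

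Finally, applying the equivariance of closures,
\[
  \pi(X_1') = \pi\bigl(\closure{S_1'}{\succe{a,t_1}{X_1}}\bigr) = \closure{\pi(S_1')}{\pi(\succe{a,t_1}{X_1})} = \closure{S_2'}{\succe{a,t_2}{X_2}} = X_2',
\]
which together with $\pi(S_1') = S_2'$ is the claim. The one delicate point is the transfer of \emph{minimality}: one must argue not merely that $\pi$ maps some invariance witness of $L_1'$ to some invariance witness of $L_2'$, but that it maps the least one to the least one. This is precisely why the proof is organised around the fact that $T\mapsto\pi(T)$ is an \emph{inclusion-preserving bijection} between the two candidate families, together with the uniqueness of the least witness established in the discussion preceding the claim.
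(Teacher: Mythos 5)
Your proposal is correct and follows essentially the same route as the paper's proof: both first deduce $\pi(\succe{a,t_1}{X_1}) = \succe{a,t_2}{X_2}$ from \cref{fact:equivariant:trans} and $\pi(L_1') = L_2'$ from \cref{fact:equivariant:lang}, then transfer the minimal invariance witness to get $\pi(S_1') = S_2'$, and finally deduce $\pi(X_1') = X_2'$. The only difference is presentational: the paper compresses the last two steps into a single ``therefore'', whereas you make explicit the conjugation identity $\pi\Pi_S\pi^{-1} = \Pi_{\pi(S)}$, the resulting equivariance of $S$-closures, and the observation that $T \mapsto \pi(T)$ is an inclusion-preserving bijection between the two families of candidate witnesses, so it carries the least element to the least element --- this is exactly the delicate point the paper leaves to the reader, and you are right to flag it.
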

Let the final configurations of $\YY$ be $F_\YY = \setof{(X,S)\in\YY}{X\cap \F \neq \emptyset}$.
By induction on the length of timed words it is easy to show:
\begin{claim} \label{claim:eqlangXY}
	$\langtsa \XX {X_0} = \langtsa \YY {Y_0}$.
\end{claim}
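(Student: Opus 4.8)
The plan is to prove Claim~\ref{claim:eqlangXY}, i.e.\ $\langtsa \XX {X_0} = \langtsa \YY {Y_0}$, by exhibiting a tight correspondence between the runs of the two deterministic timed transition systems and then invoking the language-invariance facts already established. Concretely, I would show by induction on the length of a timed word $w = (a_1,t_1)\cdots(a_n,t_n)$ the following statement: if $X_0 \goesto{a_1,t_1}\cdots\goesto{a_n,t_n} X_n$ is the (unique) run of $\XX$ on $w$ and $(X_0,S_0) \goesto{a_1,t_1}\cdots\goesto{a_n,t_n} (X_n', S_n)$ is the (unique) run of $\YY$ on $w$, then $\langtsa A {X_n} = \langtsa A {X_n'}$, and moreover $X_n' = \closure{S_n}{X_n}$ with $\langtsa A{X_n}$ being $S_n$-invariant. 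Since the final-configuration predicates are ``$X\cap\F\neq\emptyset$'' in both systems and taking the $S$-closure of a macro-configuration never changes whether it meets $\F$ (closures only add configurations with control locations already present, and in fact $X \subseteq \closure{S}{X}$, while conversely every added configuration $(p,u',\now)$ shares its location $p$ with some $(p,u,\now)\in X$), acceptance after reading $w$ coincides in $\XX$ and $\YY$; combined with determinism of both systems this gives the claimed language equality at $X_0$ and $Y_0$.

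For the inductive step, suppose $\langtsa A {X_{i}} = \langtsa A {X_i'}$, $X_i' = \closure{S_i}{X_i}$, and $\langtsa A {X_i}$ is $S_i$-invariant. Reading $(a_{i+1},t_{i+1})$ in $\XX$ produces $X_{i+1} = \succe{a_{i+1},t_{i+1}}{X_i}$, while in $\YY$ it produces $X_{i+1}' = \closure{S_{i+1}}{\succe{a_{i+1},t_{i+1}}{X_i'}}$ for the least suitable $S_{i+1}$. The key sublemma is that $\succe{a,t}{(\closure{S}{X})}$ and $\succe{a,t}{X}$ have the same language: one inclusion is immediate since $X \subseteq \closure S X$; for the other, use \cref{fact:equivariant:lang} together with the fact that every configuration of $\closure S X$ lies in $\orbit S c$ for some $c\in X$, and that an $S$-automorphism $\pi$ carrying $c$ to $c'$ extends to a bijection between the $(a,t)$-successors — but here one must be slightly careful, because $\pi$ need not fix $t$. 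This is precisely where the hypothesis that $\langtsa A {X_i}$ is $S_i$-invariant is used: the language $L' = (a,t)^{-1}\langtsa A {X_i} = \langtsa A {\succe{a,t}{X_i}}$ depends on $X_i$ only through its language, so $\langtsa A {\succe{a,t}{X_{i+1}}}$ and $\langtsa A {\succe{a,t}{X_{i+1}'}}$ agree; then taking $S_{i+1}$-closure of the latter does not change its language by \cref{fact:invariantalways} (or directly by \cref{fact:equivariant:lang} and the minimality/invariance properties spelled out just before \cref{claim:equivY}), and $L'$ is indeed $S_{i+1}$-invariant by the construction of $\YY$. This re-establishes all three inductive invariants at step $i+1$.

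The main obstacle I anticipate is the bookkeeping around the $S$-closure operation not being ``language-neutral'' in an entirely trivial way: one needs that passing from $X$ to $\closure S X$ adds only configurations whose individual languages are already subsumed — this follows because each added configuration is $\pi(c)$ for $c\in X$ and $\pi\in\Pi_S$, and $\langtsa A {\pi(c)} = \pi(\langtsa A c)$ by \cref{fact:equivariant:lang}, but since $L'$ (the relevant quotient) is $S$-invariant, applying $\pi$ to it does nothing, so $\langtsa A {\pi(c)}$ contributes nothing new after the quotient by the prefix already read. Making this precise requires carrying the ``language so far'' through the induction as I did above rather than reasoning configuration-by-configuration in isolation. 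Once that is set up, the rest is routine: determinism gives uniqueness of runs, the base case is $X_0 = $ the macro-configuration of initial configurations and $\langtsa A {X_0}$ is $\set 0$-invariant by \cref{fact:invariantalways} (all clocks and $\now$ equal $0$), so $X_0 = \closure{\set 0}{X_0}$, and the final-state predicate is preserved under closure as noted. Invoking Claim~\ref{claim:eqlangAX} to translate between $\langtsa A {-}$ and $\langtsa \XX{-}$, resp.\ $\langtsa \YY{-}$, then yields $\langtsa \XX {X_0} = \langtsa \YY {Y_0}$.
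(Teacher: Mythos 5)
Your high-level strategy is the right one, and it matches the (unwritten) argument the paper alludes to: induct on the length of the timed word and maintain the invariant $\langtsa A{X_n} = \langtsa A{X_n'}$ between the unique runs of $\XX$ and $\YY$. This invariant does survive the step: by \cref{fact:equivariant:lang} one has $\langtsa A{\closure S Y} = \closure S{\langtsa A Y}$, and since $L' = \langtsa A{\succe{a,t}{X_n'}}$ is $S_{n+1}$-invariant by the very choice of $S_{n+1}$, it follows that
$\langtsa A{X_{n+1}'} = \closure{S_{n+1}}{L'} = L' = (a,t)^{-1}\langtsa A{X_n'} = (a,t)^{-1}\langtsa A{X_n} = \langtsa A{X_{n+1}}$.
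For acceptance, the clean route is to note that for any macro-configuration $Y$, $Y\cap F\neq\emptyset$ if and only if $\varepsilon\in\langtsa A Y$; the language invariant then immediately aligns $F_\XX$ with $F_\YY$, and determinism of both systems gives the claim.

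The auxiliary structural invariant you also propose to carry, $X_n' = \closure{S_n}{X_n}$, is however false in general, and the subtlety you yourself flag (``$\pi$ need not fix $t$'') is exactly what breaks it: taking $(a,t)$-successors does not commute with $S$-closure. Concretely, $X_{n-1}'$ may contain some $\pi(c)$ with $\pi\in\Pi_{S_{n-1}}$ and $c\in X_{n-1}$ whose clock value at time $t$ happens to be an integer, so that (since $A$ is greedily resetting) the $(a,t)$-successor of $\pi(c)$ is $(q,t,t)$ with reset point $t$. That configuration lies in $\succe{a,t}{X_{n-1}'}\subseteq X_n'$, yet it can lie in $\closure{S_n}{X_n}$ only if it already lies in $X_n$, because $t\in S_n$ is fixed by every $\sigma\in\Pi_{S_n}$ --- and nothing forces any element of $X_{n-1}$ to hit $(q,t,t)$ at time $t$. (A variant of the same phenomenon produces control locations present in $X_n'$ but absent from $X_n$.) Since your acceptance argument (``closures only add configurations with control locations already present'') leans on precisely this structural identity, it has a genuine gap. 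The fix is simply to drop that invariant: decide acceptance via $\varepsilon\in\langtsa A{-}$ as above, and keep only the language-level invariant (together with its consequence that $\langtsa A{X_n}$ is $S_n$-invariant), which is all the induction actually needs.
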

Due to the assumption that $A$ is greedily resetting and due to Point 2,
in every pre-state $(X, S) \in \YY$ the span of $X$ 
is bounded by $m$ and the size of $S$ is bounded by $k$.

\para{States}
We now introduce \emph{states}, which are designed to be in one-to-one correspondence 
with configurations of the forthcoming \kDTA k $B'$. 
Intuitively, a state differs from a pre-state $(X, S)$ only by allocating the values from $S$ into $k$ clocks,
thus while a pre-state contains a set $S$, the corresponding state contains a clock assignment $\mu : \X \to \R$ 
with image $\mu(\X) = S$.

Let $\X = \set{\x_1, \dots, \x_k}$ be a set of $k$ clocks.
A \emph{state} is a pair $Z = (X, \mu)$,
where $X$ 
is a macro-configuration, 
$\mu : \X \to \macval X$ is a clock reset-point assignment,
$\mu(\X)$ is a fraction-independent set containing $\nowof X$,
and $X$ is $\mu(\X)$-invariant.
Thus every state $Z = (X, \mu)$ determines uniquely a corresponding pre-state 
$\sigma(Z) = (X, S)$ with $S=\mu(\X)$.
We define the deterministic timed transition system $\ZZ$
consisting of those states $Z$ \st $\sigma(Z) \in \YY$, and of transitions determined as follows:
$(X, \mu) \goesto {a,t} (X', \mu')$ if the corresponding pre-state has a transition 
$(X, S) \goesto {a,t} (X', S')$ in $\YY$, where $S = \mu(\X)$, and
\begin{align} \label{eq:mu}
\mu'(\x_i) \ := \ \begin{cases}
t &\text{ if } \mu(\x_i) \notin S' \text{ or } \mu(\x_i) = \mu(\x_j) \text{ for some } j>i\\
\mu(\x_i) & \text{otherwise.}
\end{cases}
\end{align}
Intuitively, the equation~\eqref{eq:mu} defines a deterministic update of the clock reset-point assignment $\mu$ 
that amounts to resetting ($\mu'(\x_i):=t$) all clocks
$\x_i$ whose value is either no longer needed (because $\mu(\x_i) \notin S'$), or
is shared with some other clock $x_j$, for $j>i$ and is thus redundant.
Due to this disciplined elimination of redundancy, 
knowing that $t\in S'$ and the size of $S'$ is at most $k$, 
we ensure that at least one clock is reset in every step. In consequence, $\mu'(\X) = S'$,
and the forthcoming \kDTA k $B'$ will be always resetting.
%
Using Claim~\ref{claim:equivY} 
we derive:
\begin{restatable}[Invariance of $\ZZ$]{claim}{claimEquivZ}
\label{claim:equivZ}
For every two transitions $(X_1, \mu_1) \goesto {a,t_1} (X'_1, \mu'_1)$ and 
$(X_2, \mu_2) \goesto {a,t_2} (X'_2, \mu'_2)$ in $\ZZ$
and a timed permutation $\pi$, if $\pi(X_1) = X_2$ and $\pi{\circ}{\mu_1} = \mu_2$ and $\pi(t_1) = t_2$, then we have
$\pi(X'_1) = X'_2$ and $\pi{\circ}{\mu'_1} = \mu'_2$.
\end{restatable}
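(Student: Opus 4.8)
The plan is to lift \cref{claim:equivY} from pre-states to states. First I would pass to the underlying pre-states: write $S_i = \mu_i(\X)$, so that $\sigma(X_i, \mu_i) = (X_i, S_i) \in \YY$, and, by the definition of the transitions of $\ZZ$, each state transition $(X_i, \mu_i) \goesto {a,t_i} (X'_i, \mu'_i)$ sits above a pre-state transition $(X_i, S_i) \goesto {a,t_i} (X'_i, S'_i)$ in $\YY$, which is moreover unique since $\YY$ is a deterministic timed transition system. From the hypothesis $\pi{\circ}\mu_1 = \mu_2$ we obtain $\pi(S_1) = \pi(\mu_1(\X)) = \mu_2(\X) = S_2$; combined with $\pi(X_1) = X_2$ and $\pi(t_1) = t_2$, \cref{claim:equivY} gives $\pi(X'_1) = X'_2$ and $\pi(S'_1) = S'_2$. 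In particular the first half of the claim, $\pi(X'_1) = X'_2$, is already established.

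It then remains to show $\pi{\circ}\mu'_1 = \mu'_2$, which I would verify clock by clock directly from the defining equation~\eqref{eq:mu}. Fix a clock index $i$. The key observation is that $\pi$, being a bijection, preserves both predicates occurring in the case split of~\eqref{eq:mu}: from $\pi(S'_1) = S'_2$ we get $\mu_1(\x_i) \notin S'_1 \iff \mu_2(\x_i) \notin S'_2$, and from injectivity of $\pi$ together with $\mu_2 = \pi{\circ}\mu_1$ we get $\mu_1(\x_i) = \mu_1(\x_j) \iff \mu_2(\x_i) = \mu_2(\x_j)$ for every $j > i$ (note that the index comparison $j>i$ is untouched, since $\pi$ acts on reset-point values, not on clock indices). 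Hence both automata fall in the same branch of~\eqref{eq:mu} at index $i$. In the resetting branch, $\mu'_1(\x_i) = t_1$ and $\mu'_2(\x_i) = t_2 = \pi(t_1) = \pi(\mu'_1(\x_i))$; in the other branch, $\mu'_1(\x_i) = \mu_1(\x_i)$ and $\mu'_2(\x_i) = \mu_2(\x_i) = \pi(\mu_1(\x_i)) = \pi(\mu'_1(\x_i))$. Either way $\pi(\mu'_1(\x_i)) = \mu'_2(\x_i)$, and since $i$ was arbitrary, $\pi{\circ}\mu'_1 = \mu'_2$.

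I do not expect a serious obstacle here: morally the argument just says that the clock-reallocation update~\eqref{eq:mu} commutes with timed automorphisms, which holds because that update only inspects equalities between reset points, membership in $S'$, and the fixed timestamp $t$, each of which a timed automorphism respects. The two points demanding (minor) care are that the pre-state transition underneath a given state transition is genuinely unique --- so that the set $S'_i$ to which \cref{claim:equivY} is applied is exactly the one used in~\eqref{eq:mu} --- which is immediate from determinism of $\YY$; and that the case distinction in~\eqref{eq:mu} transports correctly along $\pi$, for which injectivity of $\pi$ already suffices (monotonicity of $\pi$ is not needed in this step).
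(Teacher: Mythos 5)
Your argument is correct and follows essentially the same route as the paper: reduce to the pre-state level, invoke Claim~\ref{claim:equivY} to get $\pi(X'_1)=X'_2$ and $\pi(S'_1)=S'_2$, then observe that the update rule~\eqref{eq:mu} commutes with $\pi$. The paper compresses the last step into the one-line remark that ``the definition~\eqref{eq:mu} is invariant,'' whereas you spell out the clock-by-clock case analysis; this is a faithful unpacking, not a different argument.
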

Let the initial state be $Z_0 = (X_0, \mu_0)$, where $\mu_0(\x_i) = 0$ for all $\x_i \in \X$, and
let final states be $F_\ZZ = \setof{(X, \mu)\in \ZZ}{X \cap F \neq \emptyset}$. 
By induction on the length of timed words one proves:
\begin{claim} \label{claim:eqlangYZ}
$\langtsa \YY {Y_0} = \langtsa \ZZ {Z_0}$.
\end{claim}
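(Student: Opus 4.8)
The plan is to prove the claim by exhibiting the map $\sigma$, sending a state $Z=(X,\mu)$ to the pre-state $\sigma(Z)=(X,\mu(\X))$, as a language-preserving covering of $\YY$ by $\ZZ$, and then running a straightforward induction on the length of timed words. The starting observation is that both $\YY$ and $\ZZ$ are deterministic and total, so each of them has exactly one run over any given timed word $w$ from its designated initial state; it therefore suffices to show that the unique run of $\ZZ$ from $Z_0$ over $w$ ends in a state $Z$ if, and only if, the unique run of $\YY$ from $Y_0$ over $w$ ends in the pre-state $\sigma(Z)$, and that the two runs accept simultaneously.

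To set this up I would first check the base case, $\sigma(Z_0)=(X_0,\mu_0(\X))=(X_0,\set 0)=Y_0$, which holds because $\mu_0\equiv 0$. Next I would verify that $\sigma$ puts the transitions of the two systems in bijection: by the very definition of $\ZZ$, for a state $Z=(X,\mu)$ with $\sigma(Z)=(X,S)\in\YY$ the transition $(X,\mu)\goesto{a,t}(X',\mu')$ belongs to $\ZZ$ exactly when $\YY$ has the transition $(X,S)\goesto{a,t}(X',S')$, with $\mu'$ determined by~\eqref{eq:mu}; and since $\mu'(\X)=S'$ --- already recorded in the paragraph preceding the statement --- while $X'$ is $S'$-invariant because $(X',S')$ is a pre-state, the pair $(X',\mu')$ is again a state and $\sigma(X',\mu')=(X',S')$. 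Hence the transitions of $\ZZ$ are precisely the $\sigma$-preimages of those of $\YY$. Finally I would note that acceptance is preserved: $Z=(X,\mu)\in F_\ZZ$ iff $X\cap F\neq\emptyset$ iff $\sigma(Z)\in F_\YY$, directly from the definitions of $F_\ZZ$ and $F_\YY$.

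With these three ingredients the induction on $|w|$ is immediate: for $w=\emptyword$ it is the base case above, and for $w\cdot(a,t)$ both runs extend along the unique outgoing transition labelled $(a,t)$ from the state, resp.\ pre-state, they have reached, and these two transitions correspond under $\sigma$ by the second ingredient. By acceptance preservation the resulting runs are accepting together, so $w\in\langtsa\ZZ{Z_0}$ iff $w\in\langtsa\YY{Y_0}$, which is the claim. I do not expect any real obstacle here --- the argument is pure bookkeeping once the covering property of $\sigma$ is in place. The single delicate point is the identity $\mu'(\X)=S'$ for the update~\eqref{eq:mu}, but this has already been dealt with in the excerpt: it rests on $S'\subseteq S\cup\set t$, $t\in S'$, $\card{S'}\le k$, the fraction-independence of $S$, and the elimination of redundant clocks built into~\eqref{eq:mu}.
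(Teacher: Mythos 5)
Your proof is correct and follows the route the paper gestures at (induction on the length of the timed word), with the covering map $\sigma(X,\mu)=(X,\mu(\X))$ making the inductive step explicit. The key bookkeeping facts you invoke---$\sigma(Z_0)=Y_0$, the identity $\mu'(\X)=S'$, the $S'$-invariance of $X'$, and the matching acceptance conditions---are exactly what makes the induction go through, so there is nothing to add.
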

%
%
In the sequel we restrict $\ZZ$ to states reachable from $Z_0$.
In every state $Z  = (X, \mu)$ in $\ZZ$,
we have $\nowof X \in \mu(\X)$.
This will ensure the resulting \kDTA k $B'$ to be always resetting.

\para{Orbits of states}

While a state is designed to correspond to a configuration of the forthcoming \kDTA k $B'$,
its orbit is designed to play the role of control location of $B'$. 
We therefore need to prove that the set of states in $\ZZ$ is orbit-finite, i.e., 
the set of orbits 
$\setof{\orbit {} Z}{Z\in\ZZ}$
is finite and its size is bounded by $f(k,m,n)$.
We start by deducing an analogue of Fact~\ref{fact:reg}:
\begin{claim} \label{claim:reg}
For two states $Z = (X, \mu)$ and $Z' = (X', \mu')$ in $\ZZ$, their clock assignments are in the same orbit,
i.e., $\pi{\circ}\mu = \mu'$ for some $\pi\in\Pi$, if, and only if,
the corresponding  clock valuations $\assigntoval \mu {\nowof X}$ and
$\assigntoval {\mu'} {\nowof {X'}}$
belong to the same $k,m$-region.
\end{claim}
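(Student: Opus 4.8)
The statement is the exact analogue, for states in $\ZZ$, of \cref{fact:reg}, and the plan is to prove it in the same way, with the structural invariants of $\ZZ$ established above taking over the role played there by greedy resetting. First I would collect these invariants: for every state $Z = (X, \mu) \in \ZZ$ one has $\nowof X \in \mu(\X)$ and $\macval X \subseteq (\nowof X - m, \nowof X]$, hence $\nowof X = \max \mu(\X)$ and the induced valuation $v := \assigntoval \mu {\nowof X}$ takes all its values in $[0, m)$; moreover $\mu(\X)$ is fraction-independent, i.e.\ any two clocks with $\mu(\x_i) - \mu(\x_j) \in \Z$ actually satisfy $\mu(\x_i) = \mu(\x_j)$. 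The same holds for $Z' = (X', \mu')$ and $v' := \assigntoval {\mu'}{\nowof{X'}}$. Throughout I would use the elementary property of timed automorphisms that, for all reals $x, y$, the difference $\pi(x) - \pi(y)$ lies in the same integral interval as $x - y$ (equals the same integer, or lies strictly between the same two consecutive integers), which is immediate from monotonicity together with $\pi(x+1) = \pi(x)+1$. I would also use the standard characterisation of $k,m$-regions: two valuations lie in the same $k,m$-region iff they satisfy the same atomic constraints $\x_i \sim z$ and $\x_i - \x_j \sim z$ with $\abs z \le m$.

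For the ``only if'' direction, assume $\pi \circ \mu = \mu'$ with $\pi \in \Pi$. Since $\pi$ is a monotone bijection it preserves maxima of finite sets, so $\pi(\nowof X) = \pi(\max \mu(\X)) = \max \mu'(\X) = \nowof{X'}$; consequently $v'(\x_i) = \pi(\nowof X) - \pi(\mu(\x_i))$ and $v'(\x_i) - v'(\x_j) = \pi(\mu(\x_j)) - \pi(\mu(\x_i))$ for all $i, j$. By the property of timed automorphisms above, each coordinate and each pairwise difference of $v'$ lies in the same integral interval as the corresponding one of $v$; since $v, v' \in [0, m)$, constants of absolute value at most $m$ already pin these integral intervals down, so $v$ and $v'$ satisfy exactly the same atomic constraints $\x_i \sim z$ and $\x_i - \x_j \sim z$ with $\abs z \le m$, hence lie in the same $k, m$-region.

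For the ``if'' direction, assume $v$ and $v'$ lie in a common $k, m$-region $R$. I would first define $g : \mu(\X) \to \mu'(\X)$ by $\mu(\x_i) \mapsto \mu'(\x_i)$: this is well defined because $\mu(\x_i) = \mu(\x_j)$ gives $v(\x_i) = v(\x_j)$, so $R$ contains the constraint $\x_i - \x_j = 0$ and thus $\mu'(\x_i) = \mu'(\x_j)$; and it is an order isomorphism because $R$ also records, for all $i, j$, the integral interval of $\mu(\x_i) - \mu(\x_j) = v(\x_j) - v(\x_i) \in (-m, m)$, so these intervals agree for $\mu$ and $\mu'$ and in particular so does the sign. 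I would then extend $g$ to a timed automorphism by the standard interpolation argument for the structure $(\Rreal, {<}, +1)$: list the distinct values $a_1 < \dots < a_\ell$ of $\mu(\X)$, subtract from each $a_p$ the integer $\lfloor a_p - a_1 \rfloor$ to fold the $a_p$'s into a single half-open unit interval, fold the matching values $b_p = g(a_p)$ of $\mu'(\X)$ by the same integer offsets (the same offsets arise, by the integral-interval agreement on the pairs $(p,1)$), pick a monotone bijection of that unit interval sending each folded $a_p$ to the corresponding folded $b_p$, and extend periodically via $\pi(x+1) = \pi(x)+1$; fraction-independence ensures the folded values are genuinely distinct, so the interpolation is unobstructed. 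The resulting $\pi \in \Pi$ satisfies $\pi(\mu(\x_i)) = \mu'(\x_i)$ for all $i$, i.e.\ $\pi \circ \mu = \mu'$.

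The only step that carries real content is the extension of the partial order isomorphism $g$ to a full timed automorphism, and the crucial input there is the bounded span of $X$ (and of $X'$): because all reset points of $\mu$ lie within a window of length $< m$, the region constants of absolute value $\le m$ already see every relevant integer difference, so $R$ genuinely carries all the combinatorial data needed to transport $\mu$ onto $\mu'$. Everything else — the invariants of $\ZZ$, the well-definedness of $g$, and the translation between region constraints on $v$ and the order/difference structure of $\mu$ — is routine and exactly parallels the proof of \cref{fact:reg}.
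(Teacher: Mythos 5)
The paper states this claim without a written proof (it is labelled merely as an analogue of \cref{fact:reg}, which is also unproven), so there is no authorial argument to compare against; evaluated on its own terms, your proof is correct. The invariants you extract do hold for states of $\ZZ$: the pre-state condition gives $\nowof X \in \mu(\X)$, and $\mu(\X) \subseteq \macval X \subseteq (\nowof X - m, \nowof X]$ gives $\nowof X = \max\mu(\X)$ and $v \in [0,m)^\X$, while fraction-independence of $\mu(\X)$ is part of the definition of a state. Given that, the only-if direction is an immediate consequence of timed automorphisms preserving integral intervals of differences, and the bounded span ensures constants of absolute value at most $m$ already record all relevant intervals. In the if direction, your partial map $g$ on $\mu(\X)$ is well defined and order preserving by the diagonal region constraints, and the folding construction is sound: the region fixes $\lfloor a_p - a_1\rfloor = \lfloor b_p - b_1\rfloor$, so the two folded configurations in their respective unit windows have identical order types (and by fraction-independence consist of $\ell$ distinct points each), whence a monotone interpolation fixing the window endpoints extends periodically to a timed automorphism $\pi$ with $\pi\circ\mu = \mu'$.
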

(In passing note that, since in every state $(X, \mu)$ in $\ZZ$ the span of $X$ is bounded by $m$, 
only bounded $k,m$-regions can appear in the last claim. 
Moreover, in each of $k,m$-regions one of clocks equals $0$.)
The action of timed automorphisms on macro-configurations and clock assignments is extended to states as $\pi(X, \mu) = (\pi(X), \pi{\circ}\mu)$.
Recall that the orbit of a state $Z$ is defined as $\orbit {}  {Z} = \setof{\pi(Z)}{\pi \in \Pi}$.
\begin{claim}
The number of orbits of states in $\ZZ$ is bounded by $f(k,m,n)$.
\end{claim}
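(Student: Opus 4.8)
The plan is to bound the number of orbits of states $Z = (X, \mu)$ by separately bounding the data that an orbit records. Recall that a state consists of a macro-configuration $X$ with span bounded by $m$ together with a fraction-independent clock assignment $\mu : \X \to \macval X$ whose image contains $\nowof X$ and renders $X$ invariant under $\mu(\X)$-timed automorphisms. By \cref{claim:reg}, the orbit of the clock assignment $\mu$ is determined exactly by the $k,m$-region of the valuation $\assigntoval \mu {\nowof X}$, and since the span is bounded by $m$ only bounded $k,m$-regions occur; so the ``geometric'' part of an orbit ranges over at most $\reg k m$ possibilities. It therefore remains to count, once the region of $\mu$ is fixed, how many orbits of macro-configurations $X$ are compatible with it.

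**The key step** is the observation that, because $X$ is $\mu(\X)$-invariant and $\mu(\X)$ is fraction-independent of size at most $k$, the orbit of $X$ (relative to the already-fixed clock assignment, i.e.\ the orbit of the pair) is completely determined by a bounded amount of combinatorial data. Concretely, $\fract{\macval X} \subseteq \fract{\mu(\X)}$, so every configuration $(p, u, \nowof X) \in X$ has $u$ lying in one of the at most $k$ ``fractional classes'' determined by $\mu(\X)$; and since the span of $X$ is bounded by $m$, within each fractional class $u$ can take at most $2m+1$ distinct integer offsets (the values in $(\nowof X - m, \nowof X]$ congruent to a fixed fractional part). Hence each configuration in $X$ is described, up to the action of $\Pi_{\mu(\X)}$, by a control location $p \in \L$ together with one of at most $k(2m+1)$ ``slots'' for its reset-point value; I would be slightly more careful and note that, relative to $\mu$, a slot is one of the $2km+1$ values in $\mu(\X) - \{0, 1, \dots, m\}$ intersected with the admissible range (counting also the $\nowof X$ slot). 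A macro-configuration is then determined, up to orbit, by \emph{which} pairs $(p, \text{slot})$ occur, i.e.\ by a subset of $\L \times \{\text{slots}\}$, of which there are at most $2^{n(2km+1)}$.

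**Putting it together**, a state's orbit is pinned down by (i) a bounded $k,m$-region for $\mu$, contributing a factor $\reg k m$, and (ii) for that region, a subset of $\L \times \{\text{slots}\}$ recording the configurations of $X$, contributing a factor $2^{n(2km+1)}$; multiplying gives the claimed bound $f(k,m,n) = \reg k m \cdot 2^{n(2km+1)}$. I would organise the argument as: first fix $\pi$ sending $\mu$ to a canonical representative of its region (possible by \cref{claim:reg}), then observe that $\pi(X)$ is an $S$-invariant macro-configuration with $S$ the canonical image of $\mu(\X)$, then count such macro-configurations by the slot argument, and finally recombine.

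**The main obstacle** I anticipate is making the slot-counting bookkeeping precise: one must verify that $\mu(\X)$-invariance of $X$ plus fraction-independence really does reduce $X$ to a \emph{set} of $(p, \text{slot})$ pairs with no further structure — in particular that two configurations of $X$ sharing a control location and a slot are genuinely identified, and that once the region of $\mu$ is fixed there is a single canonical choice of representative so the counts do not overlap. I would also double-check the exact constant $2km+1$ versus $k(2m+1)$ to match $f(k,m,n)$ as stated; the discrepancy is an artefact of whether one counts the $\nowof X$ value and the boundary of the span, and the statement's $2km+1$ is obtained by noting that the relevant values lie in $\bigcup_{i=0}^{m} (\mu(\X) - i)$ restricted to the span, which has at most $km + (km+1) = 2km+1$ elements once one fraction is forced to be $0$. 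Everything else is a routine product of the two independent counts.
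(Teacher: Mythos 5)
Your high-level decomposition matches the paper's: bound the orbit of the clock assignment $\mu$ by $\reg k m$ via \cref{claim:reg}, then show that the residual data determining $X$ up to $\mu(\X)$-automorphism is a subset of $\L\times\{\text{slots}\}$ for some set of at most $2km+1$ slots, and multiply. However, there is a genuine gap in your execution of the second step. Your asserted inclusion $\fract{\macval X}\subseteq\fract{\mu(\X)}$ is false: the macro-configurations appearing in $\ZZ$ arise as $S$-closures of finite sets, and those closures contain entire \emph{open intervals} of reset-point values. Concretely, in \cref{ex:B} we have $X' = \{(p,3.7,5)\}\cup\{(q,t,5)\mid t\in(3.7,4)\}\cup\{(r,4.2,5)\}$ and $\mu'(\X)=\{3.7,4.2,5\}$, so $3.8\in\macval{X'}$ while $\fract{3.8}=0.8\notin\fract{\mu'(\X)}=\{0.7,0.2,0\}$. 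Consequently your picture of a configuration's reset point $u$ as confined to one of finitely many \emph{values} (a ``fractional class'' shifted by an integer offset) does not hold, and the subsequent slot arithmetic rests on a wrong model of $\macval X$.

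The correct idea — which is the content of the inner Fact in the paper's proof — is that for a finite fraction-independent $S$, the $S$-orbit of a real is either a singleton (if it lies in $S$) or an open interval whose endpoints are of the form $t+z$ with $t\in S$, $z\in\Z$. Since the span of $X$ is bounded by $m$, only $z\in\{-m,\dots,m\}$ matters, yielding at most $km+1$ endpoints $e_1<\dots<e_{l+1}$ in the relevant range and hence at most $2l+1\leq 2km+1$ \emph{orbit cells} $\{e_1\},(e_1,e_2),\{e_2\},\dots$. The slots you need are these orbit cells, not discrete values; and $\mu(\X)$-invariance of $X$ then guarantees precisely what you flagged as ``the main obstacle'': if $(p,u,\nowof X)\in X$ and $u'$ is in the same cell, then $(p,u',\nowof X)\in X$, so $X$ is indeed determined by the set of (location, cell) pairs it realizes. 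Note also that the $2km+1$ is not an artefact of edge effects at $\nowof X$ or the span boundary, as you conjectured; it comes from the alternation of singleton and open-interval cells ($l+1$ of the former, $l$ of the latter). With the notion of slot corrected, your argument recombines into the paper's proof.
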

\begin{proof}
	We finitely represent a state $Z = (X, \mu)$, relying on the following general fact.
	\begin{fact}
		For every $u \in\R$ and $S\subseteq \R$, 
		the $S$-orbit\footnote{The orbits of states $Z$ should not be confused with 	$S$-orbits of individual reals $u\in\R$.} 
		$\orbit S u$ 
		is either the singleton $\set{u}$ (when $u\in S$) or an open interval
		with ends-points of the form $t + z$ where $t \in S$ and $z\in\Z$ (when $u\notin S$).
	\end{fact}
	We apply the fact above to $S = \mu (\X)$.
	In our case the span of $X$ is bounded by $m$,
	and thus the same holds for $\mu (\X)$.
	Consequently, the integer $z$ in the fact above
	always belongs to $\set{-m, -m{+}1, \dots, m}$.
	In turn, $X$ splits into disjoint $\clockval X$-orbits 
	$\orbit {\clockval X} u$ consisting of open intervals 
	separated by endpoints of the form $t + z$
	where $t \in \mu(\X)$ and $z\in\set{-m, -m{+}1, \dots, m}$.

\begin{example}
	Continuing Example~\ref{ex:B}, the endpoints are $\set{3, 3.2, 3.7, 4, 4.2, 4.7, 5}$, as shown in the illustration:
	\begin{center}
		\includegraphics[scale=0.5]{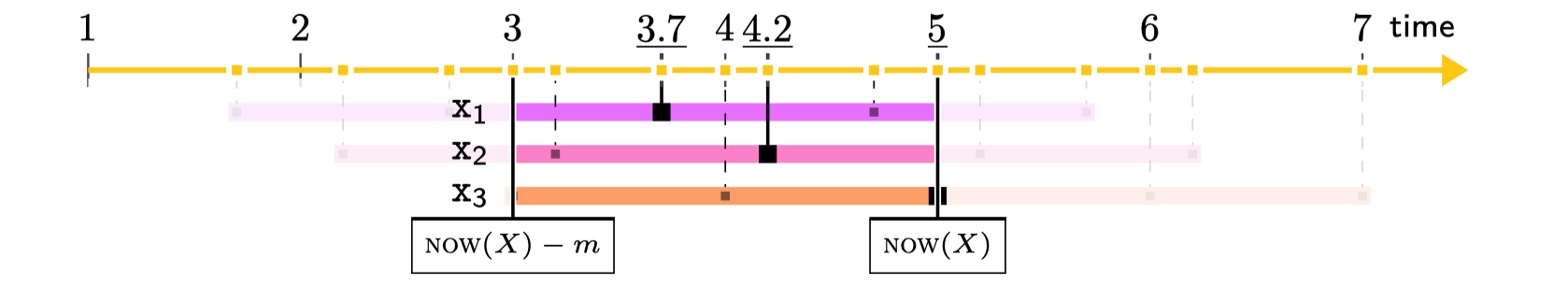}
	\end{center}
\end{example}

	\noindent
	Recall that $\mu(\X)$ is fraction-independent.
	Let $e_1 < e_2 < \dots < e_{l+1}$ be all the endpoints of open-interval orbits ($l \leq km$), and let 
	$o_1, o_2, o_3, \dots \ := \ \set{e_1}, (e_1, e_2), \set{e_2}, \dots$ be the consecutive $S$-orbits
$\orbit {\mu(\X)} u$ of elements $u\in\clockval \X$. 
The number thereof is $2l+1 \leq 2km+1$.
The finite representation of $Z = (X, \mu)$ consists of 
the pair $\pair{O}{\mu}$, where
\begin{align} \label{eq:O}
	O = \set{(o_1, P_1), \dots, (o_{2l+1}, P_{2l+1})}
\end{align}
assigns to each orbit $o_i$ the set of locations 
$
P_i  = \setof{p}{(p, u, \now)\in X \text{ for some } u\in o_i} \subseteq \L,
$
(which is the same as
$
P_i = \setof{p}{(p, u, \now)\in X \text{ for all } u\in o_i}
$
since $X$ is $\mu(\X)$-invariant, and hence $\mu(\X)$-closed).
Thus a state $Z = (X, \mu)$ is uniquely determined by the sequence $O$ as in~\eqref{eq:O}
and the clock assignment $\mu$.

We claim that the set of all the finite representations $(O, \mu)$, as defined above, is orbit-finite.
Indeed, the orbit of $(O, \mu)$ is determined by the orbit of $\mu$ and the sequence
\begin{align} \label{eq:P}
P_1, \ P_2, \ \ldots, \ P_{2km+1}
\end{align}
induced by the assignment $O$ as in~\eqref{eq:O}.
Therefore, the number of orbits is bounded by the number of orbits of $\mu$
(which is bounded, due to Claim~\ref{claim:reg}, by $\reg k m$)
times the number of different sequences of the form~\eqref{eq:P} 
(which is bounded by $(2^n)^{2km+1}$).
This yields the required bound $f(k, m, n) = \reg k m \cdot 2^{n(2km +1)}$.
\end{proof}

\para{Construction of the \DTA}  
As the last step we define a \kDTA {k} $B' = \tuple{\Sigma, \L', \X, \set{o_0}, \F', \Delta'}$ 
such that the reachable part of $\semd {B'}$ is isomorphic to $\ZZ$.
Let locations $\L' = \setof{\orbit {} Z}{Z\in\ZZ}$ be orbits of states from $\ZZ$, the initial location be the orbit $o_0$ of $Z_0$, and final locations 
$\F' = \setof{\orbit{}{Z}}{Z\in F_\ZZ}$ be orbits of final states.
A transition $Z = (X, \mu) \goesto{a, t} (X', \mu') = Z'$ in $\ZZ$
induces a transition rule in $B'$
\begin{align} \label{eq:trofB}
	\transition{o}{a}{\psi}{\Y}{o'} \in \Delta'
\end{align}
whenever $o = \orbit{}{Z}$, $o' = \orbit{}{Z'}$,
$\psi$ is the unique $k,m$-region satisfying $\assigntoval \mu t \in \semlog{\psi}$,
and $\Y = \setof{\x_i\in\X}{\mu'(\x_i) = t}$.
The automaton $B'$ is indeed a \DTA since $o$, $a$ and $\psi$ uniquely determine $\Y$ and $o'$:
\begin{claim}
Suppose that two transitions 
$(X_1, \mu_1) \goesto {a,t_1} (X'_1, \mu'_1)$ and $(X_2, \mu_2) \goesto {a,t_2} (X'_2, \mu'_2)$ in $\ZZ$
%
induce transition rules $\transition{o}{a}{\psi}{\Y_1}{o'_1}, \transition{o}{a}{\psi}{\Y_2}{o'_2} \in \Delta'$
with the same source location $o$ and constraint $\psi$, i.e, 
\begin{align} \label{eq:psi}
\assigntoval {\mu_1} {t_1} \in\semlog{\psi} \qquad
\assigntoval {\mu_2} {t_2} \in\semlog{\psi}.
\end{align}
Then the target locations are equal $o'_1 = o'_2$, and the same for the reset sets $\Y_1 = \Y_2$.
\end{claim}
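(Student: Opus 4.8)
The plan is to distil from the two hypotheses a single timed automorphism $\pi$ that at once maps the source state $Z_1 = (X_1, \mu_1)$ to $Z_2 = (X_2, \mu_2)$ and sends $t_1$ to $t_2$; \Cref{claim:equivZ} then transports $\pi$ to the successor states and delivers both claimed equalities at one stroke.

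First I would unpack the assumptions. Since the two rules share the source location $o$, by~\eqref{eq:trofB} we have $\orbit{}{Z_1} = o = \orbit{}{Z_2}$, so there is a timed automorphism $\pi_0$ with $\pi_0(X_1) = X_2$ and $\pi_0 \circ \mu_1 = \mu_2$. Since the two rules share the constraint $\psi$, by~\eqref{eq:psi} the clock valuations $\assigntoval{\mu_1}{t_1}$ and $\assigntoval{\mu_2}{t_2}$ (these are genuine valuations, as $\mu_i(\x) \leq \nowof{X_i} \leq t_i$) lie in one and the same $k,m$-region $\semlog{\psi}$.

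The crux is to promote $\pi_0$ to a timed automorphism $\pi$ satisfying $\pi \circ \mu_1 = \mu_2$ \emph{and} $\pi(t_1) = t_2$. To this end I would introduce an auxiliary fresh clock $\x_0$ (used only for the argument) and consider the reset-point assignments $\mu_i^+ \colon \X \cup \set{\x_0} \to \R$ given by $\restr{\mu_i^+}{\X} = \mu_i$ and $\mu_i^+(\x_0) = t_i$; since $t_i \geq \nowof{X_i} \geq \mu_i(\x)$, the pair $(\mu_i^+, t_i)$ is a legal reset-point assignment with current timestamp $t_i$. Its induced valuation $\assigntoval{\mu_i^+}{t_i}$ restricts on $\X$ to $\assigntoval{\mu_i}{t_i} \in \semlog{\psi}$ and equals $0$ on $\x_0$, so $\assigntoval{\mu_1^+}{t_1}$ and $\assigntoval{\mu_2^+}{t_2}$ lie in the same $(k{+}1,m)$-region. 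By the standard correspondence between regions and orbits of reset-point configurations (the same principle underlying \Cref{fact:reg}), there is then a timed automorphism $\pi$ with $\pi(t_1) = t_2$ and $\pi \circ \mu_1^+ = \mu_2^+$, hence in particular $\pi \circ \mu_1 = \mu_2$. That this $\pi$ also satisfies $\pi(X_1) = X_2$ now comes for free: from $\pi \circ \mu_1 = \mu_2 = \pi_0 \circ \mu_1$ the composite $\pi_0^{-1} \circ \pi$ fixes $\mu_1(\X)$ pointwise, hence is a $\mu_1(\X)$-timed automorphism, so by $\mu_1(\X)$-invariance of $X_1$ (recall that $(X_1, \mu_1)$ is a state) we get $(\pi_0^{-1} \circ \pi)(X_1) = X_1$ and therefore $\pi(X_1) = \pi_0(X_1) = X_2$.

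It remains to conclude. Applying \Cref{claim:equivZ} to the two transitions $(X_1, \mu_1) \goesto{a,t_1} (X'_1, \mu'_1)$ and $(X_2, \mu_2) \goesto{a,t_2} (X'_2, \mu'_2)$ together with $\pi$ (which satisfies $\pi(X_1) = X_2$, $\pi \circ \mu_1 = \mu_2$ and $\pi(t_1) = t_2$) yields $\pi(X'_1) = X'_2$ and $\pi \circ \mu'_1 = \mu'_2$; hence $Z'_1$ and $Z'_2$ lie in a common orbit and $o'_1 = \orbit{}{Z'_1} = \orbit{}{Z'_2} = o'_2$. Moreover, for every clock $\x_i \in \X$ we have $\mu'_1(\x_i) = t_1 \iff \pi(\mu'_1(\x_i)) = \pi(t_1) \iff \mu'_2(\x_i) = t_2$, whence $\Y_1 = \setof{\x_i \in \X}{\mu'_1(\x_i) = t_1} = \setof{\x_i \in \X}{\mu'_2(\x_i) = t_2} = \Y_2$. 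I expect the delicate point to be the region lemma invoked in the crux step: it must be applied to regions that need not have $m$-bounded span (since $t_1$ may exceed $\nowof{X_1}$ by an arbitrary amount), and one should check that the fresh-clock encoding forces $\pi(t_1) = t_2$ without imposing any extra constraint on $\pi$ over $\mu_1(\X)$; everything else is routine bookkeeping with timed automorphisms.
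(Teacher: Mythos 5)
Your overall skeleton (find a single $\pi$ mapping $Z_1$ to $Z_2$ with $\pi(t_1)=t_2$, then apply \Cref{claim:equivZ}) matches the paper's, and the third step — showing $\pi(X_1)=X_2$ by observing that $\pi_0^{-1}\circ\pi$ is a $\mu_1(\X)$-automorphism and using $\mu_1(\X)$-invariance of $X_1$ — is correct and cleanly argued. However, the ``crux'' step is genuinely wrong, and your own closing caveat is pointing at the hole.

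The fresh-clock lemma you want to invoke simply does not hold. Take $k=1$, $m=1$, $\mu_1=\set{\x_1\mapsto 0}$ with $\nowof{X_1}=0$, $\mu_2=\set{\x_1\mapsto 0.5}$ with $\nowof{X_2}=0.5$, $t_1=2.5$, $t_2=5$. Then $\assigntoval{\mu_1}{t_1}$ and $\assigntoval{\mu_2}{t_2}$ both lie in the $(1,1)$-region $\x_1>1$, and the extended valuations $\assigntoval{\mu_1^+}{t_1}=(\x_0{=}0,\,\x_1{=}2.5)$ and $\assigntoval{\mu_2^+}{t_2}=(\x_0{=}0,\,\x_1{=}4.5)$ both lie in the $(2,1)$-region $\x_0=0\wedge\x_1>1$. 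Yet no timed automorphism sends $0\mapsto 0.5$ and $2.5\mapsto 5$, because the floors of differences $\lfloor 2.5-0\rfloor=2\neq 4=\lfloor 5-0.5\rfloor$ must be preserved. The principle ``same region $\Rightarrow$ same orbit'' behind \Cref{fact:reg} is proved there only for configurations with $m$-bounded span (those reachable in a greedily resetting \kmNTA{k}{m}); once $t_1-\nowof{X_1}>m$, the span of $(\mu_1(\X),t_1)$ is unbounded and the correspondence breaks. So there is simply no $\pi$ with $\pi\circ\mu_1=\mu_2$ and $\pi(t_1)=t_2$ to feed into \Cref{claim:equivZ}, and your argument cannot produce the conclusion in this case.

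The paper copes with exactly this by a case split. When $t_1-\nowof{X_1}\leq m$, the clock $\x_i$ with $\mu_1(\x_i)=\nowof{X_1}$ acts as an anchor: $\psi$ pins down the region of $t_1-\nowof{X_1}$ and the orbit of $\mu_1$ pins down $\nowof{X_1}-\mu_1(\x_j)$, and together these determine the orbit of $(\mu_1,t_1)$ — here your fresh-clock idea is fine, though it needs this finer argument rather than a blanket appeal to the region--orbit correspondence, because the extended span is still only bounded by $2m$. When $t_1-\nowof{X_1}>m$, no unifying $\pi$ need exist, and one instead argues directly: greedy resetting of $A$ forces $\succe{a,t_i}{X_i}$ to consist entirely of configurations with reset point $t_i$, hence $S_i'=\set{t_i}$, hence $\mu_i'\equiv t_i$ and $\Y_1=\X=\Y_2$, and any $\sigma$ with $\sigma(t_1)=t_2$ carries $X_1'$ to $X_2'$. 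Your proof must supply a separate argument for this case; as written it has a genuine gap there.
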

\begin{proof}
We use the invariance of semantics of $A$ and Claim~\ref{claim:equivZ}.
Let $o = \orbit{}{X_1, \mu_1} = \orbit{}{X_2, \mu_2}$.
Thus there is a timed automorphism $\pi$ such that
\begin{align} \label{eq:Y}
X_2 = \pi(X_1) \qquad 
\mu_2 = \pi{\circ}\mu_1.
\end{align}
%
%
It suffices to show that there is a (possibly different) timed permutation $\sigma$ satisfying the following equalities:
\begin{align} \label{eq:4toprove}
t_2 = \sigma(t_1) \quad
\setof{i}{\mu'_1(\x_i) = t_1} = \setof{i}{\mu'_2(\x_i) = t_2} \quad
\mu'_2 = \sigma{\circ}\mu'_1 \quad
X'_2 = \sigma(X'_1).
\end{align}
We now rely the fact that both ${\now}_1 = \nowof {X_1} \in \mu_1(\X)$ and ${\now}_2 = \nowof {X_2} \in \mu_2(\X)$
are assigned to (the same) clock due to the second equality in~\eqref{eq:Y}:
${\now}_1 = \mu_1(\x_i)$ and ${\now}_2 = \mu_2(\x_i)$.
We focus on the case when $t_1 - {\now}_1 \leq m$ (the other case is similar but easier as all clock are reset
due to greedy resetting),
which implies $t_2 - {\now}_2 \leq m$ due to~\eqref{eq:psi}.
In this case we may assume w.l.o.g., due to~\eqref{eq:psi} and the equalities~\eqref{eq:Y}, 
that $\pi$ is chosen so that $\pi(t_1) = t_2$. 
We thus take $\sigma = \pi$ for proving the equalities~\eqref{eq:4toprove}.
Being done with the first equality, we observe that the last two equalities in~\eqref{eq:4toprove} hold
due to the invariance of $\ZZ$ (cf.~Claim~\ref{claim:equivZ}).
The remaining second equality in~\eqref{eq:4toprove} is a consequence of the third one.
\end{proof}
%
%
\begin{claim}
	\label{claim:last}
	Let $Z = (X, \mu)$ and $Z' = (X', \mu)$ be two states in $\ZZ$ with the same clock assignment.
	If $\pi(X) = X'$ and $\pi{\circ}\mu = \mu$ for some timed automorphism $\pi$ then $X = X'$.
\end{claim}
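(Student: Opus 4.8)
The statement says: if $Z=(X,\mu)$ and $Z'=(X',\mu)$ are two states in $\ZZ$ sharing the \emph{same} clock assignment $\mu$, and some timed automorphism $\pi$ satisfies $\pi(X)=X'$ and $\pi{\circ}\mu=\mu$, then already $X=X'$. The key observation is that $\pi{\circ}\mu=\mu$ means $\pi$ fixes every value in $S:=\mu(\X)$ pointwise, i.e.\ $\pi\in\Pi_S$. Both $X$ and $X'$ are $\mu(\X)$-invariant by definition of a state, hence $S$-invariant, hence $S$-closed by \cref{fact:inv:closure}. The plan is to show that an $S$-closed macro-configuration is completely determined by the control locations attached to each $S$-orbit of reals, and that $\pi\in\Pi_S$ maps each $S$-orbit of reals to itself; this forces $X$ and $X'$ to carry the same location-labelling and hence to be equal.

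\textbf{Steps.} First I would unfold $\pi{\circ}\mu=\mu$: for every clock $\x_i$, $\pi(\mu(\x_i))=\mu(\x_i)$, so $\pi$ fixes $S=\mu(\X)$ pointwise and therefore $\pi\in\Pi_S$. Second, recall from the proof of the orbit-counting claim that, since the span of $X$ is bounded by $m$ and $S$ is fraction-independent, the real line (restricted to the relevant bounded window) partitions into the $S$-orbits $o_1,o_2,\dots$ — alternately singletons $\{e_j\}$ for $e_j\in S$ and open intervals $(e_j,e_{j+1})$ — and an $S$-closed macro-configuration $X$ is exactly determined by the assignment $o_i\mapsto P_i^X$, where $P_i^X=\setof{p}{(p,u,\nowof X)\in X\text{ for some }u\in o_i}$; this is precisely the finite representation $O$ used earlier. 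Third, since $\pi\in\Pi_S$, $\pi$ maps each $S$-orbit of reals onto itself (every $\pi\in\Pi_S$ permutes the points of an open-interval orbit $(e_j,e_{j+1})$ among themselves and fixes each singleton $\{e_j\}$), and $\pi$ fixes $\nowof X=\nowof{X'}$ (as $\nowof X\in S$). Hence $\pi(X)=X'$ translates, orbit by orbit, into $P_i^{X'}=P_i^X$ for every $i$: a pair $(p,u,\now)$ lies in $X'=\pi(X)$ iff $(p,\pi^{-1}(u),\now)\in X$, and $\pi^{-1}(u)$ lies in the same orbit $o_i$ as $u$. Fourth, since $X$ and $X'$ have the same finite representation, and the representation determines the macro-configuration, $X=X'$.

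\textbf{Main obstacle.} The only slightly delicate point is making precise that the finite orbit-representation $O$ genuinely \emph{determines} the $S$-closed macro-configuration — i.e.\ that $X=\bigcup_i\setof{(p,u,\nowof X)}{u\in o_i,\ p\in P_i^X}$. This is where $S$-closedness (equivalently $S$-invariance via \cref{fact:inv:closure}) is used: within a single orbit $o_i$, the set of pairs $(p,u,\nowof X)\in X$ with $u\in o_i$ is closed under applying any $\pi\in\Pi_S$, so either all $u\in o_i$ appear with a given $p$ or none do, which is exactly the statement that $P_i^X$ "for some $u$" coincides with "for all $u$". Once this equivalence is in hand, the rest is a routine unwinding of the action of $\pi$ on orbits, and the conclusion $X=X'$ follows immediately from equality of representations.
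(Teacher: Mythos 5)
Your proof is correct, but it takes a detour that is far longer than necessary. You make the two decisive observations right at the start: $\pi\circ\mu=\mu$ means $\pi$ fixes every point of $S:=\mu(\X)$, so $\pi\in\Pi_S$; and by the definition of a state, $X$ is $\mu(\X)$-invariant, i.e.\ $S$-invariant. But the definition of $S$-invariance in the paper is precisely that $\sigma(X)=X$ for \emph{every} $\sigma\in\Pi_S$. Applying this to the particular $\pi$ at hand gives $\pi(X)=X$ immediately, and combined with the hypothesis $\pi(X)=X'$ one concludes $X=X'$ in a single step. Everything in your second, third, and fourth steps — the partition of the bounded window into $S$-orbits, the finite representation $O$, the verification that $\pi$ permutes each orbit $o_i$ into itself and therefore preserves the location sets $P_i$ — is in effect a re-derivation, in this special case, of the fact that $S$-invariance of $X$ forces $\pi(X)=X$, which you already had by definition. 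The argument via finite representations is sound but circular in spirit: it uses the $S$-invariance of $X$ (to justify that membership of $(p,u,\now)$ in $X$ depends only on the $S$-orbit of $u$), and its conclusion is again just $\pi(X)=X$. So there is no gap, only redundancy; the proof compresses to the two lines sketched above.
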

\begin{claim} \label{claim:eqlangCB}
	$\ZZ$ is isomorphic to the reachable part of $\semd {B'}$.
\end{claim}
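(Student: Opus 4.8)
\textbf{Proof plan for \cref{claim:eqlangCB}.}

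The plan is to exhibit an explicit isomorphism
$\Phi(X,\mu) := (\orbit{}{X,\mu},\ \mu,\ \nowof X)$
from $\ZZ$ (which, by the earlier restriction, already consists of the states reachable from $Z_0$) to the reachable part of $\semd{B'}$: a state is sent to the $B'$-configuration whose control location is the orbit of the state, whose reset-point assignment is $\mu$, and whose ``now'' component is $\nowof X$. This is well defined because $\orbit{}{X,\mu}\in\L'$ by construction of $B'$ and $\mu(\X)\subseteq\macval X\subseteq(\nowof X-m,\nowof X]$, so the triple is a genuine configuration of $B'$; note also that $\nowof X=\max\mu(\X)$ since $\nowof X\in\mu(\X)$, so the ``now'' component carries no information beyond $\mu$. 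It is immediate that $\Phi$ sends $Z_0=(X_0,\mu_0)$ to the initial configuration $(o_0,\mu_0,0)$, and that it respects final markings: $X\cap F\ne\emptyset$ is preserved by every timed automorphism, so $\orbit{}{Z}\in\F'$ iff $Z\in F_\ZZ$, and hence $\Phi(Z)$ is a final configuration of $B'$ exactly when $Z\in F_\ZZ$.

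The core of the argument is that $\Phi$ transports the transition relation faithfully in both directions. \emph{Forward:} if $Z=(X,\mu)\goesto{a,t}(X',\mu')=Z'$ in $\ZZ$, then by~\eqref{eq:trofB} there is a rule $\transition{o}{a}{\psi}{\Y}{o'}\in\Delta'$ with $o=\orbit{}{Z}$, $o'=\orbit{}{Z'}$, $\psi$ the $k,m$-region of $\assigntoval\mu t$, and $\Y=\setof{\x_i}{\mu'(\x_i)=t}$, and I would check that this rule fires at $\Phi(Z)$ with label $(a,t)$ and produces $\Phi(Z')$: indeed $t\ge\nowof X$ (the transition of $\ZZ$ ultimately mirrors a transition of $\semd A$ through $\YY$ and $\XX$, and along runs timed words are monotonic); $\assigntoval\mu t\models\psi$ by the choice of $\psi$; the update equation~\eqref{eq:mu} gives precisely $\mu'=\extend\Y t\mu$ (one checks that $\mu'(\x_i)\ne t$ forces the second case of~\eqref{eq:mu}, whence $\mu'(\x_i)=\mu(\x_i)$); and $\nowof{X'}=t$ because the now-component is $t$ throughout $\succe{a,t}X$ and $t\in S'$, so the $S'$-closure leaves it unchanged. \emph{Backward:} here I use that $\ZZ$ is closed under successors, so for a rule $\transition{o}{a}{\psi}{\Y}{o''}\in\Delta'$ firing at $\Phi(Z)=(o,\mu,\nowof X)$ with label $(a,t)$ I take the unique $(a,t)$-successor $Z\goesto{a,t}Z'$ of $Z$ in $\ZZ$; by the forward direction it induces a rule $\transition{o}{a}{\psi'}{\Y'}{\orbit{}{Z'}}$ with $\psi'$ the $k,m$-region of $\assigntoval\mu t$. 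Since $k,m$-regions partition the clock valuations and $\assigntoval\mu t$ satisfies both $\psi$ and $\psi'$, we get $\psi=\psi'$, and determinism of $B'$ (proved just above) then forces $o''=\orbit{}{Z'}$ and $\Y=\Y'$; hence the $(a,t)$-successor of $\Phi(Z)$ in $\semd{B'}$ is exactly $\Phi(Z')$.

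It remains to see that $\Phi$ is injective, which is where \cref{claim:last} enters: if $\Phi(Z_1)=\Phi(Z_2)$ with $Z_i=(X_i,\mu)$, then $\orbit{}{Z_1}=\orbit{}{Z_2}$ yields a timed automorphism $\pi$ with $\pi(X_1)=X_2$ and $\pi{\circ}\mu=\mu$, so $X_1=X_2$ by \cref{claim:last}, i.e.\ $Z_1=Z_2$. Combining the three points, a straightforward induction on run length gives that $\Phi$ is a bijection from $\ZZ$ onto the reachable part of $\semd{B'}$ (surjectivity from the backward direction, injectivity as just shown), commuting with the two deterministic transition relations and matching initial states and final markings; this is the desired isomorphism. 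I expect the main obstacle to be the bookkeeping in the backward direction: an applicable rule of $B'$ at $\Phi(Z)$ was generated from \emph{some} state in the orbit $o$, possibly different from $Z$, and one has to reconcile it with the canonical successor of $Z$ in $\ZZ$, for which determinism of $B'$ together with the region-partition property are exactly the right tools. Injectivity, although short, genuinely relies on \cref{claim:last}: it would fail if states were identified by their orbit-and-assignment data alone.
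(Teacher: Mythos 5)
Your proposal is correct and follows essentially the same route as the paper: you define the same map $\confof{Z} = (\orbit{}{Z}, \mu, \nowof X)$, prove it injective via \cref{claim:last}, embed $\ZZ$ into the reachable part of $\semd{B'}$ via \eqref{eq:trofB}, and obtain surjectivity from the totality of $\ZZ$ (combined with determinism of $B'$ and the region partition). You spell out the forward/backward transport of transitions in more detail than the paper, but the underlying argument is the same.
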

\begin{proof}
	For a state $Z = (X, \mu)$,
	let $\confof Z = (o, \mu, t)$, where $o = \orbit{}{Z}$ and $t = \nowof X$.
	By \cref{claim:last},
	the mapping $\confof \_$ is a bijection between $\ZZ$ and its image $\confof{\ZZ} \subseteq \semd {B'}$.
	By \eqref{eq:trofB},
	$\ZZ$ is isomorphic to a subsystem of the reachable part of $\semd {B'}$.
	The converse inclusion follows by the observation that $\ZZ$ is total: for every 
	$(a_1, t_1) \dots (a_n, t_n) \in\timedwords \Sigma$,
	there is a sequence of transitions $(X_0, \mu_0) \goesto {a_1, t_1} \cdots \goesto{a_n, t_n}$ in $\ZZ$.
\end{proof}
Claims~\ref{claim:eqlangAX}, \ref{claim:eqlangXY}, \ref{claim:eqlangYZ},
and~\ref{claim:eqlangCB} prove $\langts A = \langts {B'}$.
%

\section{Undecidability and hardness}
\label{sec:lowerbound}

In this section we complete the decidability status of the deterministic membership problem
by providing matching undecidability and hardness results.
In \cref{sec:undecidability} we prove undecidability of the \kDTA membership problem for \kNTA 1 (c.f.~\cref{thm:undecidability})
and in \cref{sec:hardness} we prove \HyperAckermann-hardness of the \kDTA{k} membership problem for \kNTA{1} (c.f.~\cref{thm:hardness}).


\subsection{Undecidability of \DTA and \mDTA m membership for \kNTA{1}} 
\label{sec:undecidability}

It has been shown in \cite[Theorem 1]{Finkel:FORMATS:2006} that it is undecidable
whether a \kNTA k timed language can be recognised by some \DTA, for any fixed $k \geq 2$.
This was obtained by a reduction from the \kNTA k universality problem,
which is undecidable for any fixed $k \geq 2$.
While the universality problem becomes decidable for $k = 1$,
we show in this section that, as announced in \cref{thm:undecidability},
the \DTA membership problem remains undecidable for \kNTA 1.



Since the universality problem for \kNTA 1 is decidable,
we need to reduce from another (undecidable) problem.
Our candidate is the finiteness problem of lossy counter machines,
which is undecidable \cite[Theorem 13]{Mayr:TCS:2003}.
A \emph{$k$-counters lossy counter machine} (\kLCM k) is a tuple $M = \tuple {C, Q, q_0, \Delta}$,
where $C = \set{c_1, \dots, c_k}$ is a set of $k$ counters,
$Q$ is a finite set of control locations,
$q_0 \in Q$ is the initial control location,
and $\Delta$ is a finite set of instructions of the form $\tuple {p, \op, q}$,
where $\op$ is one of $\incr c$, $\decr c$, and $\ztest c$.
A configuration of an \LCM $M$ is a pair $\tuple {p, u}$,
where $p \in Q$ is a control location,
and $u \in \N^C$ is a counter valuation.
For two counter valuations $u, v \in \N^C$,
we write $u \leq v$ if $u(c) \leq v(c)$ for every counter $c \in C$.
The semantics of an \LCM $M$ is given by a (potentially infinite) transition system over the configurations of $M$
\st there is a transition $\tuple {p, u} \goesto \delta \tuple {q, v}$,
for
$\delta = \tuple {p, \op, q} \in \Delta$, whenever
\begin{inparaenum}[1)]
    \item $\op = \incr c$ and $v \leq \extend c {u(c) + 1} u$, or
    \item $\op = \decr c$ and $v \leq \extend c {u(c) - 1} u$, or
    \item $\op = \ztest c$ and $u(c) = 0$ and $v \leq u$.
\end{inparaenum}
%
The \emph{finiteness problem} (a.k.a.~space boundedness) for an \LCM $M$
asks to decide whether the reachability set
$\reachset M = \setof{\tuple{p, u}} {\tuple {q_0, u_0} \goesto {}^* \tuple{p, u}}$
is finite, where $u_0$ is the constantly $0$ counter valuation.
\begin{theorem}[\protect{\cite[Theorem 13]{Mayr:TCS:2003}}]
    The \kLCM 4 finiteness problem is undecidable.
\end{theorem}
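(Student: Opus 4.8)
This is \cite[Theorem 13]{Mayr:TCS:2003}; we only sketch the reduction one would use. The plan is to reduce from the halting problem of reliable deterministic two-counter Minsky machines, which is undecidable. Given such a machine $N$ with initial configuration $(q_0,0,0)$ and a distinguished halting location, I would construct a \kLCM 4 $M$ --- two counters carrying the two counters of $N$, a \emph{phase counter} $P$, and a \emph{scratch counter} $S$ --- such that $\reachset M$ is finite if and only if $N$ halts.

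The automaton $M$ runs in phases. Entering a phase with $P=i$, it uses $i$ as a step budget: it simulates $N$ forward for (at most) $i$ steps, decrementing $P$ and incrementing $S$ once per step, so that afterwards $P=0$ and $S=i$ absent losses; it then runs the reversed machine backward for $i$ steps, decrementing $S$ and incrementing $P$; and it \emph{certifies} the phase if this round trip returns exactly to $(q_0,0,0)$ with $S=0$ and $P=i$. The technical heart --- which I would import from \cite{Mayr:TCS:2003} --- is to engineer this simulation so that losses are detectable: under a conserved-budget discipline, in which each increment or decrement of a simulated counter moves a token to or from an auxiliary budget so that a fixed total is maintained along any reliable segment, the verification at the end of a phase can succeed only if no loss occurred anywhere in it; hence a certified phase of length $i$ witnesses a genuine reliable run of $N$ of length $i$. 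After a certified phase, $M$ resets the simulation counters, increments $P$ by one, and proceeds. Consequently, along every run of $M$, both auxiliary counters stay bounded by the largest phase number reached.

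It then remains to check the two directions. If $N$ does not halt, it has reliable runs of every length, so $M$ can certify phases $1,2,3,\dots$ in succession; the phase counter, hence some reachable counter value, grows without bound, and $\reachset M$ is infinite. If $N$ halts in $h$ steps, no phase of length greater than $h$ can be certified, and since phase $i+1$ is reached only after phase $i$ has been certified, the phase counter never exceeds $h+1$ along any run; moreover a run that deviates from the intended simulation still performs only a bounded-length countdown in its two legs before failing its verification, so all of its configurations stay within a bound depending only on $h$. Hence $\reachset M$ is finite. The hard part of the argument is precisely this last point --- ensuring that loss-induced deviations are trapped inside a bounded region instead of escaping to infinity --- which is exactly what the bounded per-phase budget and the end-of-phase verification are designed to guarantee.
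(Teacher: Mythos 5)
The paper does not prove this statement: it is imported by citation from Mayr (Theorem~13 of the cited 2003 paper), so there is no internal proof to compare against; the theorem is used as a black box in the reduction of \cref{sec:undecidability}.

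Your sketch does capture the high-level shape of the published argument --- a phase-based simulation of a Minsky machine under a conserved budget, with end-of-phase certification that detects losses, so that the budget grows without bound exactly when the Minsky machine admits arbitrarily long reliable runs. One detail in your write-up, however, is not expressible in the model: the certification you describe tests ``$S=0$ and $P=i$,'' but a lossy counter machine can only zero-test; it has no way to compare a counter against a previously held nonzero value, so the test $P=i$ cannot be performed. The construction does not need it: after the backward leg one zero-tests $S$, $c_1$, $c_2$ and the control location, and leaves $P$ alone. Since losses only decrease counters, any loss either makes one of those zero-tests fail, or leaves $P$ strictly below $i$, and the latter is harmless for finiteness because the next phase then simply restarts from a smaller budget. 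This monotonicity --- that no run can ever make the budget overshoot its previous value, and that a forward leg with budget exceeding the halting time of $N$ stalls before the intermediate zero-test on $P$ can pass --- is exactly what you need to make the finiteness direction precise, and is the part your sketch should state explicitly rather than gesture at.
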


\noindent
We use the following encoding of \LCM runs
as timed words over the alphabet $\Sigma = Q \cup \Delta \cup C$
(c.f.~\cite[Definition 4.6]{LasotaWalukiewicz:ATA:ACM08} for a similar encoding).
We interpret a counter valuation $u \in \N^C$
as the word over $\Sigma$
\begin{align*}
    u \;=\;
        \underbrace {c_1 c_1 \cdots c_1}_{u(c_1) \text{ letters}}\ 
            \underbrace {c_2 c_2 \cdots c_2}_{u(c_2) \text{ letters}}\ 
				\underbrace {c_3 c_3 \cdots c_3}_{u(c_3) \text{ letters}}\ 
					\underbrace {c_4 c_4 \cdots c_4}_{u(c_4) \text{ letters}}.
\end{align*}
With this interpretation, we encode an \LCM run
%
    $\pi \;=\; \tuple{p_0, u_0}
        \goesto {\delta_1} \tuple{p_1, u_1}
            \goesto {\delta_2} \cdots
                \goesto {\delta_n} \tuple{p_n, u_n}$
%
as the following timed word,
called the \emph{reversal-encoding} of $\pi$,
\begin{align*}
    p_n \delta_n u_n\quad \cdots\quad p_1 \delta_1 u_1\quad p_0 u_0,
\end{align*}
\st $p_n$ occurs at time 0,
for every $1 \leq i < n$, $p_i$ occurs exactly after one time unit since $p_{i+1}$,
and if a ``unit'' of counter $c_1$ did not disappear due to lossiness when going from $u_i$ to $u_{i+1}$,
then the timestamps of the corresponding occurrences of letter $c_1$ in $u_i$ and $u_{i+1}$ are also at distance one
(and similarly for the other counters).
Under the encoding above, we can build a \kNTA 1 $A$ recognising the complement of the 
set of reversal-encodings of the runs of $M$
(\cite{LasotaWalukiewicz:ATA:ACM08} for more details about the construction of $A$).
Intuitively, when reading the reversal-encoding of a run of $M$, 
the counters are allowed to spontaneously increase.
Therefore, the only kind of error that $A$ must verify is that some counter spontaneously decreases.
This can be done by guessing an occurrence of letter (say) $c_1$ in the current configuration
which does not have a corresponding occurrence in the next configuration after exactly one time unit.
This check can be performed by an \NTA with one clock.
%
%
%
\begin{restatable}{lemma}{lemReduction}
	The set of reachable configurations $\reachset M$ is finite
	if, and only if,
	$\lang A$ is a deterministic timed language.
\end{restatable}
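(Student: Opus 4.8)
The plan is to prove both implications of the equivalence separately, exploiting the specific structure of the reversal-encoding.

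\medskip\noindent
\emph{From finiteness to determinisability.}
Suppose $\reachset M$ is finite. The key observation is that, with a bounded reachability set, the language $\lang A$ — the complement of the set of reversal-encodings of runs of $M$ — becomes a \DTA language. First I would note that when $\reachset M$ is finite, the counter values occurring along any run are bounded by some constant $N$, so the relevant timed words use only boundedly many occurrences of each counter letter within each configuration block. Intuitively, this means a \DTA no longer needs to remember unboundedly many timestamps: it can check, using boundedly many clocks (or rather by brute force over a finite set of configurations), whether the input is a valid reversal-encoding. More precisely, I would describe a \DTA that simulates $M$'s finite reachability graph: it tracks the current $M$-configuration in its control location, uses a bounded number of clocks to verify that consecutive configuration blocks are separated by exactly one time unit and that the counter-unit correspondence (up to lossy decreases) holds, and rejects exactly when a spurious decrease is detected. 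Since $\reachset M$ is finite, all of this is finite-state plus boundedly many clocks, yielding a \DTA for the set of valid encodings; complementing (a \DTA operation) gives a \DTA for $\lang A$.

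\medskip\noindent
\emph{From determinisability to finiteness.}
For the contrapositive, suppose $\reachset M$ is infinite; I would show $\lang A$ is not a \DTA language. When $\reachset M$ is infinite, there are reachable configurations with arbitrarily large counter values, hence valid reversal-encodings containing arbitrarily long blocks of a single counter letter, with the unit-distance correspondence linking timestamps across consecutive blocks. The argument mirrors the one sketched in \cref{example:L1}: any \DTA recognising the complement (equivalently, recognising the valid encodings among sufficiently structured inputs) would, after reading one configuration block of length $\ell$, need to have recorded enough information to verify, one time unit later, that each of the $\ell$ timestamps reappears — but a \DTA has boundedly many clocks and boundedly many control locations, so for $\ell$ large it cannot distinguish two inputs that must lead to different acceptance behaviour. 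I would make this precise with a pumping/fooling-set argument: exhibit, for each bound $B$ on clocks-plus-locations, two timed words agreeing on a long prefix but such that exactly one admits a valid completion, forcing any $B$-bounded \DTA to err.

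\medskip\noindent
\emph{Main obstacle.}
The delicate direction is the first one: turning finiteness of $\reachset M$ into an actual \DTA construction. The subtlety is that even when $\reachset M$ is finite, the \emph{timed words} encoding runs are unbounded in length (runs can be arbitrarily long), so one must argue that a deterministic acceptor can nonetheless track the relevant state with bounded resources — the point being that between any two consecutive configuration blocks only a bounded amount of quantitative information (the bounded counter values and the single fixed time offset) needs to be carried, while lossiness conveniently means that ``forgetting'' a timestamp is never an error. Getting the clock bookkeeping right — in particular handling the per-counter unit correspondence deterministically in the presence of lossy decreases, and checking the rigid one-time-unit spacing between configuration blocks — is where the real work lies; the second direction is a routine adaptation of the classical one-clock-versus-\DTA separation.
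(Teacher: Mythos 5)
Your proposal follows essentially the same route as the paper in both directions: finiteness of $\reachset M$ gives a size bound $k$ on reachable configurations, so the set of valid reversal-encodings is a $(k{+}1)$-clock \DTA language and hence so is its complement $\lang A$; conversely, if $\reachset M$ is infinite but $\lang A$ (and hence the set of valid encodings) were recognised by a $k$-clock \DTA $B$, then on a run whose counter exceeds $k$ the automaton $B$ must forget some timestamp $(c_1,t)$ in a block $u_{i+1}$, and perturbing the matching occurrence $(c_1,t{+}1)$ in $u_i$ to some $t'$ with $t'-t\neq 1$ yields a word that $B$ still accepts but that is no longer a valid encoding. This concrete perturbation is exactly how the paper discharges the step you flagged as needing to be "made precise" in your fooling-set sketch.
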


Since the timed automaton constructed in the proof uses only constant 1,
the reduction works also for the
\mDTA m membership problem for every $m> 0$:
\begin{corollary}
    For every fixed $m > 0$, the \mDTA m membership problem for \kNTA 1 languages is undecidable.
\end{corollary}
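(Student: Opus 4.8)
The plan is to show that the computable map $M \mapsto A$ already constructed for the preceding lemma is itself the desired reduction, once one notes that the equivalent \DTA can be taken with constants bounded by $m$. Here $M$ ranges over \kLCM 4 machines, whose finiteness problem is undecidable, and $A$ is the \kNTA 1 built in the proof of that lemma; since every clock constraint of $A$ uses only the constant $1$, the automaton $A$ is in fact a \kmNTA 1 1. It therefore suffices to prove, for every fixed $m > 0$, that $\reachset M$ is finite if, and only if, $\lang A$ is an \mDTA m language.

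The direction from right to left is immediate: an \mDTA m language is in particular a \DTA language, so if $\lang A$ is an \mDTA m language then it is a deterministic timed language, and the preceding lemma gives that $\reachset M$ is finite.

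For the converse, suppose $\reachset M$ is finite. By the preceding lemma, $\lang A$ is a deterministic timed language, hence a \kDTA k language for some $k \in \N$. I would first make such a \kDTA k always resetting at the cost of one extra clock: add a fresh clock which is reset by every transition and tested by none; this preserves the recognised language, determinism, and the set of numerical constants used, so $\lang A$ is in fact an always resetting \kDTA{k+1} language. Now \Cref{thm:k-DTA-char}, applied to the \kmNTA 1 1 automaton $A$ with clock bound $k+1$ and constant bound $1$, gives via the implication \ref{p1}$\implies$\ref{p3} that $\lang A$ is recognised by an always resetting \kmDTA{k+1}{1}. In particular $\lang A$ is an \mDTA 1 language, and since $m > 0$ any constraint whose constants are bounded by $1$ in absolute value is also a constraint with constants bounded by $m$; hence $\lang A$ is an \mDTA m language. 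Combining the two directions, $M \mapsto A$ is a computable many-one reduction from \kLCM 4 finiteness to \mDTA m membership for \kNTA 1, so the latter is undecidable for every $m > 0$.

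The only step that needs care, and thus the main obstacle, is the second half of the converse direction: a priori the \DTA equivalent to $\lang A$ could use arbitrarily large constants. This is exactly what \Cref{thm:k-DTA-char} fixes, since its conclusion \ref{p3} produces an equivalent deterministic automaton whose constant bound matches that of the input \NTA — and the $A$ produced by the reduction uses only the constant $1 \le m$. Everything else is a routine composition of already-available facts: the preceding lemma, the dummy-clock trick, and the inclusion $\mDTA 1 \subseteq \mDTA m$ for $m \ge 1$.
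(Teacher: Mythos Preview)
Your argument is correct, but it is more elaborate than the paper's. The paper simply observes that the explicit $(k{+}1)$-clock \DTA built in the ``only if'' direction of the preceding lemma (the one that recognises the complement of the set of reversal-encodings once $\reachset M$ is finite) already uses only the constant $1$; hence the very same reduction lands in \mDTA m for every $m \ge 1$, with no further work. You instead forget how that \DTA was built, retain only that \emph{some} \kDTA k exists, and then recover the constant bound by invoking \cref{thm:k-DTA-char} (plus the dummy-clock trick to make the acceptor always resetting). Both routes are sound. The paper's is shorter and self-contained within \cref{sec:undecidability}; yours has the mild conceptual advantage that it does not rely on inspecting the internal constants of the ad-hoc \DTA in the lemma's proof, only on the fact that the input \kNTA 1 $A$ is a \kmNTA 1 1, and it illustrates a general principle (the constant bound of an equivalent \DTA can be pulled down to that of the \NTA via \cref{thm:k-DTA-char}).
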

This result is the best possible in terms of the parameter $m$
since the problem becomes decidable for $m=0$.
In fact, the class of \kmDTA k 0 languages coincides with the class of \kmDTA 1 0 languages
(one clock is sufficient; c.f.~\cite[Lemma 19]{OW04}),
and thus \mDTA 0 membership reduces to \kmDTA 1 0 membership,
which is decidable for \kNTA 1 by \cref{thm:kDTA:memb}.

\begin{remark}
	\label{remark:compression}
	We observe that the reduction above uses a large alphabet $\Sigma$
	whose size depends on the input \LCM $M$.
	In fact, an alternative encoding exists using a unary alphabet $\Sigma = \set{a}$.
	Let the input \LCM $M$ have control locations $Q = \set{p_1, \dots, p_m}$
	and instructions $\Delta = \set{\delta_1, \dots, \delta_n}$.
	An \LCM configuration $p_{j} \delta_{k} u$ is represented by the timed word consisting of 6 blocks
	%
	$	\underbrace {a \cdots a}_{j \text{ letters}}
			\ \underbrace {a \cdots a}_{k \text{ letters}}
				\ \underbrace {a \cdots a}_{u(c_1) \text{ letters}}
					\ \underbrace {a \cdots a}_{u(c_2) \text{ letters}}
						\ \underbrace {a \cdots a}_{u(c_3) \text{ letters}}
							\ \underbrace {a \cdots a}_{u(c_4) \text{ letters}}$
	%
	\st in each block the last $a$ is at timed distance exactly one from 
	 the last $a$ of the previous block.
	A unit of counter $c_1$ now repeats at distance $6$ in the next configuration (instead of $1$).
	This shows that the \DTA membership problem is undecidable for
	\kNTA 1 using maximal constant $m=6$ over a unary alphabet.
\end{remark}

\subsection{Undecidability and hardness for \kDTA k and \kmDTA k m membership}
\label{sec:hardness}

All the lower bounds in this section
are obtained by a reduction from the universality problem for the respective language 
classes
(does a given language $L \subseteq \timedwords{\Sigma}$ satisfy $L = \timedwords{\Sigma}$?).
The reduction is a suitable adaptation, generalization, and simplification
of \cite[Theorem 1]{Finkel:FORMATS:2006}
showing undecidability of \DTA membership for \NTA languages.

A timed language $L$ is \emph{timeless} if $L = \langts A$ for $A\in$ \kNTA{0} a timed automaton with no clocks
(hence timestamps appearing in input words are irrelevant for acceptance).
For two languages $L \subseteq \timedwords \Sigma$ and $M \subseteq \timedwords \Gamma$, and a
fresh alphabet symbol $\$ \not\in \Sigma \cup \Gamma$, we define their \emph{composition}  $L \rhd \set{\$} \rhd M$
to be the following timed language over $\Sigma' = \Sigma \cup\set{\$} \cup\Gamma$:
        \[
        L \rhd \set{\$} \rhd M \ = \ \setof{v (\$, t) (a_1, t_1 + t) \dots (a_n, t_n + t)\in \timedwords{\Sigma'}}
        {v \in L, (a_1, t_1)\dots(a_n,t_n)\in M}.
        \]

%

\begin{lemma}[restate = lemEasyUndec, name = ]\label{lem:easy-undecidability}
	Let $k ,m \in \N$ and let $\mathcal Y$ be a class of timed languages that
	\vspace{-\topsep} \begin{enumerate}
	\item contains all the timeless timed languages,
	\item is closed under union and composition, and 
	\item contains some non-\kDTA{k} (resp.~non-\kmDTA k m) language.
	\end{enumerate}
	The universality problem for languages in $\mathcal Y$ reduces in polynomial time
	to the \kDTA{k} (resp. \kmDTA k m) membership problem for languages in $\mathcal Y$.
\end{lemma}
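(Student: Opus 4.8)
The plan is to reduce the universality problem for $\mathcal Y$ to the $\kDTA k$ membership problem (the $\kmDTA k m$ case is completely analogous) by the standard trick: given a candidate language $L \subseteq \timedwords{\Sigma}$ in $\mathcal Y$, we build a new language in $\mathcal Y$ that is a $\kDTA k$ language if and only if $L$ is universal. Fix, by hypothesis (3), a language $M \in \mathcal Y$ over some alphabet $\Gamma$ that is \emph{not} a $\kDTA k$ language, and fix a fresh separator symbol $\$$. First I would define the reduction output to be
\[
K \ = \ \bigl(\timedwords{\Sigma} \rhd \set{\$} \rhd \timedwords{\Gamma}\bigr)
        \ \cup\ \bigl(L \rhd \set{\$} \rhd M\bigr).
\]
Both components lie in $\mathcal Y$: the first because $\timedwords{\Sigma}$ and $\timedwords{\Gamma}$ are timeless (hypothesis 1), hence in $\mathcal Y$, and $\mathcal Y$ is closed under composition (hypothesis 2); the second because $L, M \in \mathcal Y$ and again $\mathcal Y$ is closed under composition. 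Their union is in $\mathcal Y$ by closure under union. The reduction is clearly computable in polynomial time since it only manipulates automata/representations syntactically.

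Next I would prove the two directions of correctness. If $L$ is universal, then $L = \timedwords{\Sigma}$, so the second component is contained in the first, hence $K = \timedwords{\Sigma} \rhd \set{\$} \rhd \timedwords{\Gamma}$; this is a timeless language (acceptance just checks the shape ``some letters from $\Sigma$, then $\$$, then some letters from $\Gamma$'' with the single constraint that after $\$$ timestamps are shifted, which a $0$-clock automaton — or a single-clock one — handles), hence a $\kDTA k$ language. Conversely, suppose $L$ is not universal, so there is a timed word $v \notin L$, say with last timestamp $t_v$; I claim $K$ is not a $\kDTA k$ language. The idea is to take the left quotient of $K$ by $v \cdot (\$, t)$ for a suitable $t \ge t_v$: because $v \notin L$, no word of the second component $L \rhd \set{\$} \rhd M$ has prefix $v (\$, t)$, so
\[
\bigl(v (\$, t)\bigr)^{-1} K \ = \ \bigl(v(\$,t)\bigr)^{-1}\bigl(\timedwords{\Sigma} \rhd \set{\$} \rhd \timedwords{\Gamma}\bigr)
\]
which, after normalising by subtracting $t$ from all timestamps, is just $\timedwords{\Gamma}$ — fine, that's deterministic. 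So a single quotient is not enough; instead I would argue that if $K$ were recognised by a total $\kDTA k$ $B$, then by running $B$ on the prefix $v(\$, t)$ we reach one configuration $c$ with $\langtsa B c = \timedwords{\Gamma}$ (shifted), but simultaneously, quotients by prefixes $v' (\$, t)$ with $v' \in L$ give configurations recognising (shifted) $M$, and a $\kDTA k$ restricted to reading only $\Gamma$-words from a fixed configuration is itself a $\kDTA k$ device — forcing $M$ to be a $\kDTA k$ language, contradicting the choice of $M$. Concretely: take any $v' \in L$ with the same last timestamp issue handled by invariance/\cref{fact:equivariant:lang}; the configuration $B$ reaches after $v'(\$,t)$ recognises $M$ up to a timed automorphism, and truncating $B$ to the part reachable while reading $\Gamma$-letters yields a $\kDTA k$ for $M$.

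The main obstacle I anticipate is the second direction, specifically making the ``truncate $B$ to a $\kDTA k$ for $M$'' step rigorous: one must ensure that the sub-automaton of $B$ obtained by restricting to $\Gamma$-transitions reachable from the post-$(\$,t)$ configuration genuinely recognises $M$ and not some larger language (it could in principle accept $\Gamma$-words that, preceded by $v'(\$,t)$, happen to land in the first component $\timedwords{\Sigma}\rhd\set\$\rhd\timedwords{\Gamma}$ — but that component contributes \emph{all} of $\timedwords{\Gamma}$ after any legal prefix, so the union with $M$ collapses to $\timedwords{\Gamma}$ again!). This is the subtle point: I must pick $v' \in L$ and a timestamp $t$ so that the quotient isolates $M$, which it does \emph{not} if $L$ already contains the relevant prefix from the first component. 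The resolution is that the first component contributes $\timedwords\Sigma \rhd \set\$\rhd \timedwords\Gamma$ unconditionally, so $(v'(\$,t))^{-1}K \supseteq \timedwords\Gamma$ for every $v'$ — meaning the quotient is always $\timedwords\Gamma$, and the $M$-component is invisible. Hence the reduction as stated above is \emph{wrong}, and the correct construction must instead be something like $\bigl(\complement L \rhd \set\$\rhd \timedwords\Gamma\bigr) \cup \bigl(\timedwords\Sigma \rhd \set\$\rhd M\bigr)$ — but timeless languages need not be closed under complement within $\mathcal Y$ unless $L$ itself is timeless, which it need not be. The genuinely correct route, and the one I would pursue, follows \cite{Finkel:FORMATS:2006}: define $K = \bigl(L \rhd \set\$\rhd \timedwords\Gamma\bigr) \cup \bigl(\timedwords\Sigma\rhd\set\$\rhd M\bigr)$, observe $K \in \mathcal Y$ as above, and show $K = \timedwords\Sigma\rhd\set\$\rhd\timedwords\Gamma$ (a timeless, hence $\kDTA k$, language) iff $L$ is universal; for the forward-contrapositive direction, if $v \notin L$ then the quotient of $K$ by $v(\$,t)$ equals $M$ (shifted by $t$), because the first component contributes nothing (as $v \notin L$) and the second contributes exactly $M$ — and since $\kDTA k$ languages are closed under left quotient and under the timed-automorphism shift, $K$ being $\kDTA k$ would make $M$ a $\kDTA k$ language, a contradiction. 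Getting this quotient computation and the closure-of-$\kDTA k$-under-quotient lemma stated precisely is the crux; the rest is bookkeeping.
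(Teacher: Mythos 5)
You eventually converge to the paper's construction
\[
N = \bigl(L \rhd \set{\$} \rhd \timedwords{\Gamma}\bigr) \cup \bigl(\timedwords{\Sigma} \rhd \set{\$} \rhd M\bigr),
\]
after correctly diagnosing why your first attempt (with the two components swapped) fails: the component $\timedwords{\Sigma}\rhd\set{\$}\rhd\timedwords{\Gamma}$ would flood every left quotient with all of $\timedwords{\Gamma}$, making $M$ invisible. With the corrected construction, your easy direction ($L$ universal $\Rightarrow N = \timedwords{\Sigma}\rhd\set{\$}\rhd\timedwords{\Gamma}$, which is timeless, hence a \kDTA{k} language) is sound, and your quotient computation is also right: for $v \notin L$ and $t$ past the last timestamp of $v$, $(v(\$,t))^{-1}N$ is exactly the $t$-shift of $M$, because the first component contributes nothing when the prefix is not in $L$.

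The genuine gap is precisely the step you yourself flag as ``the crux'': you appeal to closure of \kDTA{k} languages under left quotient and timed shift, but you do not prove it, and it is not obvious as stated. The configuration a \kDTA{k} reaches after reading $v(\$,t)$ generally has non-zero (indeed non-integer) clock values, whereas a \kDTA{k} must start from the all-zero valuation; re-encoding an arbitrary fractional offset into control states without changing the clock count or constants is exactly what needs an argument. The paper avoids needing any such general lemma by a local trick: assume w.l.o.g.~that the candidate acceptor $A$ for $N$ is greedily resetting, and choose $t$ strictly larger than $A$'s maximal constant. Then on reading $\$$ at time $t$ every clock's value already exceeds the maximal constant, so greedy resetting forces \emph{all} clocks to be reset on that transition, and the post-$\$$ configuration has all clocks at zero. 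Moving $A$'s initial location to the location reached there yields a bona fide \kDTA{k} recognising $M$ (invariance cancels the $t$-shift), contradicting the choice of $M$. This large-$t$ / all-clocks-reset manoeuvre is the missing ingredient that makes your contrapositive direction go through without invoking an unproved quotient-closure lemma; the rest of your sketch, including the $\kmDTA{k}{m}$ analogue, then does become bookkeeping.
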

%
%
%
We immediately obtain \cref{thm:hardness} as a corollary of \cref{lem:easy-undecidability},
thanks to the following observations.
First, the lemma is applicable by taking as $\mathcal Y$
the classes of languages recognised by \kNTA 1
since this class contains all timeless timed languages,
is closed under union and composition, and
is not included in \kDTA{k} for any $k$ nor in \kmDTA k m for any $k,m$
(c.f.~the \kNTA{1} language from \cref{example:L1}
which is not recognised by any \DTA).
Second, \HyperAckermann-hardness of the universality problem for \kNTA{1} follows form the
same lower bound for the reachability problem in lossy channel systems~\cite[Theorem 5.5]{CS08},
together with the reduction from this problem to universality of \kNTA{1} given in~\cite[Theorem 4.1]{LasotaWalukiewicz:ATA:ACM08}.

Since the universality problem is undecidable for \kNTA{2} \cite[Theorem 5.2]{AD94}
and \kNTAe{1} (\kNTA 1 with epsilon steps) \cite[Theorem 5.3]{LasotaWalukiewicz:ATA:ACM08},
using the same reasoning we can apply \cref{lem:easy-undecidability}
to observe that the \kDTA k and \kmDTA k m membership problems are undecidable for \kNTA{2} and \kNTAe{1},
which refines the analysis of \cite[Theorem 1]{Finkel:FORMATS:2006}.
%


\section{Conclusions}

We have shown decidability and undecidability results
for several variants of the deterministic membership problem for timed automata.
Regarding undecidability, we have extended the previously known results
\cite{Finkel:FORMATS:2006,Tripakis:IPL:2006}
by proving that the \DTA membership problem is undecidable already for \kNTA 1 (\cref{thm:undecidability}),
and, over a unary input alphabet, it is undecidable for \kmNTA 1 m with $m \geq 6$ (\cref{remark:compression}).
We leave open the question of what is the minimal $m$ guaranteeing undecidability.
Regarding decidability, we have shown that when the resources available to the deterministic automaton are fixed
(either just the number of clocks $k$, or both clocks $k$ and maximal constant $m$),
then the respective deterministic membership problem is decidable (\cref{thm:kDTA:memb})
and \HyperAckermann-hard (\cref{thm:hardness}).

Our deterministic membership algorithm is based on a characterisation of \kNTA 1 languages
which happen to be \kDTA k (\cref{thm:k-DTA-char}),
which is proved using a semantic approach leveraging on notions from the theory of sets with atoms \cite{BL12}.
Analogous decidability results for register automata can be obtained with similar techniques.
It would be interesting to compare this approach
to the syntactic determinisation method of \cite{BBBB:ICALP:2009}.

Finally, our decidability results extend to the slightly more expressive class of always resetting \kNTA 2,
which have intermediate expressive power strictly between \kNTA 1 and \kNTA 2.

\newpage

\bibliography{bib}

\appendix


\section{Proofs for \cref{sec:inv}}

\invarianceTrans*
\begin{proof}
  Suppose $c = (p, \mu, \now) \goesto {a,t} (p', \mu', t) = c'$ due to some transition rule of $A$
  whose clock constraint $\varphi$ compares values of clocks
  $\x$, i.e., the differences $t - \mu(\x)$, to integers.
  Since a timed automorphism $\pi$ preserves integer distances, the same clock constraint is satisfied
  in $\pi(c) = (p, \pi{\circ}{\mu}, \pi(\now))$, and therefore the same transition rule is applicable yielding the transition
  $(p, \pi{\circ}{\mu}, \pi(\now)) \goesto {a,\pi(t)} (p, \pi{\circ}{\mu'}, \pi(t)) = \pi(c')$.
\end{proof}

\invarianceLang*
\begin{proof}
  This is a direct consequence of the invariance of semantics.
  Indeed, for every $(\clockval c\cup\set{\now})$-timed permutation $\pi$ the configurations $c = (p, \mu, \now)$ and
  $\pi(c) = (p, \pi{\circ}\mu, \pi(\now))$ are equal,
  hence their languages $\langtsa A {c}$ and
  $\langtsa A {\pi(c)}$, the latter equal to $\pi(\langtsa A c)$ by Fact~\ref{fact:equivariant:lang}, are equal too.
  Thus, $L = \pi(L)$.
  Finally, if $A$ is always resetting, then $\now \in\clockval c$,
  from which the second claim follows.
\end{proof}

\factEquivLang*
\begin{proof}
Consider a timed permutation $\pi$ and an accepting run of $A$ over a timed word 
$w = (a_1, t_1)  \dots (a_n, t_n) \in \timedwordsafter \Sigma \now$ starting in $c = (p, \mu, \now)$:
\begin{align*} 
           (p, \mu, \now) 
     \goesto {a_1, t_1} \,
     \cdots
                    \goesto{a_n, t_n} \,
                        (q, \nu, t_n),
\end{align*}
After $a_i$ is read, the value of each clock is either the difference $t_i - \mu(\x)$ for some $1\leq i \leq n$ and clock $\x\in\X$, 
or the difference $t_i - t_j$ for some $1 \leq j \leq i$.
Likewise is the difference of values of any two clocks.
Thus clock constraints of transition rules used in the run compare these differences to integers.
As timed automorphism $\pi$ preserves integer differences, by executing the same sequence of transition rules we obtain
the run over $\pi(w)$ starting in $\pi(c) = (p, \pi{\circ}\mu, \pi(\now))$: 
\[
           (p, \pi{\circ}\mu, \pi(\now)) 
   \goesto {a_1, \pi(t_1)} \,
   \cdots
                    \goesto{a_n, \pi(t_n)} \,
                        (q, \pi{\circ}\nu, \pi(t_n)),
\]
also accepting as it ends in the same location $q$.
As $w\in\timedwords \Sigma$ can be chosen arbitrarily, we have thus proved one of inclusions, namely 
\[
\pi(\langtsa A {p, \mu, \now}) \ \subseteq \ \langtsa A {p, \pi{\circ}\mu, \pi(\now)}.
\]
The other inclusion follows from the latter one applied to $\pi^{-1}$ and $\langtsa A {p, \pi{\circ}\mu, \pi(\now)}$:
\[
\pi^{-1}(\langtsa A {p, \pi{\circ}\mu, \pi(\now)}) \ \subseteq  \ \langtsa A {p, \pi^{-1}{\circ}\pi{\circ}\mu, \pi^{-1}(\pi(\now))}
\ = \ \langtsa A {p, \mu, \now}.
\]
The two implications prove the equality.
\end{proof}

\lemLeastSup*
\begin{proof}
Let $L$ be an $S$- and $S'$-invariant timed language, and let 
%
$F = \fract{S}$ and $F' = \fract{S'}$.
Towards proving that $L$ is an $(F \cap F')$-invariant subset of $\timedwords \Sigma$, 
consider two timed words $w, w' \in \timedwords \Sigma$ such that $w' = \pi(w)$ for some $(F\cap F')$-timed 
automorphism $\pi$.
We need to show that $w \in L$ iff $w' \in L$, which follows immediately by the following claim:
\begin{claim}
Every $(F\cap F')$-timed automorphism $\pi$ decomposes into $\pi = \pi_n \circ \dots \circ \pi_1$, where
each $\pi_i$ is either $F$- or $F'$-timed automorphism.
\end{claim}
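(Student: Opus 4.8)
The plan is to recast the claim in group-theoretic terms and prove it by induction on $|F\setminus F'|+|F'\setminus F|$. Write $\Pi_F,\Pi_{F'},\Pi_H$ for the sets of $F$-, $F'$-, respectively $H$-timed automorphisms, where $H=F\cap F'$; each is a group under composition (closed under inverse), and since $\Pi_F,\Pi_{F'}\subseteq\Pi_H$ with $\Pi_H$ a group, the subgroup $\langle\Pi_F\cup\Pi_{F'}\rangle$ generated by $\Pi_F\cup\Pi_{F'}$ is contained in $\Pi_H$. Because $\Pi_F$ and $\Pi_{F'}$ are themselves closed under inverse, being an element of $\langle\Pi_F\cup\Pi_{F'}\rangle$ is the same as being a finite composition $\pi_n\circ\cdots\circ\pi_1$ with each $\pi_i\in\Pi_F\cup\Pi_{F'}$. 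Hence the claim is exactly the reverse inclusion $\Pi_H\subseteq\langle\Pi_F\cup\Pi_{F'}\rangle$, which is what I would prove.

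The heart of the argument is a transitivity statement that I would isolate first. Fix $g\in F'\setminus F$ (the case $g\in F\setminus F'$ being symmetric), and let $I$ be the open interval between two consecutive points of $H+\Z=\{h+z : h\in H,\,z\in\Z\}$ that contains $g$ (take $I=\Rreal$ if $H=\emptyset$). I claim the orbit of $g$ under $\langle\Pi_F\cup\Pi_{F'}\rangle$ is all of $I$. One inclusion is immediate since every automorphism in the group fixes $H+\Z$ pointwise and thus maps $I$ onto itself. For the converse I show the orbit is nonempty, open, and closed in the connected set $I$. The crucial observation is that every point $q\in I$ lies in the interior of some open interval of $\Rreal\setminus(F+\Z)$ or of some open interval of $\Rreal\setminus(F'+\Z)$: the only way $q$ could fail this is to lie in both $F+\Z$ and $F'+\Z$, but $(F+\Z)\cap(F'+\Z)=H+\Z$ is disjoint from $I$. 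Now if $q=\tau(g)$ is in the orbit and $q$ lies in the interior of an interval $J$ of $\Rreal\setminus(F+\Z)$ (the $F'$ case is identical), then composing $\tau$ with timed automorphisms that fix $F+\Z$ and act as the identity outside $J+\Z$ moves $q$ to an arbitrary point of $J$, so a whole neighbourhood of $q$ in $I$ lies in the orbit (openness), and the same device turns any limit of orbit points back into an orbit point (closedness). Being nonempty and clopen in the connected set $I$, the orbit equals $I$.

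With the transitivity statement available, the induction is routine. If $|F\setminus F'|+|F'\setminus F|=0$ then $F=F'=H$, so $\Pi_H=\Pi_F$ and there is nothing to prove. Otherwise pick $g$ in one of $F'\setminus F$ or $F\setminus F'$, say $g\in F'\setminus F$. Given $\pi\in\Pi_H$, the transitivity statement yields $\tau\in\langle\Pi_F\cup\Pi_{F'}\rangle$ with $\tau(g)=\pi(g)$; then $\pi':=\tau^{-1}\circ\pi$ lies in $\Pi_H$ and moreover fixes $g$ (hence $g+\Z$), so $\pi'\in\Pi_{H\cup\{g\}}$. Since $g\in F'$ we have $(F\cup\{g\})\cap F'=H\cup\{g\}$, and $\bigl|(F\cup\{g\})\setminus F'\bigr|+\bigl|F'\setminus(F\cup\{g\})\bigr|=|F\setminus F'|+|F'\setminus F|-1$, so the induction hypothesis applied to the pair $(F\cup\{g\},F')$ gives $\pi'\in\langle\Pi_{F\cup\{g\}}\cup\Pi_{F'}\rangle\subseteq\langle\Pi_F\cup\Pi_{F'}\rangle$, using $\Pi_{F\cup\{g\}}\subseteq\Pi_F$. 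Therefore $\pi=\tau\circ\pi'\in\langle\Pi_F\cup\Pi_{F'}\rangle$, which completes the induction and hence the claim.

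The main obstacle will be the transitivity statement, and inside it the clopen argument: the delicate point is that one must always be able to ``see'' a target point through a sub-interval that avoids either the $F$-lattice or the $F'$-lattice, and this is precisely where the hypothesis involving $F\cap F'$ (rather than a larger common set) is used. Everything else — the reduction of the original claim to $\Pi_H\subseteq\langle\Pi_F\cup\Pi_{F'}\rangle$ and the index bookkeeping in the induction step — is straightforward.
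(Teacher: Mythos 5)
Your proof is correct, and it takes a genuinely different route from the paper's. The paper does not argue from scratch: it invokes an external result (Theorem~10.3 of Bojańczyk--Klin--Lasota) to reduce the claim to the special case where both $F\setminus F'$ and $F'\setminus F$ are singletons, and then disposes of that special case by an explicit two-case construction of the automorphisms $\pi_1,\dots,\pi_n$, split on whether $F\cap F'$ is empty or not. You instead recast the claim as the group-theoretic inclusion $\Pi_H\subseteq\langle\Pi_F\cup\Pi_{F'}\rangle$, prove a transitivity lemma (the orbit of $g\in F'\setminus F$ under the generated group is the whole gap $I$ of $H+\Z$) by a nonempty-clopen-connected argument, and then do a clean self-contained induction on $|F\setminus F'|+|F'\setminus F|$, peeling off one generator at a time. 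Your reduction from the general claim to a ``move one point'' statement is thus internal to the proof rather than delegated to a citation, and your one-point step is topological rather than by explicit case-by-case construction. Both arguments hinge on the same geometric fact — that any target point in $I$ can be reached by sliding inside sub-intervals that avoid either the $F$-lattice or the $F'$-lattice, since $(F+\Z)\cap(F'+\Z)=H+\Z$ is disjoint from $I$ — but you isolate it as a connectedness argument while the paper constructs the intermediate automorphisms by hand. One small remark: the identity $(F+\Z)\cap(F'+\Z)=H+\Z$ that your proof leans on, both in the transitivity lemma and implicitly at each induction step, uses that $F,F'\subseteq[0,1)$ (i.e.\ each is fraction-independent); this holds here because $F=\fract{S}$ and $F'=\fract{S'}$, and is preserved when you replace $F$ by $F\cup\set{g}$ with $g\in F'\subseteq[0,1)$, but it is worth stating explicitly since without it the crucial observation would fail.
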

Indeed, due to $F$- and $F'$-invariance of $L$, we have $w \in L$ iff $w' \in L$ as required.

As it has been proved in~\cite{BKL12fulllics}, instead of dealing with decomposition of $\pi$,
it is sufficient to analyse the individual orbit of $F - F'$, in the special case when 
both $F - F'$ and $F' - F$ are singleton sets.
The proof of Theorem 10.3 in~\cite{BKL12fulllics} may be repeated here to prove 
that the last claim above is implied by the following one:
\begin{claim}
Let $F, F' \subseteq [0, 1)$ be finite sets \st $F - F' = \set{t}$ and $F' - F = \set{t'}$.
For every $(F\cap F')$-timed automorphism $\pi$ we have $\pi(t) = (\pi_n \circ \dots \circ \pi_1)(t)$, 
for some $\pi_1, \dots, \pi_n$, each of which is either $F$- or $F'$-timed automorphism.
\end{claim}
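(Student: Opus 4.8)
The plan is to recast the claim as a reachability statement. Write $S_0 := F \cap F'$, so that $F = S_0 \cup \{t\}$ and $F' = S_0 \cup \{t'\}$ with $t \ne t'$ (both lie in $[0,1)$). I would let $R \subseteq \Rreal$ be the smallest set containing $t$ that is closed under taking $F$-orbits and $F'$-orbits of its elements, i.e.\ $v \in R$ implies $\orb{F}{v} \subseteq R$ and $\orb{F'}{v} \subseteq R$. Building $R$ as the increasing union $R = \bigcup_n R_n$ with $R_0 = \{t\}$ and $R_{n+1} = R_n \cup \bigcup_{v \in R_n}\bigl(\orb{F}{v} \cup \orb{F'}{v}\bigr)$, a trivial induction shows that every $w \in R$ equals $(\pi_n \circ \cdots \circ \pi_1)(t)$ for some $F$- or $F'$-timed automorphisms $\pi_1, \dots, \pi_n$. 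Since $\pi$ is in particular an $S_0$-timed automorphism, $\pi(t)$ lies in $I := \orb{S_0}{t}$; so the claim follows once I establish $R = I$.

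Here $I$ is, by the Fact quoted above, an open interval (all of $\Rreal$ if $S_0 = \emptyset$) whose endpoints lie in $S_0 + \Z$ and which is disjoint from $S_0 + \Z$. The inclusion $R \subseteq I$ is immediate, since $\Pi_F, \Pi_{F'} \subseteq \Pi_{S_0}$ and $I$ is a single $S_0$-orbit, hence $\Pi_{S_0}$-invariant. For $R \supseteq I$ I would first show that $R$ is a nonempty open interval. Openness: for $w \in R$ at least one of $\orb{F}{w}, \orb{F'}{w}$ is a genuine open interval containing $w$ and contained in $R$ — indeed $\orb{F}{w}$ reduces to $\{w\}$ only when $w \in F + \Z$, $\orb{F'}{w}$ only when $w \in F' + \Z$, and $(F + \Z) \cap (F' + \Z) = S_0 + \Z$ is disjoint from $I \supseteq R$. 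Connectedness: each $R_n$ is connected, being a union of connected sets (intervals and points) each meeting the connected set $R_{n-1}$, hence so is $R$. Thus $R = (\alpha, \beta)$ with $a \le \alpha < t < \beta \le b$, writing $(a, b) = I$.

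The crux is to rule out $\alpha > a$ (the bound $\beta < b$ is symmetric). Assuming $\alpha > a$, the value $\alpha$ is finite; I would pick $v_n \in R$ with $v_n \to \alpha^+$ and $v_n \notin (F \cup F') + \Z$ (possible since the latter set is discrete), so that $\orb{F}{v_n}$ and $\orb{F'}{v_n}$ are open intervals contained in $R = (\alpha, \beta)$ and containing $v_n$. Their left endpoints are therefore $\ge \alpha$ but tend to $\alpha$, and since these endpoints lie in the discrete sets $F + \Z$, resp.\ $F' + \Z$, they are eventually equal to $\alpha$ — forcing $\alpha \in (F + \Z) \cap (F' + \Z) = S_0 + \Z$. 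This contradicts $\alpha \in (a, t) \subseteq I$, since $I$ contains no point of $S_0 + \Z$. Hence $\alpha = a$, symmetrically $\beta = b$, so $R = I \ni \pi(t)$, as required.

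I expect the endpoint step to be the main obstacle, and the point that needs care is the description of $F$-orbits and $F'$-orbits as the connected components of $\Rreal \setminus (F + \Z)$, resp.\ $\Rreal \setminus (F' + \Z)$, together with the observation that the two integer grids $\{t\} + \Z$ and $\{t'\} + \Z$ can meet only inside $S_0 + \Z$ — and hence never inside $I$ — which is precisely what allows an alternating composition of $F$- and $F'$-automorphisms to drag $t$ across the whole interval $I$.
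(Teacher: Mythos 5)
Your argument is correct, and it takes a genuinely different route from the paper's. The paper's proof is constructive and case-split: when $F\cap F'\neq\emptyset$, the target $(F\cap F')$-orbit of $t$ is an open interval of length at most one, and the paper exhibits an $F'$-automorphism followed by an $F$-automorphism that already reach $\pi(t)$; when $F\cap F'=\emptyset$, the orbit is all of $\Rreal$, and the paper iterates a push-forward (an $F'$-automorphism carrying the current point close to the next grid point of $\set{t'}+\Z$, then an $F$-automorphism jumping past it) until $\pi(t)$ is reached, without exhibiting a closed-form bound. You instead recast the claim as the identity $R = I$, where $R$ is the smallest subset of $\Rreal$ containing $t$ and closed under $F$- and $F'$-orbits, and $I = \orb{S_0}{t}$; the heavy lifting is topological rather than constructive — $R$ is a nonempty open subinterval of $I$, and a finite endpoint of $R$ interior to $I$ would be a simultaneous limit of left endpoints of $F$-orbits and of $F'$-orbits, hence would lie in the discrete set $(F+\Z)\cap(F'+\Z)=S_0+\Z$, which $I$ excludes. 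This dispenses with the case split ($S_0=\emptyset$ is handled uniformly since then $S_0+\Z=\emptyset$) and with the explicit iteration, at the cost of being non-constructive: it neither exhibits the $\pi_i$ nor bounds $n$. Both proofs ultimately hinge on the same arithmetic fact, namely that $F+\Z$ and $F'+\Z$ meet only inside $S_0+\Z$, hence never inside $I$; the paper exploits it to cross barriers one at a time, you exploit it to rule out an interior endpoint.
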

The proof of the claim is split into two cases.

\para{Case $F\cap F' \neq \emptyset$}
Let $l$ be the greatest element of $F\cap F'$ smaller than $t$, and let $h$ be the smallest element of $F\cap F'$ 
greater than $t$, assuming they both exist. 
(If $l$ does not exist put $l := h'-1$, where $h'$ is the greatest element of $F\cap F'$; 
symmetrically, if $h$ does not exists put $h := l'+1$, where $l'$ is the smallest element of $F \cap F'$.)
Then the $(F\cap F')$-orbit $\setof{\pi(t)}{\pi \text{ is a } (F\cap F')\text{-timed automorphism}}$
is the open interval  $(l,h)$.
Take any $(\F\cap F')$-timed automorphism $\pi$; without loss of generality assume that $u=\pi(t)>t$. 
The only interesting case is $t < t' \leq u$.  
In this case, we show $\pi_2(\pi_1(t))$,where
\begin{itemize}
\item $\pi_1$ is some $F'$-timed automorphism that acts as identity on $[t',l+1]$ and \st $t < \pi_1(t) < t'$,
\item $\pi_2$ is some $F$-timed automorphism that acts as identity on $[h-1, t]$ and \st $\pi_2(\pi_1(t))=u$.
\end{itemize}

\para{Case $F\cap F' = \emptyset$}
Thus $F = \set{t}$ and $F' = \set{t'}$.
Take any timed automorphism $\pi$; without loss of generality assume that $\pi(t)>t$.
Let $z\in\Z$ be the unique integer \st $t' + z -1 < t < t' + z$.
%
Let $\pi_1$ be an arbitrary $\set{t'}$-timed automorphism that maps $t$ to some $t_1 \in (t,t'+z)$.
Note that $t_1$ may be any value in $(t, t'+z)$. 
Similarly, let $\pi_2$ be an arbitrary $\set{t}$-timed automorphism that maps $t_1$ to some $t_2 \in (t', t+1)$.
Again, $t_2$ may be any value in $(t', t+1)$.
By repeating this process sufficiently many times one finally reaches $\pi(t)$ as required.
\end{proof}

\section{Proofs for \cref{sec:upperbound}}

\claimEquivY*
\begin{proof} 
Let $i$ range over $\set{1, 2}$ and let  $\widetilde X_i := \succe {a,t_i} {X_i}$. Thus $S'_i$ is  the least subset of 
$S_i\cup\set{t_i}$ containing $t_i$ such that $\langtsa A {\widetilde X_i}$ is $S'_i$-invariant,
and $X'_i = \closure {S'_i} {\widetilde X_i}$.
By invariance of $\semd A$ (\cref{fact:equivariant:trans}) and invariance of semantics
(Fact~\ref{fact:equivariant:lang}) we get
\[
\pi(\widetilde X_1) = \widetilde X_2,
\qquad \text{ and } \qquad \pi(\langtsa A {\widetilde X_1}) = \langtsa A {\widetilde X_2},
\]
and therefore $\pi(S'_1) = S'_2$, which implies $\pi(X'_1) = X'_2$.
\end{proof}

\claimEquivZ*
\begin{proof}
Let $i$ range over $\set{1, 2}$. Let
$S_i = \mu_i(\X)$ and $(X_i, S_i) \goesto {a,t_i} (X'_i, S'_i)$ in $\YY$. By Claim~\ref{claim:equivY} we have
\[
\pi(X'_1) = X'_2 \qquad \text{ and } \pi(S'_1) = S'_2.
\]
Since $\pi{\circ}{\mu_1} = \mu_2$ and the definition~\eqref{eq:mu} is invariant:
\[
\pi{\circ}(\mu') = (\pi{\circ}{\mu})',
\]
we derive $\pi{\circ}{\mu'_1} = \mu'_2$.
\end{proof}

\section{Proofs for \cref{sec:lowerbound}}

\lemReduction*
\begin{proof}
    For the ``only if'' direction, if $\reachset M$ is finite
    then there is some $k$ s.t.~every reachable configuration $u$ has size $u(c_1)+u(c_2)+u(c_3)+u(c_4)+1 \leq k$,
    and thus the set of reversals of accepting runs can be recognised by a \kDTA {(k+1)},
    and thus also its complement can be recognised by a $(k+1)$-\DTA.
    
    For the ``if'' direction, if $\reachset M$ is infinite,
    then there exist reachable configurations with arbitrarily large counter values.
    Suppose, towards reaching contradiction, that $\lang A$ is recognised by a \kDTA k.
    Thus also its complement, that is the set of reversal-encodings of runs of $M$, is recognised by some \kDTA k $B$.
    There exists a run $\pi$ of $M$ where some counter value exceeds $k$,
    and thus when $B$ reads the reversal-encoding of $\pi$ it must forget some timestamp (say) $\tuple {c_1, t}$
    in some configuration $p_{i+1} \delta_{i+1} u_{i+1}$.
    Since $t$ is forgotten, we can perturb its corresponding $\tuple {c_1, t+1}$ in $p_i \delta_i u_i$
    to any value $\tuple {c_1, t'}$ \st $t' - t \neq 1$
    and obtain a new word still accepted by $A$,
    but which is no longer the reversal-encoding of a run of $M$,
    thus reaching the sought contradiction.
\end{proof}

\lemEasyUndec*
\begin{proof}
	We consider \kDTA k membership (the \kmDTA k m membership is treated similarly). 
	 Consider some fixed timed language $M \in \mathcal Y$ which is not recognised by any \kDTA{k}
	(relying on the assumption 3), over an alphabet $\Gamma$.
%
%
	For a given timed language $L \in {\mathcal Y}$, over an alphabet $\Sigma$, 
	we construct
	the following language over the extended alphabet $\Sigma \cup \Gamma \cup \{\$\}$:
	%
	\begin{align*}
		N \; := \; L \rhd \{\$\} \rhd \timedwords{\Gamma} \;\cup\; \timedwords{\Sigma} \rhd \{\$\} \rhd M \ \subseteq \ \timedwords{\Sigma \cup \Gamma \cup \{\$\}},
	\end{align*}
	where $\$ \not\in \Sigma \cup \Gamma$ is a fixed fresh alphabet symbol. 
	Since $\mathcal Y$ contains all timeless timed languages due to the assumption 1, and is closed under
	union and composition due to the assumption 2, the language $N$ belongs to $\mathcal Y$.
	
	\begin{claim*}
		$L=\timedwords{\Sigma}$  if, and only if, $N$ is recognised by a \kDTA{k}.
	\end{claim*}

	For the ``only if'' direction, if $L = \timedwords{\Sigma}$ 
	then clearly $N = \timedwords{\Sigma} \cdot \{\$\} \cdot \timedwords{\Gamma}$.
	Thus  $N$ is timeless and in consequence $N$ is recognised by a \kDTA{k}, as \kDTA{k} recognise all timeless
	timed languages for any $k\geq 0$.
	
	For the ``if'' direction suppose, towards reaching a contradiction,
	that $N$ is recognised by a \kDTA{k} $A$
	but $L\neq\timedwords{\Sigma}$.
	Assume, w.l.o.g., that $A$ is greedily resetting.
	Choose an arbitrary timed word $w = (a_1, t_1) \dots (a_n, t_n) \not\in L$ over $\Sigma$.
	Therefore, for any extension $v = (a_1, t_1) \dots (a_n, t_n) (\$, t_n + t)$ of $w$ by one letter, 
	we have
	\[  v^{-1} N = t + M = \setof{(b_1, t+u_1)\dots(b_m, t+u_m)}{(b_1, u_1)\dots(b_m, u_m) \in M}. \] 
	Choose $t$ larger than the largest absolute value $m$ of constants appearing in clock constraints in $A$, and
	let $(p, \mu)$ be the configuration reached by $A$ after reading $v$. 
	As $t > m$, all the clocks are reset by the last transition and hence $\mu(\x) = 0$ for all clocks $\x$.
	Consequently, if the initial control location of $A$ were moved to the location $p$, the so modified \kDTA{k} $A'$ would accept 
	the language $M$.
	But this contradicts our initial assumption that $M$ is not recognised by a \kDTA{k},
	thus finishing the proof.
\end{proof}

\end{document}